\documentclass{easychair}

\usepackage{amssymb}
\usepackage[capitalize]{cleveref}
\usepackage{totcount}
\usepackage{annotate-equations}
\usepackage{refcount}
\usepackage{mathrsfs}
\usepackage{newfile}
\usepackage{bm}
\usepackage{algorithmic}
\usepackage{graphicx}
\usepackage{wrapfig}
\usepackage{textcomp}
\usepackage{xcolor}
\usepackage[textsize=small,disable]{todonotes}

%% \BibTeX command to typeset BibTeX logo in the docs

% \newcommand{\center}{\mathsf{center}}

\newcommand{\centr}{\mathsf{center}}
\newcommand{\lexeq}{\leq_\mathsf{lex}}
\newcommand{\lexlt}{<_\mathsf{lex}}
\newcommand{\lexlteq}{\lexeq}
% commmands from lower bound

\newcommand{\step}{\mathsf{step}}

% Numbers
\newcommand{\nn}{\mathbb{N}}
\newcommand{\zn}{\mathbb{Z}}
\newcommand{\qn}{\mathbb{Q}}
\newcommand{\qnz}{\mathbb{Q}_{+}}

% Words

% Continuous
\newcommand{\ourboldfont}[1]{\bm{#1}}

\newcommand{\bx}{\ourboldfont{x}}
\newcommand{\by}{\ourboldfont{y}}
\newcommand{\bc}{\ourboldfont{c}}

\newcommand{\bu}{\ourboldfont{u}}
\newcommand{\bv}{\ourboldfont{v}}
\newcommand{\bw}{\ourboldfont{w}}
\newcommand{\act}{\xrightarrow}

\newcommand{\cconf}[1]{%
    \mathchoice{\left\langle #1 \right\rangle}%
        {\langle #1 \rangle}%
        {\langle #1 \rangle}%
        {\langle #1 \rangle}%
}

\newcommand{\NP}{\mathsf{NP}}
\newcommand{\NEXP}{\mathsf{NEXP}}
\newcommand{\PSPACE}{\mathsf{PSPACE}}

% trackers

% Lifting

\newcommand{\extra}{\mathit{extra}}

% Language classes
\newcommand{\cC}{\mathcal{C}}

%\DeclareMathOperator{\exp}{exp}

% Comments

\newcommand{\weight}[1]{W(#1)}
\newcommand{\fl}[1]{\mathsf{flat}(#1)}

\newcommand{\myparagraph}[1]{\vspace{0.2cm}\noindent\textbf{#1.} }

\newcommand\popleasychair[2]{#2}

% Define theorem-like environments
\newtheorem{theorem}{Theorem}[section]
\newtheorem{remark}[theorem]{Remark}
\newtheorem{property}[theorem]{Property}
\newtheorem{example}[theorem]{Example}
\newtheorem{corollary}[theorem]{Corollary}
\newtheorem{proposition}[theorem]{Proposition}
\newtheorem{lemma}[theorem]{Lemma}
\newtheorem{definition}[theorem]{Definition}

\crefname{example}{Example}{Examples}
\Crefname{example}{Example}{Examples}
\crefname{property}{Property}{Properties}
\Crefname{property}{Property}{Properties}
\crefname{appsec}{Supplementary Material}{Supplementary Material}
\Crefname{appsec}{Supplementary Material}{Supplementary Material}

\title{General Decidability Results for Systems with Continuous Counters}

% Authors are joined by \and. Their affiliations are given by \inst, which indexes
% into the list defined using \institute
\author{
A. R. Balasubramanian\inst{1}\thanks{A part of the work was done when the author was at TUM, Germany}
\and
Matthew Hague\inst{2}
\and
Rupak Majumdar\inst{1}
\and
Ramanathan S. Thinniyam\inst{3}
\and
Georg Zetzsche\inst{1}
}

% Institutes for affiliations are also joined by \and
\institute{
  Max Planck Institute for Software Systems (MPI-SWS),
  Kaiserslautern, Germany\\
  \email{bayikudi@mpi-sws.org, rupak@mpi-sws.org, georg@mpi-sws.org}
\and
   Royal Holloway, University of London,
   Egham, UK\\
   \email{Matthew.Hague@rhul.ac.uk}
\and
   Uppsala University,
   Uppsala, Sweden\\
   \email{ramanathan.s.thinniyam@it.uu.se}
}

% Running heads (mandatory for easychair)
\authorrunning{Balasubramanian, Hague, Majumdar, Thinniyam and Zetzsche}
\titlerunning{Decidability Results for Continuous Counters}

\begin{document}

\maketitle

\begin{abstract}
Counters that hold natural numbers are ubiquitous in modeling and verifying software systems; for example, they model dynamic
creation and use of resources in concurrent programs.
Unfortunately, such discrete counters often lead to extremely high complexity.
Continuous counters are an efficient over-approximation of discrete counters. They are obtained by relaxing the original counters to hold values over the non-negative \emph{rational} numbers.

This work shows that continuous counters are extraordinarily
well-behaved in terms of decidability. Our main result is that, despite
continuous counters being infinite-state, the language of sequences of counter
instructions that can arrive in a given target configuration, is
regular. Moreover, a finite automaton for this language can be computed effectively. This implies that a wide variety of transition systems can be equipped with
continuous counters, while maintaining decidability of reachability properties.
Examples include higher-order recursion schemes, well-structured transition systems, and decidable
extensions of discrete counter systems.

We also prove a non-elementary lower bound for the size of the resulting finite automaton.
\end{abstract}

\section{Introduction}\label{sec:introduction}
%!TEX root = ./main.tex

Counters are ubiquitous in modeling and verifying software systems.
They model dynamic creation and use of resources in concurrent systems. 
For example, a fundamental type of infinite-state system with counters is a \emph{vector addition system} (VAS). 
These consist of finitely many counters that can be incremented and decremented in a coordinated way.
VAS, and related models such as Petri nets, have a long record of modeling resources in concurrent systems,
including manufacturing systems and discrete control \cite{PNDEDS}, business workflows \cite{PNBP}, hardware design \cite{PNHw},
and biological pathways \cite{PNinBio}.
In concurrent programming, they can model and analyze dynamic thread creation \cite{german1992reasoning,ABQ09,DBLP:conf/icalp/BaumannMTZ20,ganty2012algorithmic},
numerical data types \cite{hagueModelCheckingRecursive2011}, network broadcasts \cite{DBLP:conf/lics/EsparzaFM99}, and population protocols \cite{DBLP:conf/concur/EsparzaGLM15}.

%They can model dynamic spawning of processes that are finite-state~\cite{german1992reasoning}, recursive programs~\cite{ABQ09,DBLP:conf/icalp/BaumannMTZ20,DBLP:journals/pacmpl/BaumannMTZ21,ganty2012algorithmic}, or even more general programs~\cite{DBLP:conf/icalp/0001GMTZ23}. Such counters can also be used to model numerical data types~\cite{hagueModelCheckingRecursive2011}. Further applications include broadcast protocols~\cite{DBLP:conf/lics/EsparzaFM99} and population protocols~\cite{DBLP:conf/concur/EsparzaGLM15}, to name a few.

However, analyzing systems with discrete counters encounters two crucial challenges.
First, when the size of the counters is very large, e.g., modeling packets in a network or individuals in a large population,
individual discrete interactions are too fine-grained.
In such situations, one would like a ``fluid limit'' that captures the limit behaviors as counters go to infinity.
Second, the complexity of reachability for the simplest model, vector addition systems (VAS), which have
natural-number counters,
is already Ackermann-complete~\cite{DBLP:conf/focs/CzerwinskiO21,DBLP:conf/focs/Leroux21,DBLP:conf/lics/LerouxS19}.
As soon as we combine such counters with more expressive
infinite-state data structures, algorithmic tools become extremely rare.
Such combinations are natural in concurrency theory: for example, combining pushdown systems with counters (pushdown VAS) lets us model systems with both recursion and concurrency,
but decidability of reachability remained a longstanding open problem until a very recent breakthrough \cite{PVASS}.
Reachability becomes undecidable very quickly when we extend systems (that themselves have decidable reachability) even with very weak counters. 
For example, lossy channel systems (LCS for short; which model unreliable communication channels) and 
higher-order pushdown automata (HOPA; a standard model for programs with higher-order recursion) are
well-known to have decidable reachability, but equipping them even with
very weak kinds of counters makes reachability undecidable (\cite[Prop.~7]{zetzsche2015arxiv} and \cite[Thm.~4]{kobayashi2019inclusion} for HOPA, and \cite[p.~7]{DBLP:conf/netys/Aiswarya20} for LCS).

\emph{Continuous counters} \cite{DavidAlla} provide a fluid relaxation of counter machines. 
In a continuous counter machine, any transition that adds a vector $\bx\in\zn^d$ to the counters,
can be executed with any \emph{non-zero fraction}: we are allowed to
pick a rational $0<\alpha\le 1$ and add
$\alpha\cdot \bx$ to the counters.
As a result, the counters may thus assume values in $\qnz$, the non-negative rationals.
This means that the possible behaviors of the relaxation are an overapproximation of the original machine.
This idea of relaxing discrete counters to continuous ones
goes back to David and Alla, who introduced continuous semantics in the context of 
studying fluid limits of Petri nets already in the 1980s~\cite{david1987continuous}.

Just as linear programming is a relaxation of integer linear programming with better algorithmic
complexity, continuous relaxations of counter machines have much better
complexity in many cases.
For example, reachability in Petri nets with continuous semantics can be decided in
polynomial time~\cite{DBLP:journals/fuin/FracaH15}; reachability in finite systems with continuous counters  
is $\NP$-complete \cite{blondinLogicsContinuousReachability2017};
reachability in pushdown continuous counter systems is $\NEXP$-complete~\cite{DBLP:journals/pacmpl/BalasubramanianMTZ24},
and there is an almost complete description
of the complexity of systems with affine continuous counters~\cite{DBLP:conf/lics/Balasubramanian24}.

\todo{Bala: Start of new material for applications}

The continuous relaxation has found applications in many fields, such as modeling client-server systems~\cite{mahulea2006performance}, chemical reaction networks~\cite{jordon2025} and biological networks~\cite{HK26}. Also, continuous Petri-nets are well-studied in biochemical settings, with a variety of semantics and interpretations~\cite{HGD08,HH18}.

In certain cases, continuous counters not only overapproximate the original system, but even allow an \textit{exact representation of system behavior}. For example,
we consider leaderless rendez-vous protocols, which are a model of distributed systems composed of arbitrarily many identical finite-state agents interacting in pairs. 
These interactions can be modeled using a (discrete) counter system where each of the counters is used to keep track of the number of agents in each state.
It turns out that important questions such as reachability and coverability for such protocols can be solved \textit{exactly} by using the continuous relaxation~\cite{DBLP:conf/fossacs/Balasubramanian21}. 
Furthermore, even more involved analysis problems such as the cut-off problem can be solved by combining the continuous relaxation along with the so-called integer relaxation~\cite{DBLP:conf/fossacs/Balasubramanian21}.
All of these algorithms run in polynomial time, which highlights the importance of the continuous relaxation as an important tool in algorithmic verification. 

Finally, continuous relaxations have proven to be useful for the problem of coverability in Petri nets. In~\cite{DBLP:conf/cav/EsparzaLMMN14}, continuous relaxations were used in an SMT approach to solve (i.e. prove safety of) 94 of 115 instances of Petri net benchmarks. 
Maintaining integer counters only helped prove 2 additional instances (with higher runtimes). In contrast, coverability tools only scaled to about 60 instances. 
The continuous relaxation has also been used as an optimization in
coverability algorithms for VAS~\cite{blondinApproachingCoverabilityProblem2016}.
Thus, better analysis tools for continuous counters will yield better algorithms for discrete counter systems in practice.

\todo{Bala: End of new material for applications}

% One way to adress the first challenge is to overapproximate the counters'
% behaviors.  Here, we allow executions that would otherwise be invalid, with the
% goal of lowering the complexity of verification tasks. Examples include integer
% VASS~\cite{HalfonHaase} (where one allows negative counter values) or
% bidirected VASS~\cite{FIXME} (where every transition can also be taken
% backwards). An overapproximation that has received significant attention in
% recent years is that of \emph{continuous counters}.

Given these positive results, it is natural to ask how far one can push the decidability frontier
for continuous counters: Is reachability decidable if we extend well-known
infinite-state models with continuous counters?
This is an important concern in modeling and analysis, because we 
often take fluid limits for certain components while maintaining the discrete structure for other components.

Our main results show that indeed, continuous counters have extraordinarily good
decidability properties.
In a nutshell, decidable models remain decidable when continuous counters are added, as long as the models are closed under taking products with finite-state systems.

\subsection{Outline of Main Results}
To make our results precise, we need some terminology.
In a continuous counter machine, a transition that adds a vector $\bx\in\zn^d$ can be scaled by any rational $0 < \alpha \leq 1$, that is $\alpha\cdot\bx$ is added.
We refer to this as \emph{fractional firing}.
A system that has access to $d$ continuous counters has transitions which are labeled with vectors from a
finite set $\Sigma\subset \zn^d$, which represent the set of counter instructions.
If a sequence $w\in\Sigma^*$ can lead from counter configuration $\bx\in\qnz^d$
to $\by\in\qnz^d$ via fractional firing\todo{fractional firing not defined yet}\todo{Bala: Haven't we mentioned how this is done in paragraph 2 of page 2?}\todo{MH: not really -- i've added a couple of sentences to this paragraph}, then we write $\bx\xrightarrow{w}\by$.

Our main result is that for any dimension $d$, any given finite $\Sigma\subset \zn^d$ and
any $\bx,\by\in\qnz^d$, the set of all
words $w\in\Sigma^*$ with $\bx\xrightarrow{w}\by$ is an \emph{effectively
regular language}.

This is very surprising: The set of configurations attainable between $\bx$ and $\by$ is infinite, and involves arbitrarily large numbers. Thus, regularity strongly contrasts with almost all kinds of infinite-state systems studied so far: In almost all known types of infinite-state systems, the set of transition sequences between two configurations is in general non-regular. This holds for any type of discrete counters that can be incremented and decremented (such as classical VAS, but also the integer, bidirected, and reversal-bounded variants of VAS, etc.)\footnote{This is because applying transitions in discrete counter systems is forward- and backward-deterministic: In a forward- and backward-deterministic $\Sigma$-labeled transition system where between two configurations $c_1$ and $c_2$, infinitely many configurations can be visited, the set of $w\in\Sigma^*$ leading from $c_1$ to $c_2$ must be non-regular (a simple consequence of the Myhill-Nerode theorem). Similarly arguments apply to almost any type of infinite-state system.}, but also for pushdown automata\footnote{This is not to be confused with the fact that in a pushdown automaton, the set of reachable \emph{stack contents} is regular~\cite{Pushdown}: The set of transition sequences, e.g.\ from empty stack to empty stack can be the Dyck language, which is not regular.} (higher-order or not), lossy-channel systems. 
To our knowledge, the only exception to this is timed
automata~\cite{alur1994theory} (see, e.g.\ \cite{alur2004decision} for the
finite-word setting)\footnote{However, note that a substantial difference between timed automata and CVAS is that once a clock in a timed automaton surpasses all constants occurring in guards, its value becomes irrelevant. In a CVAS, when we reach any value $x$ we need at least $x$ decrement steps to come back to zero. This makes regularity in CVAS rather unexpected.}. 
%However, note that continuous counters can even attain
%\emph{arbitrarily large} values between two given configurations $\bx$ and
%$\by$ (since continuous counters can be decremented), which is not the case in
%timed automata. \bala{The difference is not immediately clear to me. Even in timed automata, the clocks can get arbitrarily large, right?}

Our construction yields an \emph{Ackermannian upper bound} on the size of an
NFA recognising the language. We also prove a \emph{non-elementary lower bound}
on the size of these NFAs. (This already indicates that our regularity proof
requires substantially different insights than the exponential construction for
timed automata~\cite{alur1994theory,alur2004decision}; and in fact, the proof
uses an entirely different approach.)

\subsection{Implications of the Main Result}
Our main result implies a very general decidability result for infinite-state
systems extended with continuous counters: For any class of infinite-state
systems that is closed under taking products with finite-state systems,
reachability is decidable if and only if it is decidable when equipped with
additional continuous counters.
Thus, it follows that, e.g., lossy channel systems and higher-order pushdown automata (even with
the collapse operation~\cite{DBLP:journals/tocl/HagueMOS17}), when extended with continuous counters, still have decidable
reachability.
% In contrast, extending these models with discrete counters---even
% reversal-bounded ones---lead to undecidability \cite{FIXME}.

Our result also applies to the \popleasychair{very}{} general class of well-structured transition
systems (WSTS)~\cite{DBLP:journals/iandc/AbdullaCJT00,finkel2001well}.
Here, the state space is ordered by a well-quasi
ordering (WQO) $\le$ that enjoys some compatibility with the transition
relation.  Under mild assumptions, many questions are decidable for
WSTS~\cite{DBLP:journals/iandc/AbdullaCJT00,finkel2001well}.
An example is the \emph{coverability problem}: Given a
configuration $\bc$ of a WSTS, can we reach a configuration $\bc'$ with
$\bc\le\bc'$? In this case, $\bc$ is said to be \emph{coverable}. Our result implies that
given such a WSTS whose transitions carry labels from some finite $\Sigma\subset\zn^d$,
one can even decide whether $\bc$ is coverable by a run that, working on $d$
continuous counters, reaches a given $\by\in\qnz^d$.
%\rst{This sounds like we allow each transition of the WSTS to be labelled by a particular element of $\Sigma$. Is this what we want to say?}
This is because, by our
result, the $d$ continuous counters can be simulated by finitely many states,
and WSTS are well-known to be closed under taking products with finite-state
systems~\cite{finkel2001well}.

Our general decidability result starkly contrasts with discrete counters.
There, many infinite-state models (e.g.\ LCS or HOPA) have undecidable reachability when extended with
weak kinds of counters (see references above).
Until very recently, the only known situation where
combining (discrete) counters with another type of data structure retains decidability is
pushdown systems with reversal-bounded counters~\cite{hagueModelCheckingRecursive2011}.
The recent breakthrough on PVASS reachability generalizes the result to all counters \cite{PVASS}.

Our result also makes the vast collection of algorithmic tools for
regular languages
(see, e.g.\ \cite{finautHandbook,autratHandbook,regexpHandbook,varietiesHandbook,profiniteHandbook}) available to the
languages of continuous counters.
This includes not only
combinatorial methods working with finite automata~\cite{finautHandbook} and regular expressions~\cite{autratHandbook,regexpHandbook}, but also algebraic tools working with finite
semigroups~\cite{varietiesHandbook}, and topological
approaches using topological semigroups~\cite{profiniteHandbook}.

\section{Preliminaries and Main Results}\label{sec:main-results}
%!TEX root = ./main.tex
In this section, we recall the basic definitions of our central object of study, namely, continuous vector addition system (CVAS for short). Then we formally state our main results.

\subsection{Preliminaries}

Throughout the paper, we use $\nn$, $\zn$, $\qn$, and $\qnz$ to denote the natural numbers, integers, rational numbers, and non-negative rational numbers, respectively.

A $d$-dimensional \emph{Continuous Vector Addition System} ($d$-CVAS or simply CVAS) is a finite set $\Sigma\subset \zn^d$ of vectors called \emph{transitions}. The operational semantics of a CVAS is given by means of its \emph{configurations}, which we define below.

A \emph{configuration} of $\Sigma$ is a tuple $\bx \in \qnz^d$, which intuitively denotes the current value of each of the $d$ continuous counters. 
Let $t \in \Sigma$ be a transition and let $\alpha \in (0,1]$ be a rational number.
A \emph{step} from a configuration $\bx$ to $\by$ by means of the pair $(\alpha, t)$ (denoted $\bx \act{\alpha t} \by)$ is possible iff $\by = \bx + \alpha t$. In this case, $\alpha$ is called the \emph{firing fraction} of this step.
Since each configuration must be in $\qnz^d$, a step is only possible if no component of the configuration becomes negative.
If $\bx\act{\alpha t}\by$ for some $\alpha\in(0,1]$, then we also write $\bx\act{t}\by$.
%\rst{replaced $\tran$ by $t$ for uniformity because we use nonbold for letters throughout the rest of the paper. May be we should mention that $t$ is a vector but since it is a letter, we use nonbold?}

A \emph{run} of $\Sigma$ is a finite sequence of steps $\bx_0 \act{\alpha_1 t_1} \bx_1 \act{\alpha_2 t_2} \ldots \act{\alpha_n t_n} \bx_n$. Here, the sequence $\alpha_1 t_1, \ldots, \alpha_n t_n$ is called a \emph{firing sequence}.
% Here, the sequence $\alpha_1 \tran_1 \ldots \alpha_n \tran_n$ is called a \emph{firing sequence}.
% We say $\bx_n$ is \emph{reachable} from $\bx_0$ (written $\bx_0 \act{\alpha_1 \tran_1, \alpha_2 \tran_2, \dots, \alpha_n \tran_n} \bx_n$ or $\bx_0 \act{*} \bx_n$).
If such a firing sequence exists between $\bx_0$ and $\bx_n$, we say that $\bx_n$ is \emph{reachable} from $\bx_0$ (written $\bx_0 \act{\alpha_1 t_1, \alpha_2 t_2, \dots, \alpha_n t_n} \bx_n$ or $\bx_0 \act{*} \bx_n$).

For a word $w\in\Sigma^*$ and configurations $\bx,\by\in\qnz^d$, we write $\bx\xrightarrow{w}\by$ if $w=t_1\cdots t_n$ and there exist rational fractions $\alpha_1,\ldots,\alpha_n\in(0,1]$ such that $\bx \xrightarrow{\alpha_1 t_1,\ldots,\alpha_n t_n}\by$. We also write $\bx \xrightarrow{\alpha w}\by$ where $\alpha=\alpha_1,\alpha_2,\ldots,\alpha_n$ is the sequence of firing fractions. It will always be clear from the context whether $\alpha$ is a single firing fraction or a sequence of firing fractions. In the sequel, we often view $\Sigma$ as an alphabet and write \emph{letters} to mean transitions.

Finally, for a CVAS $\Sigma\subset\zn^d$ and configurations $\bx,\by\in\qnz^d$, we define its transition language as
\[ L_{\Sigma}^{\bx,\by}:=\{w\in\Sigma^* \mid \bx\act{w}\by \}\]

\begin{example}\label{ex:running-example}
	% \todo{BA: Added an example. We now have a ``running example'' that crops up in three sections.}
	Let $\Sigma = \{a,b,c\}$ with $a = (1,0,0)$, $b = (-1,1,0)$ and $c = (0,-1,1)$. 
	Let $\bx = \cconf{0,0,0}$ and $\by = \cconf{0,1/4,1/4}$. It is easy to see that $abbc \in L_\Sigma^{\bx,\by}$ because of the following run:
	\begin{equation*}
		\cconf{0,0,0} \act{\frac{1}{2}a} \cconf{\frac{1}{2},0,0} 
		\act{\frac{1}{4}b} 
		\cconf{\frac{1}{4},\frac{1}{4},0} 
		\act{\frac{1}{4}b} 
		\cconf{0,\frac{1}{2},0} 
		\act{\frac{1}{4}c} 
		\cconf{0,\frac{1}{4},\frac{1}{4}}
	\end{equation*}
	
	On the other hand, note that $bbc \notin L_\Sigma^{\bx,\by}$. Indeed, suppose there exist fractions $\alpha_1 > 0, \alpha_2 > 0, \alpha_3 > 0$ such that $\bx \act{\alpha_1 b, \ \alpha_2 b,\  \alpha_3 c} \by$. Note that the first counter of $\bx$ is zero and also that $\alpha_1 b$ decrements the first counter by $\alpha_1$. Hence, the first counter becomes negative after firing $\alpha_1 b$ from $\bx$, which is a contradiction to the requirement that configurations contain only non-negative values.	
%	then $\by = \bx + (\alpha_1 + \alpha_2) b + \alpha_3 c$. However, the first component of $\bx$ and $\alpha_3 c$ is zero and the first component of $(\alpha_1 + \alpha_2) b$ is strictly negative, which contradicts the fact that the first component of $\by$ is zero. Hence, $bbc \notin L_\Sigma^{\bx,\by}$.

	% \bala{There is a simpler explanation for this as well: $b$ is negative on the first component and $\bx$ is zero on the first component and so no run can exist starting with $b$ without the first component going below zero. However, this is also the reasoning above first-appearance records that we introduce later on and so I chose to have this alternate explanation here.}
 %    \mh{Perhaps include the simpler explanation first -- i found it quite uncomfortable reading the more complex explanation since i was expecting the simpler one.}
 %    \bala{Done.}
\end{example}

Now that we have all the necessary definitions, we now move on to stating our main result.

\subsection{Main Result - Effective Regularity}
The main result of this paper is the following
\begin{theorem}
\label{th:main}
For every CVAS $\Sigma\subset\zn^d$ and configurations $\bx,\by\in\qnz^d$, the
language $L_{\Sigma}^{\bx,\by}$ is regular. Moreover, given $\Sigma, \bx$ and $\by$, one can effectively compute an NFA for it.
\end{theorem}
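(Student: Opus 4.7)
My plan is to construct an NFA over a finite abstraction of CVAS configurations, obtained by distilling each configuration down to a finite invariant that determines its remaining behaviour toward $\by$. The first, crucial observation is that whether a transition $t\in\Sigma$ can be fractionally fired from a configuration $\bx'$ depends only on $\mathrm{supp}(\bx')$: with the freedom to pick $\alpha$ arbitrarily small, $t$ is firable iff $\mathrm{supp}(t^-)\subseteq\mathrm{supp}(\bx')$. Moreover, the support after the step is $(\mathrm{supp}(\bx')\cup\mathrm{supp}(t^+))\setminus K$, where $K\subseteq\mathrm{supp}(\bx')\cap\mathrm{supp}(t^-)$ is the set of counters zeroed out by a calibrated $\alpha$. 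Building on this, I would first construct a coarse ``support automaton'' $\mathcal{A}_{\mathrm{sup}}$ over states $2^{\{1,\ldots,d\}}$ whose runs correspond to admissible support sequences. This NFA captures every word that is fractionally firable from \emph{some} configuration of the given support, but does not enforce exact reachability of $\by$.

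To lift the support automaton to a recognizer for $L_\Sigma^{\bx,\by}$, I would refine each state with a finite invariant $C$ that records enough information about the accumulated fractional contribution $\sum\alpha_j t_j$ to pin down whether $\by-\bx$ can be hit exactly by the end of the run. I would build $C$ by induction on the dimension $d$: for $d=0$ the claim is vacuous; for larger $d$, project onto a single coordinate, apply the induction hypothesis to the remaining $(d-1)$-dimensional CVAS, and compose via a product construction that tracks how the projected counter oscillates between its zero- and target-thresholds. The overall NFA has states $(S,C)$ with initial state $(\mathrm{supp}(\bx),C_0)$ and accepting state $(\mathrm{supp}(\by),C_{\by})$.

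Correctness would then follow from a Myhill--Nerode argument: show that the equivalence $u\sim v \iff (\forall w\colon uw\in L_\Sigma^{\bx,\by} \Leftrightarrow vw\in L_\Sigma^{\bx,\by})$ has index at most the number of abstract states $(S,C)$, with $u\sim v$ determined by the $(S,C)$-type of the polytope of configurations reachable along $u$ from $\bx$. Effectiveness of the NFA would be immediate from the inductive construction.

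The main obstacle is identifying the invariant $C$. Support alone is too coarse, because two configurations with identical support can behave very differently toward the exact target (compare $\bx'=\by$ with $\bx'$ far from $\by$ but sharing its support). Tracking magnitudes verbatim is infinite. The right invariant must distill magnitudes into a finite, order-theoretic skeleton that is preserved under fractional firings and captures how close we are to each relevant threshold. The paper's promised Ackermannian upper bound strongly suggests that $C$ has a hierarchical structure in which each level of the recursion contributes a WQO-style layer, consistent with an induction on dimension where each step stacks new progress invariants on top of those from lower dimensions.
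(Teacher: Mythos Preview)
Your support observation is correct and is a useful starting point, but the proposal has a genuine gap: you never identify the invariant $C$, and you explicitly flag this as ``the main obstacle.'' Without $C$ there is no proof. The sketch of how $C$ would arise---by induction on the dimension $d$, projecting out one coordinate at a time---does not work as stated: a CVAS step fires a single vector with a \emph{single} fraction $\alpha$ affecting all coordinates at once, so the coordinates are tightly coupled and cannot be decoupled by projection. (Even your description of the support update already reflects this coupling: the set $K$ of counters zeroed by a step is not freely choosable among subsets of $\mathrm{supp}(\bx')\cap\mathrm{supp}(t^-)$, since one $\alpha$ can zero only those $i$ sharing the minimal ratio $\bx'[i]/|t[i]|$.) A Myhill--Nerode argument of the shape you propose would have to exhibit, uniformly in $\bx,\by$, a finite partition of $\Sigma^*$ refining the Nerode equivalence; the paper's non-elementary lower bound on NFA size shows that any such partition must already be non-elementarily large for fixed small $d$, which is hard to reconcile with a clean per-dimension induction yielding a concrete $C$.

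The paper's proof takes an entirely different route and does \emph{not} abstract configurations into a finite state space. It works purely word-combinatorially. The central objects are \emph{gatherings}---sets of words with a prescribed order of first appearances and of last appearances of letters, with all first appearances preceding all last appearances---and \emph{path-schemes} built from them. The key lemma (the Lifting Runs Theorem) says: inside any gathering that meets $L_\Sigma^{\bx,\by}$, one can compute a witness word $w$ such that \emph{every} word obtained by padding its centre (in the subword sense) still lies in $L_\Sigma^{\bx,\by}$; this gives an infinite regular subset of $L_\Sigma^{\bx,\by}$. The algorithm then alternates two decomposition steps---one splitting an arbitrary path-scheme into finitely many gatherings (\cref{path-scheme-decomposition}), and one peeling off the regular subset just found and covering the remainder by strictly smaller path-schemes (\cref{perfect-path-scheme-decomposition})---with termination via a lexicographic descent on weight vectors in $\nn^{|\Sigma|}$. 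The Ackermannian bound comes from this descent (indexed by $|\Sigma|$, not by $d$), not from any hierarchy of per-dimension invariants.
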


We now discuss some implications of this result. Theorem~\ref{th:main} implies
that for any class of labeled transition systems (LTS) that (i)~has decidable
reachability and (ii)~is closed under taking products with finite-state
systems, the following problem is decidable: Given an LTS $S$ over an alphabet
$\Sigma\subset\zn^d$, CVAS configurations $\bx,\by\in\qnz^d$, and states $s,t$
of $S$, is there a run of $S$ from $s$ to $t$, such that it can take the CVAS
configuration $\bx$ into $\by$?  Indeed, a decision procedure for this problem
is as follows: First, construct the NFA for $L_\Sigma^{\bx,\by}$ using
Theorem~\ref{th:main}, then take the product of this NFA with $S$ using (ii), and
then finally check reachability of this product system using (i).

Let us formalize this result in language-theoretic terms.  We say that a class
$\cC$ of languages has \emph{decidable regular intersection} if given
$L\subseteq\Sigma^*$ from $\cC$ and a regular language $R\subseteq\Sigma^*$, it
is decidable whether $L\cap R=\emptyset$. This is the case, e.g., if $\cC$ is
the class of languages of a class of LTS with the above properties (i) and
(ii).
\begin{corollary}\label{intersection-decidable}
Suppose $\cC$ is a class of languages with decidable regular intersection.
Then, given a language $L\subseteq\Sigma^*$ in $\cC$ over some
$\Sigma\subset\zn^d$, and configurations $\bx,\by\in\qnz^d$, we can decide
whether $L$ intersects $L_{\Sigma}^{\bx,\by}$.
\end{corollary}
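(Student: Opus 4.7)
The plan is essentially to combine Theorem~\ref{th:main} with the hypothesis on $\cC$ in a one-shot reduction; there is really no obstacle, since the corollary is designed to be an immediate consequence of the main result together with the definition of ``decidable regular intersection.''

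First, I would invoke Theorem~\ref{th:main} on the inputs $\Sigma$, $\bx$, $\by$ to effectively construct an NFA $\mathcal{A}$ recognising the transition language $L_{\Sigma}^{\bx,\by}\subseteq\Sigma^*$. This step produces, from the CVAS data alone, a finite automaton describing exactly those transition sequences that take the continuous counters from $\bx$ to $\by$; crucially, the theorem promises that the construction is effective, so we obtain $\mathcal{A}$ in finite time from $(\Sigma,\bx,\by)$.

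Next, I would observe that $L(\mathcal{A})$ is a regular subset of $\Sigma^*$ and $L\subseteq\Sigma^*$ lies in $\cC$, over the same alphabet $\Sigma$ (viewing the finite set $\Sigma\subset\zn^d$ as an abstract alphabet of transition labels). By the assumption that $\cC$ has decidable regular intersection, there is an algorithm that, given $L\in\cC$ and any regular $R\subseteq\Sigma^*$, decides whether $L\cap R=\emptyset$. Applying this algorithm with $R:=L(\mathcal{A})=L_{\Sigma}^{\bx,\by}$ yields a decision procedure for whether $L\cap L_{\Sigma}^{\bx,\by}=\emptyset$, which is exactly the question asked.

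The only point worth checking carefully is the interface between the two black boxes: the NFA from Theorem~\ref{th:main} must be presentable in the input format expected by the ``regular intersection'' oracle of $\cC$. Since classes $\cC$ with decidable regular intersection are, by convention, closed under taking intersections with any effectively given regular language (the oracle can work from an NFA, a DFA, or a regular expression interchangeably, as these are all mutually effectively convertible), this is not a real issue. Thus the corollary follows with no additional work beyond composing the two effective procedures.
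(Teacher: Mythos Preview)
Your proposal is correct and matches the paper's own argument essentially verbatim: construct an NFA for $L_{\Sigma}^{\bx,\by}$ via \cref{th:main} and then invoke the decidable-regular-intersection hypothesis on $\cC$ with this regular language. The paper in fact sketches exactly this two-step procedure in the paragraph immediately preceding the corollary, so there is nothing to add.
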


\Cref{intersection-decidable} applies to a wide range of (languages of) infinite-state systems.
For example, it applies
to the large class of well-structured transition systems (WSTS)~\cite{DBLP:journals/iandc/AbdullaCJT00,finkel2001well}.
They can be viewed as language-accepting devices~\cite{DBLP:journals/acta/GeeraertsRB07} (by
considering all labels of runs that arrive in some upward closed set of
configurations), and since they are closed under taking products with
finite-state systems (which are well-structured), their corresponding language
class is closed under intersection with regular languages. In particular, this
means \emph{reachability is decidable in lossy channel-systems equipped with
continuous counters}.

Another application concerns higher-order pushdown automata~\cite{damm1986automata,Maslov1974,Maslov1976} (even with
the collapse operation~\cite{DBLP:journals/tocl/HagueMOS17}). Their language class is closed under regular
intersection, since their finitely many control states admit a product
construction with a finite automaton. Moreover, their emptiness problem is
decidable: With collapse, they are equivalent to higher-order recursion schemes (HORS)~\cite[Thm.\ 4.1]{DBLP:journals/tocl/HagueMOS17}, and for HORS, even monadic second-order logic is decidable~\cite[Thm.\ 16]{DBLP:conf/lics/Ong15}. Thus, \cref{intersection-decidable} implies that \emph{in higher-order
pushdown automata (even with collapse) equipped with
continuous counters, reachability is decidable}. This is in
contrast to discrete counters: It is known that already for second-order
pushdown automata (a special case of higher-order pushdown automata), even
adding very simple kinds of discrete counters leads to an undecidable
reachability problem (see \cite[Prop.~7]{zetzsche2015arxiv} and \cite[Thm.~4]{kobayashi2019inclusion}).

\subsubsection*{Continuous Petri nets and CVASS}

Closely related to CVAS, there are two other formalisms that overapproximate
discrete counter systems with continuous semantics: \emph{continuous Petri
nets} (CPN) and \emph{continuous vector addition systems with states} (CVASS).
The latter are almost the same as CVAS, except that the configurations also
feature a control-state, and each transition can only be taken in a particular
control-state (and updates the control-state). From \cref{th:main}, it follows
immediately that \emph{languages of CVASS are regular as well}, since the
language of a CVASS is just the language of a CVAS, intersected with a regular
language.

CPN are slightly different: Each transition is not just a vector in
$\zn^d$, but \emph{two} vectors $\mathsf{pre},\mathsf{post}\in\nn^d$. It is
applied by picking a fraction $\alpha\in(0,1]$, then first subtracting
$\alpha\cdot\mathsf{pre}$ and then adding $\alpha\cdot\mathsf{post}$.
Crucially, after subtracting $\alpha\cdot\mathsf{pre}$, the configuration
vector must be non-negative. While in the discrete setting, Petri nets are
easily translated into VASS, it is not obvious that CPN can be translated into
CVASS. However, Blondin and Haase~\cite[proof of
Prop.~4.1]{blondinLogicsContinuousReachability2017} provide a
language-preserving translation from CPN to CVASS. Thus, \emph{languages of
continuous Petri nets are regular as well}. 
%However, note that given a CPN with
%$m$ transitions in dimension $d$, Blondin and Haase's construction yields a
%CVASS with $m\cdot d$ counters. This means, starting from fixed-dimension CPN
%will not result in fixed-dimension CVAS and thus will not benefit from a
%primitive-recursive upper bound that is available for fixed-dimension CVAS(S).

\subsection{A Non-Elementary Lower Bound}
We now make a remark on the complexity of computing an NFA for $L_\Sigma^{\bx,\by}$. 
As we shall show in~\cref{sec:key-concepts}, the construction from Theorem~\ref{th:main}
only yields an Ackermannian upper bound for the NFA for $L_{\Sigma}^{\bx,\by}$. Usually, one would expect that if all languages of a class of systems is regular, then there should be an elementary upper bound (just as, e.g.\ in the region construction for timed automata~\cite{alur1994theory}). However, as our second result, we show that this is not the case: There is a non-elementary lower bound for the size of NFAs for $L_{\Sigma}^{\bx,\by}$.
More precisely, let $\exp_h\colon\nn\to\nn$ be the $h$-fold exponentiation function, i.e.\ $\exp_0(n)=n$ and $\exp_{h+1}(n)=2^{\exp_h(n)}$.
We define the \emph{size} of a configuration $\bx\in\qnz^d$ to be the largest absolute value of a numerator or denominator occurring in some component in $\bx$. Hence, a configuration $\bx\in\qnz^d$ of size $s$ can be represented using at most $O(\log(s)\cdot d)$ bits. 
We then show that
%Then, the tower function is defined as $\TOWER(n)=\exp_{n-1}(n)$.
\begin{theorem}\label{thm:non-elem}
	For each $h \ge 2$, there is a $5h$-dimensional CVAS $\Sigma_h$ of size
	$O(h)$ with the following property: For each $n$, there are configurations $\bx_n,\by_n$ of size
	at most $n$, such that any NFA for $L_{\Sigma_h}^{\bx_n,\by_n}$ requires
	at least $\exp_h(n)$ states.
\end{theorem}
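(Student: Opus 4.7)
The plan is a fooling-set (Myhill--Nerode) lower bound: I will construct, for each pair $h,n$, a family of $N=\exp_h(n)$ word pairs $(u_i,v_i)$ over $\Sigma_h$ such that $u_iv_i\in L_{\Sigma_h}^{\bx_n,\by_n}$ for every $i<N$, but $u_iv_j\notin L_{\Sigma_h}^{\bx_n,\by_n}$ whenever $i\neq j$. A standard fooling-set lemma then forces any NFA for $L_{\Sigma_h}^{\bx_n,\by_n}$ to have at least $N$ states.

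First, I would design $\Sigma_h$ by stacking $h$ structurally identical \emph{counter gadgets}, each occupying $5$ of the $5h$ dimensions, with all transition vectors of magnitude $O(h)$ and independent of $n$. The size-$n$ ingredient is pushed entirely into the initial configuration $\bx_n$. Each gadget is engineered so that, along any successful run, its five counters visit only a rigid discrete set of canonical configurations, whose cardinality equals the ``capacity'' supplied by the gadget directly below it. Level $1$ reads its capacity $\exp_1(n)=2^n$ directly from the size-$n$ rationals in $\bx_n$, and inductively the capacity at level $k+1$ is $\exp_{k+1}(n)$, because each macro-step at level $k+1$ consumes one full traversal of level $k$.

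Next, I would index the fooling pairs by the $N=\exp_h(n)$ mixed-radix tuples $(i_1,\ldots,i_h)$ compatible with these level capacities. The prefix $u_i$ drives each level to its $i_k$-th canonical state by nested loops of block-local transitions, with firing fractions ultimately dictated by $\bx_n$; the suffix $v_i$ is the reverse pattern that drains every gadget to the configuration expected by $\by_n$. Verifying $u_iv_i\in L_{\Sigma_h}^{\bx_n,\by_n}$ is a direct exhibition of firing fractions. For the separation, observe that the configurations reached after $u_i$ and $u_j$ must differ in the state of at least one gadget; since the drain pattern of $v_j$ is calibrated to $u_j$, running $v_j$ after $u_i$ forces some counter in the mismatching gadget strictly negative at an intermediate step, regardless of how firing fractions are chosen.

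The main obstacle is a \emph{rigidity lemma} for each gadget: no clever choice of continuous firing fractions may ``smear'' the canonical states and thereby collapse distinct prefixes $u_i,u_j$ into behaviourally equivalent ones. Continuous firing is particularly dangerous here because any $\alpha\in(0,1]$ is allowed, so one might hope to simulate being in canonical state $j$ while actually in state $i$ by using smaller fractions. I would preclude this by equipping each gadget with pairwise invariants of the form $c+\bar c\equiv\mathrm{const}$ (enforced by paired increment/decrement vectors) together with drain-to-zero requirements at the end of each macro-step; together with non-negativity, these force the firing fractions on specific transition groups to aggregate to prescribed rationals on every accepting run, yielding discrete-like macro behaviour. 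The rigidity lemma is then established by induction on $h$, after which the fooling-pair separation and hence the $\exp_h(n)$ lower bound follow immediately.
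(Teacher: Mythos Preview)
Your high-level architecture—$h$ nested gadgets with complement counters that force discrete macro-behaviour despite continuous fractions—matches the paper's construction in spirit. However, the paper extracts the lower bound differently and more simply: rather than a fooling set indexed by mixed-radix tuples, it exhibits a \emph{single} word $w_1^{\ell_1}r_1\cdots w_h^{\ell_h}r_hu$ in the language with $\ell_h\ge\exp_h(n)$, and then uses a plain pumping argument on the $w_h^{\ell_h}$ block. If an NFA had fewer than $\ell_h$ states, pumping down would yield a word of the same shape with a smaller $\ell_h$; the paper then shows that among words of this specific shape with $\ell_1\le n$ and $\ell_{i+1}\le\ell_i 2^{\ell_i}$ (what it calls \emph{short} runs), only the maxed-out choice lies in the language. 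This avoids having to build and separate $\exp_h(n)$ prefix classes.

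The genuine gap in your proposal is the rigidity lemma, which you correctly identify as the obstacle but do not actually solve. Invariants $c+\bar c=\mathrm{const}$ and drain-to-zero alone do not force fractions to aggregate to prescribed values: they only bound them. The paper's key mechanism is an additional \emph{step counter} $\mathsf{step}_i$ that must reach the exact target value $4$ in $\by_n$. Each iteration of the level-$i$ loop can contribute at most $4/k$ to $\mathsf{step}_i$ (because the complement invariant caps the relevant fractions at $1/k$), so reaching $4$ in at most $k$ iterations forces exactly $k$ iterations each with the maximal fraction; that in turn forces $t_{i,1}$ to empty $x_i$ and $t_{i,4}$ to refill $y_i$ every round, and hence determines the halving behaviour on the higher-precision counters. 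Without this ``budget exactly met'' device, your rigidity claim does not go through. You should also be aware of a subtlety the paper flags explicitly: there \emph{are} runs of merely exponential length from $\bx_n$ to $\by_n$ of the same combinatorial shape (by firing smaller fractions and iterating $2^i n$ times at level $i$), so rigidity holds only under the additional shortness constraint $\ell_1\le n$, $\ell_{i+1}\le\ell_i 2^{\ell_i}$. Your fooling-set setup would have to ensure that the pairs $(u_i,v_j)$ always fall under such a constraint, which is not addressed.
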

This theorem shows that CVAS can be extremely succinct in representing regular languages. This is perhaps surprising, given that checking whether a language of a CVAS is non-empty
can be done in polynomial-time~\cite{DBLP:journals/fuin/FracaH15}. %As a consequence, CVAS allows for very succinct representation of certain regular languages whilst still having an efficient non-emptiness check.
Note that with these results, we leave a complexity gap between non-elementary
and Ackermannian. In particular, we leave open whether there is a
primitive-recursive construction of an NFA for $L_{\Sigma}^{\bx,\by}$.

\section{Examples and Behaviours of CVAS}\label{sec:intro-examples}
Before presenting our proofs, we give examples to show the behaviour of CVAS and motivate our definitions. In this section, let us consider the same CVAS
that we had discussed in~\cref{ex:running-example}. That is,  we consider the following 3-dimensional CVAS with transitions $\Sigma = \{a,b,c\}$ given by
\[
a = (1, 0, 0), \ b = (-1, 1, 0), \ c = (0, -1, 1) 
\]
Let us also set $\bx = \cconf{0,0,0}$ and $\by = \cconf{0,1/4,1/4}$. We will now observe some behaviours of this CVAS for $\bx$ and $\by$. These observations will lead to the concepts of path-schemes, bubbles, stars, and gatherings that are used in our proofs and that we shall introduce at the end of this section.

\subsection{Duplication}\label{subsec:duplication}

As a first step, we notice that we have the following run between $\bx$ and $\by$, which proves
that $abcabc \in L_\Sigma^{\bx,\by}$:
\begin{equation}\label{eq:one}
	\begin{array}{ccccccc}
		\cconf{0,0,0} 
		&\act{\frac{1}{4}a} 
		&\cconf{\frac{1}{4},0,0} 
		&\act{\frac{1}{8}b} 
		&\cconf{\frac{1}{8},\frac{1}{8},0} 
		&\act{\frac{1}{16}c} 
		&\cconf{\frac{1}{8},\frac{1}{16},\frac{1}{16}}\\
		&\act{\frac{1}{4}a} 
		&\cconf{\frac{3}{8},\frac{1}{16},\frac{1}{16}} 
		&\act{\frac{3}{8}b} 
		&\cconf{0,\frac{7}{16},\frac{1}{16}} 
		&\act{\frac{3}{16}c} 
		&\cconf{0,\frac{1}{4},\frac{1}{4}}\
	\end{array}	
\end{equation}

One of the important properties of CVAS runs is that it allows immediate duplication of transitions. 
For example, by halving the fraction of each transition in run~\ref{eq:one} and then inserting a copy of each transition next to its original, we get the following 
run over $aabbccaabbcc$:

\begin{equation*}
\cconf{0, 0, 0}
\act{
	\frac{1}{8}a
	\frac{1}{8}a
	\frac{1}{16}b
	\frac{1}{16}b
	\frac{1}{32}c
	\frac{1}{32}c
	\frac{1}{8}a
	\frac{1}{8}a
	\frac{3}{16}b
	\frac{3}{16}b
	\frac{3}{32}c
	\frac{3}{32}c
}
\cconf{0, \frac{1}{4}, \frac{1}{4}}
\end{equation*}

This shows that $aabbccaabbcc \in L_\Sigma^{\bx,\by}$. More importantly, this principle can now be applied
to $aabbccaabbcc$ itself (duplicating the second $a$ in $aa$, the second $b$ in $bb$ etc.). Using $\hat{a}$, $\hat{b}$, $\hat{c}$ to mark the new copies of $a$, $b$, and $c$, we can prove that $aa\hat{a}bb\hat{b}cc\hat{c}aa\hat{a}bb\hat{b}cc\hat{c} \in L_\Sigma^{\bx,\by}$.
Generalizing this argument allows us to prove that $aa^*bb^*cc^*aa^*bb^*cc^* \in L_\Sigma^{\bx,\by}$. In fact, whenever $w_1w_2\dots w_k \in L_\Sigma^{\bx,\by}$
for any $k$ words $w_1,w_2,\dots,w_k$, then $w_1w_1^*w_2w_2^*\dots w_k w_k^* \in L_\Sigma^{\bx,\by}$ as well.\footnote{
	This principle shows that the language of a CVAS is either empty or infinite. As a consequence, there are regular languages that are not CVAS languages.}

\subsection{Padding Runs}\label{subsec:padding}

Remarkably, we have more freedom to manipulate runs. To illustrate this, we take the run from~\ref{eq:one} over the word $abcabc$. 
We can pad this word by inserting the transitions $b,a,c$ in the middle (again marked by the hat symbol to help the reader) to get the word $abc\hat{b}\hat{a}\hat{c}abc$. We shall now modify the run from~\ref{eq:one} to get a run over $abc\hat{b}\hat{a}\hat{c}abc$, thereby showing that this new word is also in $L_\Sigma^{\bx,\by}$.

To get this new run, we need to redistribute the firing fractions of the transitions in run~\ref{eq:one}
to accommodate for their additional instances. We do this by reducing the firing fractions of the final instances of $a, b$ and $c$ and moving the slack to the inserted $bac$. 
In general, the goal is to adjust the fractions to maintain the same overall effect of each transition, but without any component going below $0$ due to the adjustments.
\begin{equation}\label{eq:two}
	\begin{array}{ccccccc}
		\cconf{0,0,0} 
		&\act{\frac{1}{4}a} 
		&\cconf{\frac{1}{4},0,0} 
		&\act{\frac{1}{8}b} 
		&\cconf{\frac{1}{8},\frac{1}{8},0} 
		&\act{\frac{1}{16}c} 
		&\cconf{\frac{1}{8},\frac{1}{16},\frac{1}{16}}\\
		&\act{\frac{1}{16}b}
		&\cconf{\frac{1}{16},\frac{1}{8},\frac{1}{16}}
		&\act{\frac{1}{8}a}
		&\cconf{\frac{3}{16},\frac{1}{8},\frac{1}{16}} 
		&\act{\frac{1}{16}c} 
		&\cconf{\frac{3}{16},\frac{1}{16},\frac{1}{8}}\\
		&\act{\frac{1}{8}a} 
		&\cconf{\frac{5}{16},\frac{1}{16},\frac{1}{8}} 
		&\act{\frac{5}{16}b} 
		&\cconf{0,\frac{3}{8},\frac{1}{8}} 
		&\act{\frac{1}{8}c} 
		&\cconf{0,\frac{1}{4},\frac{1}{4}}\
	\end{array}	
\end{equation}

Similarly, we can prove that $abc\hat{b}\hat{a}\hat{c}\hat{b}abc$ also belongs to $L_\Sigma^{\bx,\by}$.
This principle of redistributing the firing fractions is quite general and will allow us to conclude a broad result in~\cref{sec:perfect-path-scheme-decomposition} (namely~\cref{thm:lifting-runs}, the Lifting Runs Theorem) that allows us to show effective regularity.

\subsection{Limitations on Run Padding}

In the examples above, we could add new transitions, but we only added them after the first instance of each transition in the original run. More precisely, in \cref{subsec:duplication}, 
we went (using hat to mark additions) from $abcabc$ to $a\hat{a}b\hat{b}c\hat{c}a\hat{a}b\hat{b}c\hat{c}$.
Similarly, in~\cref{subsec:padding} we went from $abcabc$ to 
$abc\hat{b}\hat{a}\hat{c}abc$. In both cases, we added transitions only after they had already appeared.
This is not by accident, as the following argument shows, which limits the kinds of padding that we can perform on runs.

Suppose we consider the word $\hat{b}abcabc$. This adds a $b$ before its first appearance
in $abcabc$. It is easy to see that there \emph{cannot} be any firing sequence that
would enable this new word to start from $\cconf{0,0,0}$ and reach $\cconf{0,1/4,1/4}$. This
is because $b$ decrements the first counter and so it would be impossible to choose a positive non-zero fraction to fire $b$ from $\cconf{0,0,0}$ without making the first counter negative.
By the same argument, we can establish similar claims for $\hat{c}abcabc$ and $a\hat{c}bcabc$.

We can now see that a similar limitation also applies to the last appearance of transitions, i.e.,
we cannot always pad transitions after their last appearance in the original word.
For instance, consider the word $abcabc\hat{a}$ obtained from $abcabc$ by padding an $a$ at the end. Once again, we can see that there \emph{cannot}
be any firing sequence that would enable this new word to start from $\cconf{0,0,0}$ reach $\cconf{0,1/4,1/4}$. This is because $a$ increments the first counter and so it would be impossible
to choose a positive non-zero fraction to fire $a$ without making the first counter strictly bigger than zero.
By the same argument, we can also establish a similar claim for $abcab\hat{a}c$.

To sum up, while~\cref{subsec:duplication} and~\cref{subsec:padding} gave us quite some freedom to pad words and insert transitions, here we have seen some limits on padding, by looking at the first and last appearance of transitions. This naturally leads us to the notion of \emph{gatherings} which we discuss next.

\subsection{Gatherings and the Lifting Runs Theorem}\label{subsec:gatherings}

So far we have seen that it might be possible to adjust a run by adding additional transitions, as long as we add any new instance of a transition after its first appearance and before its last appearance in the run. This tells us that tracking the first and last appearances of a transition is important. Let us formalize this by means of \emph{first-} and \emph{last-appearance records}.

Let $w$ be a word. For each transition $a$ that appears in $w$, let $f_a \in [1,|w|]$ (resp. $\ell_a \in [1,|w|]$) be the index
of the first (resp. last) appearance of $a$ in $w$. The \emph{first-appearance} (resp.\ \emph{last-appearance}) \emph{record} of $w$ is the unique sequence of transitions $a_1,a_2,\ldots,a_n$ (resp. $b_1,b_2,\ldots,b_n$)
such that $f_{a_1} < f_{a_2} < \ldots < f_{a_n}$ (resp. $\ell_{b_1} < \ell_{b_2} < \ldots < \ell_{b_n})$.

First- and last-appearance will be used in \cref{sec:perfect-path-scheme-decomposition} to generalize the
reasoning in~\cref{subsec:padding} for showing that, under some broad conditions, if we have a run over a word $w$ between $\bx$ and $\by$, then we also have a run between $\bx$ and $\by$ over \emph{any word} $w'$ obtained by inserting transitions into $w$ whilst preserving its first- and last-appearance records (\cref{thm:gathering-lifting-runs}).

The first- and last-appearance records of a word are part of the crucial notion of gatherings: Informally, a gathering is a set of words that have a particular first- and last-appearance record and where the first appearances occur to the left of all last appearances. Formally, 
a \emph{gathering} is an expression of the form
$X_{a_1 \dots a_n}^{b_1 \ldots b_n}$
where $a_1,\ldots,a_n$ and $b_1,\ldots,b_n$ are letters such that $\{a_1,\ldots,a_n\}=\{b_1,\ldots,b_n\}$. Intuitively, $a_1, \dots, a_n$ is the order in which each transition should appear first, and $b_1, \ldots, b_n$ is the order in which each transition should appear last.
Additionally, the last instance of $b_1$ appears after the first appearance of $a_n$ in the gathering.
In full, a word $w$ over the alphabet $\{a_1,\ldots,a_n\}=\{b_1,\ldots,b_n\}$ is said to \emph{match} this gathering if (i)~all first appearances of letters are to the left of all last appearances of letters in $w$ and (ii)~it has the same first- and last-appearance records as the one specified by the gathering.  Note that each of $a_1, \ldots, a_n$ must appear in the word, and these characters may not necessarily comprise the whole of $\Sigma$. The term gathering comes from the fact that in such a word, ``everyone comes before everyone leaves''.

\begin{example}
	Suppose we have the gathering $X^{abc}_{abc}$. Then the words $abcabc, abcbcaabc$ match $X^{abc}_{abc}$,
	but the words $abcacb, abcabcacb$ do not: since the way the letters appear last in $abcacb$ and $abcabcacb$ is $a,c,b$.
	Moreover, due to condition (i), $ababc$ does not match because the last appearance of $a$ is before the first appearance of $c$.
	If we have the gathering $X^{ba}_{ab}$, then the word $abba$ matches $X^{ba}_{ab}$, 
	but $aba$ and $abab$ do not match $X^{ba}_{ab}$.
\end{example}

In \cref{thm:gathering-lifting-runs}, the Lifting Runs Theorem for Gatherings, we see that, by using these ideas, from a gathering that intersects $L_\Sigma^{\bx,\by}$ we can find an infinite regular language that is contained in $L_\Sigma^{\bx,\by}$. This allows us to make strides towards a regular representation of $L_\Sigma^{\bx,\by}$.

However, to construct a regular representation of $L_\Sigma^{\bx,\by}$, gatherings alone are too simple. As a next step, we synthesize the ideas presented so far to introduce \emph{path-schemes}. The Lifting Runs Theorem for Gatherings is generalised to path-schemes in \cref{thm:lifting-runs} (The Lifting Runs Theorem).
% \mh{We could say that condition (i) is not a true restriction since overlapping appearance records can be decomposed into a union of path schemes that respect (i).}

\subsection{Path-Schemes}\label{subsec:path-schemes}

In~\cref{subsec:gatherings}, we saw that it is important to track \emph{gatherings} in order to characterize the language of a CVAS. In~\cref{subsec:duplication} we also saw that it is sometimes necessary to track languages of the form $w_0 \Sigma_1^* w_1 \Sigma_2^* \dots \Sigma_n^* w_n$
for words $w_0,w_1,\dots,w_n$ and subsets $\Sigma_1,\Sigma_2,\dots,\Sigma_n \subseteq \Sigma$. \todo{Ram: I think there is a missing connection here about deleting letters from the $w_i^*$}
We refer to the $\Sigma_i^*$ components as \emph{stars}.
Let us now combine both of these ideas into the notion of a \emph{path-scheme}.

Let us fix a set of transitions $\Sigma$ and two configurations $\bx$ and $\by$.
A \emph{path-scheme} over $\Sigma$ defines a regular language of words and is given by an expression 
$\rho = u_0 X_1 u_1 \ldots X_n u_n$
where each $u_i \in \Sigma^*$ and each $X_i$ is a \emph{bubble}. A bubble is either,
\begin{itemize}
	\item
	a \emph{star} $X_A$ for a set of characters $A \subseteq \Sigma$, or
	\item
	a \emph{gathering}
	$X_{a_1 \ldots a_n}^{b_1 \ldots b_n}$
	where
		$\{a_1, \ldots, a_n\} = \{b_1, \ldots, b_n\} \subseteq \Sigma$ and $a_1, \ldots, a_n$ are pairwise distinct (and thus also $b_1, \ldots, b_n$).
\end{itemize}
The language of a star $L(X_A)$ is the set of words $A^*$.
The language of a gathering $L(X_{a_1 \ldots a_n}^{b_1 \ldots b_n})$ is the set of all words that match the gathering, i.e., all $w$ such that
$w = a_1 w_1 \ldots a_n w_n b_1 v_1 \ldots b_n v_n$
where for all $i$ we have $w_i \in \{a_1, \ldots, a_i\}^*$ and $v_i \in (\{a_1,\dots,a_n\} \setminus \{b_1, \ldots, b_i\})^*$.
For words $w \in L(X_{a_1 \ldots a_n}^{b_1 \ldots b_n})$, we define $\centr(w) = w_n$.
That is, $\centr(w)$ is the part of $w$ between the first occurrence of $a_n$ and the last occurrence of $b_1$ where all characters in the gathering can occur freely.
% \mh{I have shorted the description of gatherings a lot to avoid repeating the previous section}

The language  $L(\rho)$ of a path-scheme $\rho = u_0 X_1 u_1 \ldots X_n u_n$ is the set of words of the form $u_0 w_1 \ldots w_n u_n$ where $w_i \in L(X_i)$ for all $i$.
In such a case, we say that $(w_1, \ldots, w_n)$ is a $\rho$-factor of $w$.
A path-scheme $\rho$ is said to be \emph{pre-perfect} if all of its bubbles are gatherings. $\rho$ is said to be \emph{perfect} if it is pre-perfect and in addition $L(\rho)\cap L_{\Sigma}^{\bx,\by}\ne\emptyset$.\label{definition-perfect}

We will prove that the language of a CVAS can be (effectively) represented as a finite union of perfect path-schemes. This will immediately imply that the language of a CVAS is (effectively) regular. In the next section, we introduce the main ideas and give an overview of this proof.
\todo{Bala: Removed the line that we here before stating in the next section, we will discuss KLM. Instead
added this new last line saying we will talk about the main ideas of the proof in the next section.}
%The strategy to prove this result has some similarities with the KLM decomposition from the theory of vector addition systems, which we discuss in the next section.

%
%Path-schemes define regular languages of the form $u_0 X_1 u_1 \ldots X_n u_n$ where each $u_i$ is a word and each $X_i$ is a gathering or a \emph{star}.
%A star is a regular language $A^*$ for some $A \subseteq \Sigma$.
%
%As we saw above, gatherings are ``well behaved'', while a language $A^*$ can be problematic when trying to characterise runs of a CVAS\@.
%In outline, our proof with begin with the trivial over-approximation $\Sigma^*$ and attempt to refine it into a union of path-schemes that do not contain stars.

%\bala{I propose that we transfer section 4.3 here, i.e., the formal definitions of path-schemes here. This will make for a smooth flow of narrative in Section 4.}

\section{Key Concepts and Proof Overview}\label{sec:key-concepts}

\subsection{Proof Overview} 
Our NFA construction for \cref{th:main} proceeds as follows. It maintains a
list of path-schemes, which initially just consists of a single path-scheme
$X_\Sigma$, representing $\Sigma^*$. As we saw in~\cref{subsec:path-schemes}, a path-scheme $\rho$
gives rise to a set $L(\rho)\subseteq\Sigma^*$ of transition sequences. Here, it may still be the case that a path-scheme contains no word from $L_{\Sigma}^{\bx,\by}$. 

One step in our procedure is to decompose each path-scheme into finitely many
perfect path-schemes. By definition, a perfect path-scheme intersects
$L_{\Sigma}^{\bx,\by}$.  This is the \emph{first decomposition} procedure.
However, just finding perfect path-schemes using the first decomposition does not yield effective regularity:
Instead, we show in addition that each perfect path-scheme $\rho$ not only contains a
\emph{single run}, but in fact contains an infinite regular subset
$R_\rho\subseteq L(\rho)$ with $R_\rho\subseteq L_{\Sigma}^{\bx,\by}$. 

However, the set $R_\rho$ may not capture all of $L(\rho)$. Therefore, our
proof uses a \emph{second decomposition} to turn a perfect $\rho$ into finitely
many (not necessarily perfect) path-schemes $\sigma, \rho_1,\ldots,\rho_m$
where $\sigma$ captures $R_\rho$ and $\rho_1, \ldots, \rho_m$ capture the
remaining transition sequences of $\rho$. This step employs the Lifting Runs Theorem (\cref{thm:lifting-runs}) to identify $R_\rho$, from which $\sigma$ and $\rho_1, \ldots, \rho_m$ are computed.

Our construction then alternates between these two decomposition steps.
Termination is guaranteed by the fact that each decomposition step yields
path-schemes that are smaller---in some appropriate lexicographical
ordering---than the path-scheme that is decomposed.

%Here, we can guarantee $\rho_1,\ldots,\rho_m$
%are strictly smaller (in a lexicographical ordering), because they only have to
%capture the transition sequences in $L(\rho)\setminus R_\rho$.

Having described our high-level strategy, we now state the precise theorems
corresponding to the first and the second decomposition mentioned above.

\subsection{The Two Decomposition Steps}
We will describe here what our two decomposition steps
achieve. The resulting theorems
---\cref{path-scheme-decomposition,perfect-path-scheme-decomposition}---will be
stated here and proven in \cref{sec:path-scheme-decomposition} and
\cref{sec:perfect-path-scheme-decomposition}, respectively.

For the rest of this section, let us fix a CVAS with set of transitions $\Sigma$ and two configurations 
$\bx$ and $\by$. We now state the two main theorems that we need to prove our main result. In order to state these two theorems, we first need to set up a key definition. 

As mentioned before, to prove termination of our construction, we need to impose a
lexicographic ordering on path-schemes \todo{Bala: Added the next sentence for Reviewer A}. To this end, we define the \emph{weight} of a path-scheme $\weight{\rho}$ as a vector in $\nn^{|\Sigma|}$ such that the $i^{th}$ component of $\weight{\rho}$ denotes the number of bubbles in $\rho$
with exactly $i$ distinct elements in it. More formally, for each $1 \le i \le |\Sigma|$, we first let $\overline{i}$ be the vector that is $1$ in the $i$th component and $0$ elsewhere. Then, we inductively define weight as 
\begin{itemize}
	\item
	$\weight{X_A} = \overline{|A|}$.
	\item
	$\weight{X_{a_1 \ldots a_n}^{b_1 \ldots b_n}} = \overline{n}$.
	\item
	$\weight{u_0 X_1 \ldots X_n u_n} = \sum_i \weight{X_i}$.
\end{itemize}

%Intuitively, the weight of a path-scheme $\rho$ measures the amount of distinct elements that appears in each of its bubbles.

\begin{example}
	Let $\Sigma = \{a,b,c\}$. The star $X_{a,b}$ has weight $(0,1,0)$ as it contains a single bubble with 2 elements in its support.
	Similarly, the gathering $X_{a,b,c}^{b,a,c}$ has weight $(0,0,1)$.
	For the path-scheme $aX_{a,b}cX_{a,b,c}^{b,a,c}$, its weight is the sum
	of the weights of its bubbles, which is $(0,1,1)$.
\end{example}
% \mh{Shortened example explanation as the old text looked like the weight depends on the language accepted rather than the syntax.}

Now, given two path-schemes $\sigma$ and $\rho$, we say that $W(\sigma) \lexeq W(\rho)$ if $W(\sigma)$ is \emph{lexicographically smaller} than $W(\rho)$, i.e., either the $|\Sigma|^{th}$ component of $W(\sigma)$  is strictly smaller than the $|\Sigma|^{th}$ component of $W(\rho)$ or they are equal and the $(|\Sigma|-1)^{th}$ component of $W(\sigma)$ is strictly smaller than the $(|\Sigma|-1)^{th}$ component of $W(\rho)$ and so on. We say that $W(\sigma) \lexlt W(\rho)$ if $W(\sigma)$ is \emph{strictly lexicographically smaller} than $W(\rho)$, i.e.,
if $W(\sigma) \lexeq W(\rho)$ and $W(\sigma) \neq W(\rho)$.

We now introduce the two main theorems needed to prove our effective regularity result. Both of these results decompose path-schemes into ``simpler'' path-schemes
whilst still preserving certain properties regarding $L_\Sigma^{\bx,\by}$.
The first one states that any path-scheme $\rho$ can be decomposed into a finite number of simpler \emph{perfect} path-schemes
that together preserves the intersection $L(\rho) \cap L_\Sigma^{\bx,\by}$.

\begin{theorem}[Path-Scheme Decomposition]\label{path-scheme-decomposition}
	Given a path-scheme $\rho$, we can compute a finite set of perfect path-schemes $\rho_1, \ldots, \rho_m$ such that $L(\rho)\cap L_{\Sigma}^{\bx,\by} = (\bigcup_i L(\rho_i)) \cap L_{\Sigma}^{\bx,\by}$ and $\weight{\rho_i}\lexeq \weight{\rho}$ for each $i$.
\end{theorem}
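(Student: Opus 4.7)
The plan is to iteratively eliminate every star bubble from $\rho$, and then discard the resulting path-schemes that fail to intersect $L_\Sigma^{\bx,\by}$. The elimination is a local rewrite: given a star $X_A$ occurring as one of the bubbles of $\rho$, I exhibit a finite expression whose language equals $A^*$ and whose bubbles are either gatherings over $A$ (same weight as $X_A$) or stars over strict subsets of $A$ (strictly smaller weight). Substituting this expression in place of $X_A$ within $\rho$ yields a finite collection of new path-schemes whose languages together equal $L(\rho)$ and whose weights are all $\lexeq\weight{\rho}$.

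For the local rewrite, I classify every $w\in A^*$ by its support. If $\mathrm{supp}(w)\subsetneq A$ then $w\in X_{\mathrm{supp}(w)}$, a star of strictly smaller weight. Otherwise $\mathrm{supp}(w)=A$ and $w$ has well-defined first- and last-appearance records $\pi^f=(a_1,\dots,a_n)$ and $\pi^\ell=(b_1,\dots,b_n)$, each a permutation of $A$. Three sub-cases cover all such $w$: (i) if every letter of $A$ occurs at least twice and $f_{a_n}<\ell_{b_1}$, then $w$ parses directly against the opening $a_1w_1\cdots a_nw_n$ and closing $b_1v_1\cdots b_nv_n$ templates of the gathering $X_{\pi^f}^{\pi^\ell}$; (ii) if some letter $c$ appears exactly once in $w$, write $w=u\,c\,v$ with $u,v\in(A\setminus\{c\})^*$, giving $w\in X_{A\setminus\{c\}}\,c\,X_{A\setminus\{c\}}$; (iii) otherwise every letter appears at least twice but the gathering condition fails, so (taking $a=a_n$, $b=b_1$) we have $\ell_b<f_a$, and splitting $w$ at position $f_a$ produces $w\in X_{A\setminus\{a\}}\,X_{A\setminus\{b\}}$. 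Taking the finite union of $\{\epsilon\}$, all smaller-support stars $X_{A'}$ for $A'\subsetneq A$, all gatherings indexed by permutation pairs of $A$, and the finitely many sub-schemes from (ii) and (iii) therefore provides the required rewrite of $X_A$.

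Termination of the iteration is secured by the well-founded lexicographic measure $(\weight{\rho},\,s(\rho))$, where $s(\rho)$ is the number of star bubbles of $\rho$: each rewrite either strictly decreases $\weight{\rho}$ (whenever $X_A$ is replaced by something of strictly smaller weight) or preserves $\weight{\rho}$ while strictly decreasing $s(\rho)$ (when $X_A$ is replaced by a same-support gathering). Once no star remains, we have finitely many pre-perfect path-schemes $\rho'$, and a final pass discards those with $L(\rho')\cap L_\Sigma^{\bx,\by}=\emptyset$. This emptiness test is decidable because $L(\rho')$ is effectively a regular language, so the problem reduces to reachability between $\bx$ and $\by$ in the product of the CVAS with an NFA for $L(\rho')$, i.e.\ a CVASS reachability instance, which is known to be decidable (indeed in polynomial time). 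I expect the main obstacle to be a clean combinatorial proof of sub-case (i): verifying, from the separation $f_{a_n}<\ell_{b_1}$ and the assumption that every letter appears at least twice, that the $2n$ positions $f_{a_1}<\cdots<f_{a_n}<\ell_{b_1}<\cdots<\ell_{b_n}$ partition $w$ exactly into the slots prescribed by the gathering, with each intermediate segment $w_i$ lying in $\{a_1,\dots,a_i\}^*$ and each $v_i$ lying in $(A\setminus\{b_1,\dots,b_i\})^*$.
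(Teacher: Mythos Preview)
Your proposal is correct and follows essentially the same strategy as the paper: replace each star $X_A$ by a finite union of gatherings over $A$ together with expressions built from stars over strict subsets of $A$, then discard the pre-perfect schemes that miss $L_\Sigma^{\bx,\by}$ using decidability of CVASS reachability. The paper packages the star elimination as a separate inductive lemma (Star Decomposition, induction on $|A|$) and then substitutes the result into every star of $\rho$ at once, whereas you unfold this induction into an iterative rewrite with the termination measure $(\weight{\rho},s(\rho))$; these are equivalent presentations of the same argument, and your case split (i)--(iii) is exactly the paper's decomposition $A^*=\bigcup_G L(G)\cup\bigcup B^*C^*\cup\bigcup B^*aC^*$, just with the specific witnesses $a=a_n$, $b=b_1$, $c$ singled out rather than ranging over all strict subsets. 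One small correction: CVASS reachability is $\NP$-complete (Blondin--Haase), not polynomial time; the polynomial-time result is for stateless continuous Petri nets.
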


Note that if $L(\rho) \cap L_{\Sigma}^{\bx,\by}$ is empty, then the above theorem guarantees that so is 
$(\bigcup_i L(\rho_i)) \cap L_{\Sigma}^{\bx,\by}$. However, since $\rho_1,\dots,\rho_m$ are required to be perfect path-schemes, in this case we will have $m = 0$. \todo{Bala: Added these two lines for Reviewer A.}

The second theorem states that any perfect path-scheme $\rho$ can be decomposed into a finite number of path-schemes such that one of them completely sits inside $L_\Sigma^{\bx,\by}$ and all the others
have strictly smaller weight than $\rho$.
This theorem relies on the Lifting Runs Theorem (\cref{thm:lifting-runs}) to identify the required path-schemes.
\begin{theorem}[Perfect Path-Scheme Decomposition]
	\label{perfect-path-scheme-decomposition}
	Given a perfect path-scheme $\rho$,
	we can compute a path-scheme $\sigma$ and a finite set of path-schemes $\rho_1, \ldots, \rho_m$ such that $L(\rho) = L(\sigma) \cup \bigcup_i L(\rho_i)$ and $L(\sigma) \subseteq L^{\bx, \by}_\Sigma$ and $\weight{\rho_i} \lexlt \weight{\rho}$ for all $i$.
\end{theorem}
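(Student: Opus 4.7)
The plan is to combine the Lifting Runs Theorem (\cref{thm:lifting-runs}) with a combinatorial decomposition of the residual language: extract $\sigma$ by lifting a witness from $L(\rho) \cap L_\Sigma^{\bx, \by}$, and then cover $L(\rho) \setminus L(\sigma)$ by finitely many path-schemes $\rho_j$ obtained from $\rho$ by replacing individual gatherings with concatenations of bubbles over strictly smaller sub-alphabets.

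For the extraction of $\sigma$, since $\rho$ is perfect, fix a witness $w \in L(\rho) \cap L_\Sigma^{\bx, \by}$ together with its firing sequence. Decompose $w$ along $\rho = u_0 X_1 u_1 \ldots X_n u_n$ as $w = u_0 w_1 u_1 \ldots w_n u_n$, and for each gathering $X_i = X_{a_1 \ldots a_{n_i}}^{b_1 \ldots b_{n_i}}$ write the canonical factorization $w_i = a_1 s_1^i a_2 s_2^i \ldots a_{n_i} s_{n_i}^i b_1 t_1^i \ldots b_{n_i} t_{n_i}^i$. Apply the Lifting Runs Theorem to $w$ and its firing sequence; this yields a path-scheme $\sigma$ whose words are safe liftings of $w$. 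In particular, the center slot $s_{n_i}^i$ of each gathering is permitted to vary freely over a star $X_{A_i}$ with $A_i = \{a_1, \ldots, a_{n_i}\}$, since inserting any letter from $A_i$ there preserves both the first- and last-appearance records. The Lifting Runs Theorem guarantees $L(\sigma) \subseteq L_\Sigma^{\bx, \by}$, and by construction $L(\sigma) \subseteq L(\rho)$.

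For the residual decomposition, take any $w' \in L(\rho) \setminus L(\sigma)$, factored along $\rho$ as $u_0 w_1' u_1 \ldots w_n' u_n$. Since $w' \notin L(\sigma)$, there is some gathering $X_i$ in which $w_i'$ deviates from the structure prescribed by $\sigma$; since $\sigma$ already allows any center over $A_i$, the deviation must lie in a non-center scaffolding slot, either $s_k^i$ for some $k < n_i$ or $t_k^i$ for some $1 \le k \le n_i$. I would enumerate the finitely many (gathering, slot) pairs and for each, build a path-scheme $\rho_j$ by replacing $X_i$ in $\rho$ with a concatenation of bubbles that pins down the gathering structure up to the deviating slot and captures the deviation using a smaller bubble. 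The crucial combinatorial point is that every non-center slot is syntactically confined to a strict sub-alphabet of $A_i$: slot $s_k^i$ lives in $\{a_1, \ldots, a_k\} \subsetneq A_i$ for $k < n_i$, and slot $t_k^i$ lives in $A_i \setminus \{b_1, \ldots, b_k\} \subsetneq A_i$ for $k \ge 1$. Thus each bubble of $\rho_j$ replacing $X_i$ uses at most $n_i - 1$ distinct letters. Summing over all deviation cases yields $\rho_1, \ldots, \rho_m$ whose languages together with $L(\sigma)$ exactly cover $L(\rho)$. Since the gathering $X_i$ contributes $\overline{n_i}$ to $W(\rho)$ while every replacing bubble contributes only to strictly smaller components, the $n_i$-th component of $W(\rho_j)$ drops by $1$ relative to that of $W(\rho)$ with higher components unchanged, yielding $W(\rho_j) \lexlt W(\rho)$.

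The hardest step will be the residual decomposition: precisely formalising what it means for $w'$ to deviate from $\sigma$ in a single slot, ensuring the deviation cases are finite, and expressing each case as a genuine path-scheme whose replacement bubbles all lie over strict sub-alphabets of $A_i$. The key enabling fact is the syntactic alphabet constraint on non-center scaffolding slots of a gathering, which is exactly what forces the weight to strictly decrease in lexicographic order.
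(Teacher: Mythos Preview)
There is a genuine gap in both halves of your argument. First, the Lifting Runs Theorem does \emph{not} yield a $\sigma$ in which the center slot varies freely over all of $A_i^*$: it only guarantees that words whose center contains $\centr(w_i)$ as a subword lie in $L_\Sigma^{\bx,\by}$ (this is precisely the definition of $(\vec{w})^\uparrow_\rho$). Centers that do not contain $\centr(w_i)$ carry no such guarantee, so either $L(\sigma)\not\subseteq L_\Sigma^{\bx,\by}$, or a word $w'\in L(\rho)\setminus L(\sigma)$ may perfectly well deviate in the center rather than in a scaffolding slot, contrary to your claim.

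Second, and more fundamentally, locating the deviation in a scaffolding slot cannot produce a weight decrease. If some $s_k^i$ with $k<n_i$ deviates, any path-scheme replacing $X_i$ must still accommodate an arbitrary center in $A_i^*$, and that bubble has $n_i$ letters; hence $\weight{\rho_j}\not\lexlt\weight{\rho}$. Your claim that ``each bubble of $\rho_j$ replacing $X_i$ uses at most $n_i-1$ distinct letters'' therefore fails at the center. The paper's proof inverts your scheme: $\sigma$ frees \emph{all} scaffolding slots (these are already over strict sub-alphabets of $A_i$ by the definition of a gathering) and constrains only the center, requiring $\centr(w_i)\preceq\centr(w_i')$. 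The residual is then exactly ``center does not contain $\centr(w_i)$ as a subword'', which is captured by the standard subword-avoidance schemes $X_{A_i\setminus\{c_1\}}c_1 X_{A_i\setminus\{c_2\}}c_2\cdots X_{A_i\setminus\{c_j\}}$, every bubble of which has at most $n_i-1$ letters. The point is that the center is the \emph{only} slot of $\fl{X_i}$ ranging over the full alphabet $A_i$, so forcing the deviation to live in the center is precisely what makes the weight drop.
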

We shall now see how these two theorems can be used to prove our effective regularity result.

\subsection{Proof of Effective Regularity}
Assuming the above two theorems,
we show that $L_\Sigma^{\bx,\by}$ is effectively regular. Here (and later in the proof), we employ a result by Blondin and Haase~\cite[Thm~4.9]{blondinLogicsContinuousReachability2017}, which allows us to decide, for a given CVAS $\Sigma$ and a regular language $R\subseteq\Sigma^*$, whether the language $L_{\Sigma}^{\bx,\by}$ intersects $R$.
\begin{proposition}[Blondin \& Haase]\label{regular-intersection-decidable}
Given a CVAS $\Sigma\subset\zn^d$, configurations $\bx,\by\in\qnz^d$ and a regular language $R$, the problem of deciding whether $R\cap L_{\Sigma}^{\bx,\by}\ne\emptyset$ is decidable (and $\NP$-complete).
\end{proposition}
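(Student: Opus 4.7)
The plan is to reduce the problem to reachability in a continuous VASS (CVASS), that is, a CVAS equipped with finitely many control states. Given $\Sigma\subset\zn^d$, $\bx,\by\in\qnz^d$, and a regular language $R\subseteq\Sigma^*$, first fix a finite automaton $\mathcal{A}=(Q,\Sigma,\delta,q_0,F)$ recognising $R$. From $\Sigma$ and $\mathcal{A}$ I would form a product CVASS $\Sigma'$ whose control states are $Q$, whose transitions are the triples $(q,t,q')\in\delta$, each adding the vector $t\in\zn^d$ to the $d$ continuous counters. By construction, for any $q_f\in F$, a run of $\Sigma'$ from $(q_0,\bx)$ to $(q_f,\by)$ corresponds to a word $w\in R$ with $\bx\xrightarrow{w}\by$ in $\Sigma$, and vice versa. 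Hence $R\cap L_{\Sigma}^{\bx,\by}\ne\emptyset$ reduces to the disjunction over $q_f\in F$ of the CVASS reachability instance from $(q_0,\bx)$ to $(q_f,\by)$, which in turn can be folded into a single reachability instance (e.g., by adding a fresh sink state reached by zero-effect transitions from each $q_f\in F$).

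The decidability and $\NP$ upper bound then follow from the cited Blondin--Haase result~\cite[Thm.~4.9]{blondinLogicsContinuousReachability2017}, which is exactly the $\NP$ upper bound for CVASS reachability. Their argument adapts the Fraca--Haddad analysis~\cite{DBLP:journals/fuin/FracaH15} of continuous Petri nets to the CVASS setting: one shows that whenever a run exists, there is one whose Parikh image is witnessed by a polynomial-size ``support'' of transitions together with a rational solution of an associated linear-arithmetic system over $\qn$ that captures the cumulative counter effects, together with a reachability condition on the support graph. The algorithm nondeterministically guesses the support structure in polynomial time and then verifies feasibility of the corresponding linear program, which is also polynomial.

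For $\NP$-hardness I would reduce from plain CVAS reachability (the case $R=\Sigma^*$), which is already $\NP$-hard by Fraca--Haddad. The substantive work therefore lies entirely in the $\NP$-membership analysis of CVASS reachability---the product construction with a finite automaton on top of it is a completely routine piece of bookkeeping. This is the step I would expect to be the main obstacle if one had to redo the Blondin--Haase proof from scratch, in particular because the support-graph characterisation has to be proved sound and complete in the presence of the ``cannot go below zero at any intermediate step'' constraint, which is the subtlety distinguishing continuous from merely rational reachability.
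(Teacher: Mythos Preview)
Your reduction to CVASS reachability via a product with an NFA for $R$, followed by invoking Blondin--Haase's $\NP$ upper bound for CVASS reachability, is exactly what the paper does (in one sentence). So for decidability and the $\NP$ upper bound your proposal is correct and matches the paper's approach.

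However, your $\NP$-hardness argument is wrong. You propose to reduce from plain CVAS reachability (the case $R=\Sigma^*$) and attribute $\NP$-hardness of that problem to Fraca--Haddad. In fact Fraca--Haddad show that reachability in continuous Petri nets (and hence in CVAS without control states) is decidable in \emph{polynomial time}; the paper recalls this explicitly in the introduction. Thus taking $R=\Sigma^*$ gives a problem in $\mathsf{P}$, not an $\NP$-hard one. The $\NP$-hardness in Blondin--Haase's Theorem~4.9 comes precisely from the presence of finite control (equivalently, from the regular language $R$): one encodes, e.g., \textsc{Subset-Sum} or \textsc{3-SAT} into the control structure, with the continuous counters used to verify the arithmetic constraint. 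If you want to supply a self-contained hardness argument rather than just citing Blondin--Haase, you need to exploit $R$ nontrivially.
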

This is because in \cite[Thm~4.9]{blondinLogicsContinuousReachability2017}, Blondin and Haase show that reachability in ``CVAS with states'', which are CVAS with finitely many control states, is $\NP$-complete.
\cref{regular-intersection-decidable} implies that given a path-scheme $\rho$, it is decidable whether $L(\rho)\cap L_{\Sigma}^{\bx,\by}\ne\emptyset$ (since $L(\rho)$ is a regular language).

To show effective regularity, using \cref{regular-intersection-decidable}, we first check if $\Sigma^* \cap L_\Sigma^{\bx,\by} \neq \emptyset$, i.e., we check if $L_\Sigma^{\bx,\by}$ is non-empty. If it is empty, then we are already done.
Hence, for the rest of this proof, we assume that $L_\Sigma^{\bx,\by} \neq \emptyset$.

We start with a tree $T_0$ and then progressively add leaves to construct trees $T_1, T_2, \dots $ with each tree $T_i$ satisfying the
following invariants:
\begin{enumerate}
	\item[C0] Each node of the tree will be labelled by one path-scheme $\rho$ such that $L(\rho) \cap L_\Sigma^{\bx,\by} \neq \emptyset$. This path-scheme will either be marked or unmarked.
	If it is at an odd level, then it will be a perfect path-scheme.
	\item[C1] For any node $\rho$ at an odd level, if its parent is $\rho'$, then
	$\weight{\rho} \lexlteq \weight{\rho'}$.
	\item[C2] For any unmarked node $\rho$ at an even level, if its parent is $\rho'$,
	then $\weight{\rho} \lexlt \weight{\rho'}$.
	\item[C3] If a node $\rho$ is marked, then $L(\rho) \subseteq L_\Sigma^{\bx,\by}$ and it will not have any children.
	\item[C4] The union of the languages at the leaf nodes will be an overapproximation of $L_\Sigma^{\bx,\by}$, i.e., a super-set of $L_\Sigma^{\bx,\by}$.
	
\end{enumerate}

We begin by setting $T_0$ to be the tree consisting of just the unmarked root node labelled with the path-scheme $X_\Sigma$. Clearly, the five conditions are satisfied by $T_0$ (where we take the root node
to be a node at level 0).
Suppose we have already constructed $T_k$. If all the leaves of $T_k$ are marked,
then we stop the construction. Otherwise, let $\rho$ be an unmarked leaf of $T_k$.
There are two possibilities:
\begin{itemize}
	\item $\rho$ is a leaf at an even level: In this case, we apply the Path-Scheme Decomposition
	theorem to $\rho$ to get a finite set of perfect path-schemes $\rho_1,\dots,\rho_m$
	such that $L(\rho)\cap L_{\Sigma}^{\bx,\by} = (\bigcup_i L(\rho_i)) \cap L_{\Sigma}^{\bx,\by}$ and $\weight{\rho_i}\lexeq \weight{\rho}$ for each $i$. We then attach each $\rho_i$ as an unmarked leaf
	to $\rho$. Let this new tree be $T_{k+1}$.
	
	By construction, the invariants C0, C1, C2 and C3 are immediately satisfied. For invariant C4, notice that $L(\rho)\cap L_{\Sigma}^{\bx,\by} = (\bigcup_i L(\rho_i)) \cap  L_{\Sigma}^{\bx,\by}$.
	Since C4 was satisfied by the tree $T_k$ and since the leaf $\rho$ in $T_k$ is replaced
	by the leaves $\rho_1,\dots,\rho_m$ in $T_{k+1}$, it follows that C4 is also satisfied by the tree $T_{k+1}$.
	
	\item $\rho$ is a leaf at an odd level: By invariant C0, $\rho$ must be a perfect path-scheme.
	We apply the Perfect Path-Scheme Decomposition theorem to $\rho$ to get a path-scheme $\sigma$ and a finite set of path-schemes $\rho_1, \ldots, \rho_m$ such that $L(\rho) = L(\sigma) \cup \bigcup_i L(\rho_i)$ and $L(\sigma) \subseteq L^{\bx, \by}_\Sigma$ and $\weight{\rho_i} \lexlt \weight{\rho}$ for all $i$. We then attach $\sigma$ as a marked leaf. Further, we add each $\rho_i$ that satisfies $L(\rho_i) \cap L_\Sigma^{\bx,\by} \neq \emptyset$ (we check this using \cref{regular-intersection-decidable}) as an unmarked leaf to $\rho$. Let this new tree be $T_{k+1}$.
	
	By construction, the invariants C0, C1, C2 and C3 are once again immediately satisfied.
	For invariant C4, note that the only reason we
	do not add some $\rho_i$ as a leaf to $\rho$ is if $L(\rho_i) \cap L_\Sigma^{\bx,\by} = \emptyset$. Hence, if $\sigma,\rho_{i_1}, \rho_{i_2},\dots, \rho_{i_q}$ are the path-schemes which
	we added as a leaf to $\rho$, then $L(\rho) \cap L_\Sigma^{\bx,\by} = (L(\sigma) \cup \bigcup_{i_j} L(\rho_{i_j})) \cap L_\Sigma^{\bx,\by}$. Since C4 was satisfied by the tree $T_k$ and since 
	the leaf $\rho$ in $T_k$ is replaced by the leaves $\sigma, \rho_{i_1}, \dots, \rho_{i_q}$ in $T_{k+1}$,
	it follows that C4 is satisfied by $T_{k+1}$ as well.
\end{itemize}

We claim that the construction of $T_0, T_1, \ldots$ halts after a finite number of steps.
Towards a contradiction, suppose $T_0, T_1, \ldots$ does not terminate.
Note that each $T_{i}$ satisfies all the five invariants, is finitely-branching and is obtained by adding a finite number of leaves to $T_{i-1}$.
Hence, from $T_0, T_1, \ldots$ we get a finitely-branching infinite tree $T$
that also satisfies all of the five invariants.
By K\"{o}nig's lemma, $T$ must have an infinite path. By invariant C3, this path has no marked nodes. By C1 and C2, the weight of an unmarked node at an even level
along this path strictly decreases (with respect to the $\lexlt$ ordering) when compared
with the weight of its grandparent. By the well-ordering of the lexicographic ordering,
this path cannot be infinite, leading to a contradiction.

Hence, the construction halts at some point (say $T_m$).
By definition of the construction, this can happen iff there are no unmarked leaves in $T_m$.
Let $\rho_1,\dots,\rho_\ell$ be the set of all leaves in $T_m$. By invariant C4, we have $L_\Sigma^{\bx,\by} \subseteq  \cup_{1 \le i \le \ell} \ L(\rho_i)$ and by invariant C3, we have
$\cup_{1 \le i \le \ell} \ L(\rho_i) \subseteq L_\Sigma^{\bx,\by}$. Hence, $L_\Sigma^{\bx,\by} = \cup_{1 \le i \le \ell} \ L(\rho_i)$,
and thus $L_\Sigma^{\bx,\by}$ is effectively regular.

We now conclude with a brief discussion on the complexity of our construction, i.e., the size of the final set of path-schemes that we obtain.

\myparagraph{Complexity} By a straightforward application of length function
theorems for ordinals~\cite[Theorem 3.5]{DBLP:books/hal/Schmitz17}, we show in
\cref{app:complexity} that the NFA for $L_{\Sigma}^{\bx,\by}$ can be computed
in Ackermannian time. Moreover, if we fix the number $h=|\Sigma|$ of
transitions in our CVAS, then we obtain a primitive recursive construction.
This is because termination in our construction is guaranteed by descent
w.r.t.\ the lexicographical ordering $(\nn^{|\Sigma|},\lexeq)$, which is the
ordinal $\omega^{|\Sigma|}$, and each construction of path-schemes runs in
elementary time. This elementary time bound in particular yields an elementary
control function for the application of the length function theorem for ordinals.  See
\cref{app:complexity} for a detailed argument.

\myparagraph{Division of Proof into Sections}
Hence, in order to complete the proof of effective regularity (\cref{th:main}), it suffices to prove Theorems~\ref{path-scheme-decomposition} and~\ref{perfect-path-scheme-decomposition}.
This is what we do now, by using Section~\ref{sec:path-scheme-decomposition} to prove Theorem~\ref{path-scheme-decomposition} and Section~\ref{sec:perfect-path-scheme-decomposition} to prove Theorem~\ref{perfect-path-scheme-decomposition}.

\section{Path-Scheme Decomposition}\label{sec:path-scheme-decomposition}

In this section, we will prove \cref{path-scheme-decomposition}. We begin by proving a slightly weaker claim for the special case of \emph{stars}.

\begin{lemma}[Star Decomposition]\label{star-decomposition}
	Given a star $X_A$, we can compute a finite set of
	pre-perfect path-schemes $\rho_1,\ldots,\rho_m$ such that $L(X_A)=L(\rho_1)\cup\cdots\cup
	L(\rho_m)$ and $\weight{\rho_i}\lexeq\weight{X_A}$.
\end{lemma}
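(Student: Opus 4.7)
My plan is to prove \cref{star-decomposition} by induction on $|A|$. The base cases are immediate: for $A = \emptyset$ we take the single path-scheme with $u_0 = \epsilon$ and no bubbles (weight $\mathbf{0}$), and for $|A| = 1$ with $A = \{a\}$ we decompose $A^* = \{\epsilon\} \cup \{a\} \cup L(X_a^a)$ into three pre-perfect path-schemes, each of weight $\lexeq \overline{1}$. The key structural observation underlying the induction is that any gathering $X_{a_1\cdots a_n}^{b_1\cdots b_n}$ can only match words in which (i) every letter of its support appears at least twice, since the decomposition $a_1 w_1 \cdots a_n w_n b_1 v_1 \cdots b_n v_n$ uses $2n$ distinct letter positions, and (ii) all first-appearances precede all last-appearances. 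Words in $A^*$ failing either condition must therefore be handled by path-schemes involving only gatherings over strict sub-alphabets, hence of strictly smaller weight.

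For the inductive step $|A| \ge 2$, I would partition $A^*$ by the support of $w$ and by whether $w$ matches a gathering. In case (I), $\mathrm{supp}(w) \subsetneq A$, so $w \in L(X_B)$ for some proper subset $B$, and the induction hypothesis provides a finite pre-perfect decomposition of $L(X_B)$ with every weight $\lexeq \overline{|B|} \lexlt \overline{|A|}$. In case (II), $\mathrm{supp}(w) = A$ and $w$ matches a gathering of support $A$; the first- and last-appearance records of $w$ determine this gathering uniquely, yielding at most $(|A|!)^2$ single-bubble path-schemes, each of weight $\overline{|A|}$. In case (III), $\mathrm{supp}(w) = A$ but $w$ matches no gathering. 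This splits further: in (IIIa), some letter $a \in A$ appears exactly once, so $w = w_1 a w_2$ with $w_1, w_2 \in (A \setminus \{a\})^*$; in (IIIb), every letter appears at least twice but the latest first-position $p$ strictly exceeds the earliest last-position $q$. In (IIIb), letting $c^*, d^*$ be the letters attaining $p$ and $q$ respectively (necessarily distinct, since $\mathrm{first}_w(c) \le \mathrm{last}_w(c)$), splitting $w = w_1 w_2$ at position $q$ yields $w_1 \in (A \setminus \{c^*\})^*$ and $w_2 \in (A \setminus \{d^*\})^*$. For (IIIa), for each choice of $a$, I would include all concatenations $\sigma \cdot a \cdot \sigma'$ with $\sigma, \sigma'$ ranging over the inductive pre-perfect path-schemes for $X_{A \setminus \{a\}}$; for (IIIb), for each pair $(c^*, d^*)$, I would include the concatenations $\sigma \cdot \sigma'$ from the corresponding inductive decompositions.

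The main verification is the weight bound. Case (II) already hits $\overline{|A|}$, which is permitted; cases (I), (IIIa), and (IIIb) only involve gatherings over strict sub-alphabets of $A$, so by induction every constituent weight has a $0$ in the $|A|$-th coordinate, and their sums therefore satisfy $\lexlt \overline{|A|}$. Completeness of the cover reduces to verifying that every $w \in A^*$ lands in some case, which follows directly from inspecting its support together with its first- and last-appearance records. The main obstacle I anticipate is recognising the need for case (IIIb): words like $aabb$ have full support with each letter appearing at least twice, yet still fail the firsts-before-lasts property and so cannot be absorbed by any gathering of support $A$. These must instead be captured by a two-gathering concatenation after the left-right split at position $q$, which is the conceptually subtle step in the argument.
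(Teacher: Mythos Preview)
Your proposal is correct and follows essentially the same approach as the paper: induction on $|A|$, with the inductive step splitting $A^*$ according to whether a word matches a gathering of full support $A$, has some letter appearing at most once, or has a last-appearance preceding a first-appearance (your cases (II), (IIIa)/(I), and (IIIb), respectively), then invoking the induction hypothesis on the resulting strict sub-alphabets. The paper packages the same case analysis slightly differently---it writes a single decomposition $A^* = \bigcup_G L(G) \cup \bigcup_{B,C\subsetneq A} B^*C^* \cup \bigcup_{a,B,C} B^*aC^*$ and handles ``letter absent'' and ``letter appears once'' in one stroke rather than separating your case (I)---and it does not treat $|A|=1$ as a separate base case (the inductive step already works there), but the underlying decomposition and the weight argument are identical.
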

\begin{proof}
	We proceed inductively w.r.t.\ the size of $A$. The base case is when $|A| = 0$. Hence
	$A=\emptyset$, and so we can pick the pre-perfect path-scheme $\rho = \epsilon$ consisting of
	just the empty word. It is then easy to see that $L(X_A) = A^*=\{\varepsilon\}$ and
	also that $\weight{\rho} = \weight{X_A}$.

	For the induction step, suppose $|A| > 0$. Then we first claim that we can decompose $A^*$ as
	\begin{equation}\label{eq:star-decompose}
		A^* ~~~=~~~ \bigcup_G L(G)~~\cup~~\bigcup_{\begin{smallmatrix}B\subsetneq A \\ C\subsetneq A\end{smallmatrix}} B^*C^*~~\cup~~\bigcup_{\begin{smallmatrix}a\in A,\\B\subseteq A\setminus\{a\},\\C\subseteq A\setminus\{a\}\end{smallmatrix}} B^*aC^*,
	\end{equation}
	where $G$ ranges over all gatherings $X_{a_1,\ldots,a_n}^{b_1,\ldots,b_n}$ with $A=\{a_1,\ldots,a_n\}=\{b_1,\ldots,b_n\}$.
 The inclusion ``$\supseteq$'' is trivial, so let us prove
the other direction ``$\subseteq$'', i.e., we prove that every word in $A^*$ belongs to one of the terms on the right-hand side of the above equation.

Consider a word $w\in A^*$. If some letter $a\in A$ occurs less than twice in $w$, then $w$ belongs either (i)~to $B^*aC^*$ for some $B\subseteq A\setminus\{a\}$, $C\subseteq A\setminus\{a\}$ or (ii)~to $B^*$ for some $B\subsetneq A$.
Thus, let us assume that every letter occurs at least twice in $w$. Let $A=\{a_1,\ldots,a_n\}$ and for each $1 \le i \le n$, let $f_i\in[1,|w|]$ be the position of the first occurrence of $a_i$ in $w$, and $\ell_i\in[1,|w|]$ be the last occurrence of $a_i$ in $w$. We now consider two cases:
\begin{enumerate}
\item[(i)] Some first occurrence is to the right of some last occurrence, i.e.\ there are $i,j$ such that $\ell_j<f_i$. In this case it is easy to see that $w$ belongs to $B^*C^*$, where $B=A\setminus \{a_i\}$ and $C=A\setminus\{a_j\}$ and so we are done.
\item[(ii)] Every first occurrence is to the left of every last occurrence, i.e.\ $f_i<\ell_j$ for all $i,j$. In this case we claim that there is a gathering $G$ such that $w\in L(G)$. Indeed, let $i_1,\ldots,i_n$
and $j_1,\ldots,j_n$ be permutations of $\{1,\ldots,n\}$ such that $f_{i_1} < f_{i_2} < \ldots f_{i_n} < \ell_{j_1} < \ell_{j_2} < \ldots \ell_{j_n}$. Then, it is easily seen that $w$ belongs to the gathering $X$ whose first-appearance record is $a_{i_1},\ldots,a_{i_n}$ and whose last-appearance record is
$a_{j_1},\ldots,a_{j_n}$. Hence, in this case as well, we are done.

\end{enumerate}

	With \eqref{eq:star-decompose} at our disposal, in order to prove the induction step
	for $X_A$, it suffices to prove the following: For each term $S$ that appears in the right-hand side of \eqref{eq:star-decompose}, we can decompose $S$ as a finite set of pre-perfect path-schemes $\sigma_1,\ldots,\sigma_k$ such that
	each $\weight{\sigma_i} \lexeq \weight{X_A}$. This follows by a simple case analysis on $S$. The intuition is that $S$ is either the language of a gathering $G$, or a term of the form $B^*C^*$ or $B^*aC^*$
	for some $B, C \subsetneq A$. In the first case, gatherings are pre-perfect path schemes themselves. In the latter two cases, we use the induction hypothesis to get pre-perfect path schemes for $B^*$ and $C^*$.
	From these, we can construct a set of pre-perfect path-schemes for each language of the form $B^*C^*$ and $B^*aC^*$. Note
	that inductively, the pre-perfect path-schemes for $B^*$, $C^*$ have weight
	$\lexeq \overline{|A|-1}$. Thus, a concatenation of such schemes will
	have a weight vector that is strictly less than $\overline{|A|}$. Moreover, each gathering $G$ has weight $\overline{|A|}$. Thus, every pre-perfect
	path-scheme we construct has a weight vector that is at most
	$\overline{|A|}=\weight{X_A}$. A detailed proof of this case analysis on $S$ is given in~\cref{app:star-decomposition}.

	This completes the induction step and therefore concludes the proof of the lemma.
\end{proof}

Using this lemma, we now prove~\cref{path-scheme-decomposition}.

\begin{proof}[Proof of \cref{path-scheme-decomposition}]
	First, notice that it suffices to only compute pre-perfect path-schemes
	$\rho_1,\ldots,\rho_m$ with $L(\rho)=\bigcup_{i=1}^m L(\rho_i)$ and
	$\weight{\rho_i}\lexeq\weight{\rho}$ for each $i$.
	Indeed, if we can compute such pre-perfect path-schemes, then we can check whether each
	$\rho_i$ is perfect. This amounts to
	checking whether $L(\rho_i)\cap L_{\Sigma}^{\bx,\by}\ne\emptyset$, which is decidable (even in
	$\NP$) by \cref{regular-intersection-decidable}.
	Let $\rho_{i_1},\ldots,\rho_{i_k}$ be the perfect path-schemes from this list.
	It is then easy to see that $L(\rho) \cap L_{\Sigma}^{\bx,\by} = (\cup_{i_j} L(\rho_{i_j})) \cap L_{\Sigma}^{\bx,\by}$ and $\weight{\rho_{i_j}} \lexeq \weight{\rho}$ for each $i_j$, thereby
	yielding the desired list of perfect path-schemes for $\rho$.
%	Removing the imperfect $\rho_i$ from the list yields the
%	desired list of perfect path-schemes.

	Now let $\rho$ be a path-scheme. Let $X_{A_1},\ldots,X_{A_s}$ be the occurrences of
	stars in $\rho$. For each star $X_{A_i}$ in $\rho$, we compute a
	list $\sigma_{i,1},\ldots,\sigma_{i,t}$ of perfect path-schemes using \cref{star-decomposition} (by duplicating the path-schemes if necessary, we may assume that we get the same number $t$ of schemes for
	each $A_i$).

	For each function $f\colon[1,s]\to[1,t]$, consider the
	perfect path-scheme $\rho_f$ obtained from $\rho$ by replacing each
	$X_{A_i}$ with $\sigma_{i,f(i)}$. It is then easy to see that $L(\rho)=\bigcup_f
	\rho_f$, where $f$ ranges over all functions $[1,s]\to[1,t]$. It remains to argue that $\weight{\rho_f}\lexeq\weight{\rho}$.
	For this, we observe that since $\weight{\sigma_{i,f(i)}}\lexeq \weight{X_{A_i}}$, it follows that replacing $X_{A_i}$ with $\sigma_{i,f(i)}$ in any
	path-scheme will not increase the weight: This is
	because if $\bu\lexeq\bv$ for some vectors $\bu,\bv\in\nn^{k}$, then
	$\bw+\bu\lexeq\bw+\bv$ for every $\bw\in\nn^k$.
	Applying this observation $s$ times yields that $\weight{\rho_f}\lexeq\weight{\rho}$.
\end{proof}

\section{Perfect Path-Scheme Decomposition}\label{sec:perfect-path-scheme-decomposition}
In this section, we will prove Theorem~\ref{perfect-path-scheme-decomposition}.
In order to prove this theorem, we first set up some notation and introduce some key definitions.

Recall that a word $w$ is said to be less than or equal to another word $w'$ over the subword ordering (denoted by $w \preceq w'$) if $w$ can be obtained from $w'$ by deleting some letters in it.
Now, let $X$ be a gathering and let $w \in L(X)$. We define $(w)^\uparrow_X$ to be the set of all words $w' \in L(X)$ such that $\centr(w) \preceq \centr(w')$ with respect to the subword ordering $\preceq$. Hence, $(w)^\uparrow_X$ is the set
of all words in $L(X)$ whose center component is larger than or equal to the center
component of $w$.

We extend the above notion to pre-perfect path-schemes. Let $\rho = u_0 X_1 \ldots X_n u_n$ be a pre-perfect path-scheme and let $\vec{w} = (w_1, \ldots, w_n)$ be a $\rho$-factor.
We let $(\vec{w})^\uparrow_\rho$ be the set of words given by $\{u_0 w_1' u_1 w_2' \dots w_n' u_n : \forall \ 1 \le i \le n, \ w_i' \in ({w_i})^\uparrow_{X_i} \}$. Intuitively, $(\vec{w})^\uparrow_\rho$ is the set of all words obtained by replacing each $w_i$  with some $w_i' \in (w_i)^\uparrow_{X_i}$.
We call $(\vec{w})^\uparrow_\rho$ the $\rho$-upward closure of $\vec{w}$.

With these notations set up, we are now ready to state the three main ingredients which will together imply Theorem~\ref{perfect-path-scheme-decomposition}. To state these ingredients, we first set up some context.
Recall that to prove Theorem~\ref{perfect-path-scheme-decomposition}, we need to split a perfect path-scheme $\rho$
into path-schemes $\sigma,\rho_1,\dots,\rho_m$ such that $L(\sigma)$ is contained entirely in $L^{\bx,\by}_\Sigma$ and the weight of each $\rho_i$ is strictly less than the weight of $\rho$.
The three ingredients will help us find the path-schemes $\sigma,\rho_1,\dots,\rho_m$.
The first two are about finding $\sigma$ and the last one is about finding $\rho_1,\dots,\rho_m$.

\myparagraph{First Ingredient: Lifting Runs Theorem}
The Lifting Runs Theorem states that from any perfect path-scheme $\rho$, we can extract a word $w$
and a $\rho$-factor $\vec{w}$ such that the entire set $(\vec{w})^\uparrow_\rho$ is contained in $L_\Sigma^{\bx,\by}$. Intuitively this result allows us to \emph{lift} the $\rho$-factor $\vec{w}$ to its $\rho$-upward closure $(\vec{w})^\uparrow_\rho$.

% \bala{Maybe a different name for the next theorem?}
\begin{theorem}[Lifting Runs Theorem] \label{thm:lifting-runs}
	Suppose $\rho$ is a perfect path-scheme. Then, we can compute a word $w \in L(\rho) \cap L^{\bx, \by}_\Sigma$ and a corresponding $\rho$-factor $\vec{w}$
	such that $(\vec{w})^\uparrow_\rho \subseteq L(\rho) \cap L^{\bx, \by}_\Sigma$.
\end{theorem}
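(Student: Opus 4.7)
Since $\rho$ is perfect, I would begin by fixing any $w^{(0)} \in L(\rho) \cap L^{\bx,\by}_\Sigma$ together with a concrete firing $\bx \xrightarrow{\alpha^{(0)} w^{(0)}} \by$. The task is then to refine this into a $w$ (and a firing of $w$) with enough slack inside each gathering that any padded word $w' \in (\vec{w})^\uparrow_\rho$ can still be fired from $\bx$ to $\by$.

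The guiding principle is the redistribution of fractions sketched in~\cref{subsec:padding}: to insert a new letter $a$ into the center of a gathering $X_i$ with a small fraction $\epsilon > 0$, we compensate by reducing the fraction of $a$'s last occurrence in $X_i$ by $\epsilon$; such a last occurrence is guaranteed to sit in the suffix $b_1 v_1 \ldots b_n v_n$ of $X_i$ because $a$ belongs to the gathering's alphabet. The target property for $w$ would be that inside each gathering $X_i$, every intermediate configuration throughout the center and up to each letter's last occurrence is \emph{strictly positive} in every coordinate in which that letter has a negative component, and simultaneously every fraction assigned to a last occurrence is strictly less than $1$. Given these two conditions, any specific $w' \in (\vec{w})^\uparrow_\rho$ involves only finitely many insertions, so one could choose all insertion fractions jointly small enough that non-negativity of configurations is preserved and every adjusted fraction remains in $(0,1]$; this yields the inclusion $(\vec{w})^\uparrow_\rho \subseteq L^{\bx,\by}_\Sigma$, which together with $(\vec{w})^\uparrow_\rho \subseteq L(\rho)$ (immediate from the definition) gives the statement.

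The main obstacle, and the step I would expect to be hardest, is producing such a $w$ and firing from just the unrestricted witness $w^{(0)}$. I would combine two tools: (i)~the duplication principle of~\cref{subsec:duplication}, which spreads each transition across many small-fraction copies and enlarges the space of valid firings, with (ii)~local redistributions that exploit the fact that in a gathering's center any letter of the gathering's alphabet may appear freely. Geometrically, the set of valid firing vectors of a fixed word is a polytope inside $\qnz^{|w|}$, and the desired strict positivity at each interior position corresponds to lying in the relative interior of a suitable face of that polytope. After sufficiently many duplications, a convex perturbation of $\alpha^{(0)}$ should land in that relative interior. The delicate case is boundary positions where the gathering's prefix $a_1 w_1 \ldots a_n$ is forced to drive some counter down to $0$ right at the start of the center; there the extra copies produced by duplication would be used to lift the counter slightly off zero without changing the overall effect of the run. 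Once such a $w$ is fixed, the padding argument of the previous paragraph concludes the proof.
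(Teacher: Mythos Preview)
Your plan has the right high-level shape---redistribute fractions so that each inserted letter can be fired with a small $\epsilon$, and compensate elsewhere---but there are two genuine gaps.

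\textbf{First gap: $(w)^\uparrow_X$ is not just center insertions.} By definition, $(w)^\uparrow_X$ is the set of \emph{all} $w' \in L(X)$ with $\centr(w) \preceq \centr(w')$. The prefix part $a_1 w_1' a_2 w_2' \ldots a_n$ and the suffix part $b_1 v_1' \ldots b_n$ of $w'$ are completely unconstrained by $w$ (beyond the gathering structure). Your plan treats $w'$ as ``$w$ with letters inserted in the center'', and your compensation rule (``reduce the fraction of $a$'s last occurrence'') is tailored to that case. You say nothing about how to fire an arbitrary prefix $u'$ or suffix $v'$. The paper handles this by choosing $w$ so that its prefix is the \emph{minimal} one $a_1 a_2 \ldots a_n$ and its suffix is the minimal $b_1 \ldots b_n$; then every $u'$ has $u = a_1 \ldots a_n$ as a subword, and one needs a monotonicity property along the prefix run (``once a counter becomes non-zero, it stays non-zero'') to allow inserting the extra letters. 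A symmetric property is needed for the suffix. Your target property (``strictly positive \ldots throughout the center and up to each letter's last occurrence'') does not cover the prefix at all.

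\textbf{Second gap: achieving your target property is the hard part, and the duplication/perturbation argument does not obviously deliver it.} You correctly identify the ``delicate case'' where the gathering prefix drives a counter to zero at the start of the center, but the proposed fix (``extra copies produced by duplication would be used to lift the counter slightly off zero'') is hand-waving: duplication replaces $\alpha t$ by $\tfrac{\alpha}{2}t \cdot \tfrac{\alpha}{2}t$ and visits exactly the same intermediate configurations plus one more in between, so it does not by itself make any zero coordinate positive. The paper avoids this entire construction by invoking a structural result of Blondin and Haase (Proposition~4.5 in their paper), which directly produces a run with the required monotonicity: in the prefix part, once a counter becomes non-zero it stays non-zero; in the suffix, once zero it stays zero; and in the center, any counter that was ever non-zero in the prefix or suffix stays strictly positive throughout. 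From this structured run, the redistribution argument goes through cleanly for all three parts of $w'$. Without such a structural lemma, turning an arbitrary $w^{(0)}$ into one with your target property is essentially as hard as the theorem itself.

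Finally, the paper also factors the proof: it first proves the single-gathering case (\cref{thm:gathering-lifting-runs}) and then glues the gatherings of $\rho$ together via intermediate configurations $\by_0, \bx_1, \ldots$, found by an enumerate-and-check using \cref{regular-intersection-decidable}. Your plan does not address how to \emph{compute} $w$ (the theorem asserts computability, not just existence).
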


\myparagraph{Second Ingredient: Path-Scheme for $\rho$-Upward Closures}
Given $\vec{w}$, obtained from the Lifting Runs Theorem, we can partition $L(\rho)$ into two: $(\vec{w})^\uparrow_\rho$
and $L(\rho) \setminus (\vec{w})^\uparrow_\rho$. Our second result proves that
it is always possible to construct a path-scheme $\sigma$ that captures exactly the set $(\vec{w})^\uparrow_\rho$.

\begin{theorem}[Path-Scheme Theorem for $\rho$-Upward Closures] \label{thm:path-scheme-upward-closure}
	Given a perfect path-scheme $\rho$ and a $\rho$-factor $\vec{w}$, we can compute a path-scheme $\sigma$ such that $L(\sigma) = (\vec{w})^\uparrow_\rho$.
\end{theorem}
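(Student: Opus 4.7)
The plan is to assemble $\sigma$ by replacing, for each bubble $X_i$ of the perfect path-scheme $\rho = u_0 X_1 u_1 \cdots X_n u_n$, the bubble $X_i$ with a small path-scheme $\sigma_i$ whose language is exactly $(w_i)^{\uparrow}_{X_i}$, and then setting $\sigma = u_0 \sigma_1 u_1 \cdots \sigma_n u_n$. Since by definition $(\vec{w})^{\uparrow}_\rho = u_0\, (w_1)^{\uparrow}_{X_1}\, u_1 \cdots (w_n)^{\uparrow}_{X_n}\, u_n$, this immediately yields $L(\sigma) = (\vec{w})^{\uparrow}_\rho$. Hence the whole task reduces to the one-bubble case, which is available to me because $\rho$ is perfect and therefore pre-perfect, so every $X_i$ is a gathering.

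For the one-bubble case, I unfold the semantics of the gathering $X_i = X^{b_1 \cdots b_k}_{a_1 \cdots a_k}$ (dropping the index $i$). Every word $w' \in L(X_i)$ admits a unique decomposition
\[
w' \;=\; a_1 \alpha_1\, a_2 \alpha_2 \cdots a_k \alpha_k\, b_1 \beta_1\, b_2 \beta_2 \cdots b_k \beta_k,
\]
with $\alpha_j \in A_j^*$, $\beta_j \in B_j^*$, and $\centr(w') = \alpha_k$, where $A_j := \{a_1, \ldots, a_j\}$ and $B_j := A_k \setminus \{b_1, \ldots, b_j\}$. Writing $\centr(w_i) = c_1 c_2 \cdots c_r$ with each $c_\ell \in A_k$, the additional upward-closure condition $\centr(w_i) \preceq \alpha_k$ states precisely that $\alpha_k$ lies in $A_k^*\, c_1\, A_k^*\, c_2 \cdots c_r\, A_k^*$, which is the classical regular-expression description of the subword-upward closure of a single word over the alphabet $A_k$. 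I therefore take
\[
\sigma_i \;=\; a_1 X_{A_1}\, a_2 X_{A_2} \cdots a_k X_{A_k}\, c_1 X_{A_k}\, c_2 X_{A_k} \cdots c_r X_{A_k}\, b_1 X_{B_1}\, b_2 X_{B_2} \cdots b_k X_{B_k},
\]
which is a valid path-scheme whose bubbles are the listed stars, with the interleaved single letters absorbed into the $u$-parts. By direct unfolding, $L(\sigma_i) = (w_i)^{\uparrow}_{X_i}$; note that $B_k = \emptyset$, so the trailing $X_{B_k}$ contributes only $\varepsilon$ and may be dropped.

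Assembling $\sigma = u_0 \sigma_1 u_1 \cdots \sigma_n u_n$ then produces a path-scheme with $L(\sigma) = (\vec{w})^{\uparrow}_\rho$, as required. The construction is purely syntactic in $\rho$ and $\vec{w}$, and hence effective: one just reads off the $a$'s, $b$'s, and the letters of $\centr(w_i)$ for each $i$ and writes down the displayed formula. I do not foresee a substantive obstacle; the only routine check is the standard fact that the subword-upward closure of a single word $c_1 \cdots c_r$ over a finite alphabet $A_k$ is exactly the regular language $A_k^* c_1 A_k^* \cdots c_r A_k^*$. Everything else is an unfolding of the definitions of gatherings and of $\rho$-upward closures.
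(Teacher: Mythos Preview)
Your proposal is correct and takes essentially the same approach as the paper: both reduce to the single-gathering case and then build $\sigma_i$ by replacing the central bubble of $\fl{X_i}$ with the standard path-scheme $X_{A_k} c_1 X_{A_k} \cdots c_r X_{A_k}$ for the subword-upward closure of $\centr(w_i)$, before concatenating the $\sigma_i$'s along the $u_j$'s. The resulting $\sigma_i$ you write down is literally the same path-scheme the paper constructs (up to the harmless trailing $X_\emptyset$ you already flagged).
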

%\begin{proof}
%	Let $\rho = u_0 X_1 \ldots X_n u_n$ and $\vec{w} = (w_1, \ldots, w_n)$.
%	Each $X_i$ is a gathering.
%	Use Lemma~\ref{lem:gathering-decomp-1} to obtain $\rho_i$ with $L(\rho_i) = (w_i)^\uparrow_{X_i}$.
%	We set $\sigma = \rho[\rho_1 / 1]\cdots[\rho_n / n]$.
%	We get $(\vec{w})^\uparrow_\rho = u_0 L(\rho_1) \ldots L(\rho_n) u_n = L(\sigma)$.
%\end{proof}

\myparagraph{Third Ingredient: Path-Schemes for Complement of $\rho$-Upward Closures}
We characterize the complement of this $\rho$-upward closure, i.e., $L(\rho) \setminus (\vec{w})^\uparrow_\rho$, and
get the desired $\rho_1,\dots,\rho_m$ for proving Theorem~\ref{perfect-path-scheme-decomposition}.
In particular, we can compute a finite set of path-schemes
$\rho_1,\dots,\rho_m$ such that their union \emph{over-approximates} $L(\rho) \setminus (\vec{w})^\uparrow_\rho$ and the weight of each $\rho_i$ is strictly less than the weight of $\rho$.

% \mh{Shortened the above ingredients a lot to remove repetition.}

\begin{theorem}[Path-Schemes Theorem for Complements] \label{thm:path-scheme-complement}
	Given a perfect path-scheme $\rho$ and a $\rho$-factor $\vec{w}$, we can compute a finite set of path-schemes $\rho_1, \ldots, \rho_m$ such that
	$L(\rho) \setminus (\vec{w})^\uparrow_\rho
	\subseteq \bigcup_i L(\rho_i)
	\subseteq L(\rho)$
	and $\weight{\rho_i} \lexlt \weight{\rho}$ for all $i$.
\end{theorem}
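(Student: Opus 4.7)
The plan is to reduce the statement to a ``per-bubble'' claim and then handle each bubble using the standard description of words that avoid a fixed subword. Write $\rho = u_0 X_1 u_1 \cdots X_n u_n$ with $\vec{w} = (w_1, \ldots, w_n)$. Any word $v \in L(\rho) \setminus (\vec{w})^\uparrow_\rho$ decomposes as $v = u_0 w_1' u_1 \cdots w_n' u_n$ with some index $i$ such that $\centr(w_i) \npreceq \centr(w_i')$; hence it suffices, for each $i$, to describe the set $L_i^{\mathrm{bad}} := \{w' \in L(X_i) : \centr(w_i) \npreceq \centr(w')\}$ as a finite union of path-schemes whose weight is strictly below $\weight{X_i}$ in the lex order. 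Given such a description, I would plug each path-scheme in place of $X_i$ inside $\rho$ (absorbing the surrounding $u_{i-1}, u_i$ into the boundary words). Since lex order on $\nn^{|\Sigma|}$ is compatible with componentwise addition, the resulting $\rho_{i,j}$ will satisfy $\weight{\rho_{i,j}} \lexlt \weight{\rho}$; also $L(\rho_{i,j}) \subseteq L(\rho)$ by construction, and jointly the $\rho_{i,j}$ cover $L(\rho) \setminus (\vec{w})^\uparrow_\rho$.

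To prove the per-bubble claim, fix $X_i = X_{a_1 \cdots a_r}^{b_1 \cdots b_r}$ with alphabet $A = \{a_1, \ldots, a_r\}$, and let $s := \centr(w_i) = s_1 \cdots s_k$. I would combine two observations. First, the definition of $L(X_i)$ already unfolds into a path-scheme whose bubbles are the stars $X_{\{a_1\}}, X_{\{a_1,a_2\}}, \ldots, X_{\{a_1,\ldots,a_{r-1}\}}$ on the entry side (all of support $<r$), a single central star $X_A$ corresponding to $\centr(w')$, and the exit stars $X_{A \setminus \{b_1\}}, X_{A \setminus \{b_1,b_2\}}, \ldots$ (again all of support $<r$). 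Only $X_A$ has support of size $r$. Second, by the standard ``greedy-matching fails'' characterisation, the set of $c \in A^*$ with $s \npreceq c$ equals
\[
\bigcup_{j=0}^{k-1} (A \setminus \{s_1\})^*\, s_1\, (A \setminus \{s_2\})^*\, s_2 \cdots s_j\, (A \setminus \{s_{j+1}\})^*,
\]
i.e., a finite union of path-schemes each of whose bubbles is a star over a strict subset $A \setminus \{s_\ell\}$ of $A$.

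Substituting the second description into the central $X_A$ of the first gives, for each $j \in \{0, \ldots, k-1\}$, a path-scheme
\[
\tau_{i,j} = a_1 X_{\{a_1\}} \cdots a_r \bigl[ X_{A \setminus \{s_1\}} s_1 \cdots s_j X_{A \setminus \{s_{j+1}\}} \bigr] b_1 X_{A \setminus \{b_1\}} \cdots b_r,
\]
and the union $\bigcup_j L(\tau_{i,j})$ equals $L_i^{\mathrm{bad}}$. Every bubble of $\tau_{i,j}$ is a star over a proper subset of $A$, so $\weight{\tau_{i,j}}$ is a sum of vectors of the form $\overline{m}$ with $m < r$; hence it has zero in every coordinate $\geq r$, while $\weight{X_i} = \overline{r}$. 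Therefore $\weight{\tau_{i,j}} \lexlt \weight{X_i}$, which by the lex-order/addition compatibility propagates to $\weight{\rho_{i,j}} \lexlt \weight{\rho}$ after substitution.

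The only conceptual step is recognising that the complement of a subword-upward-closure aligns with the path-scheme vocabulary so perfectly: the subword-avoidance regex introduces exactly stars over proper subsets of $A$, which is what is needed for the weight drop. Once that is in place, everything reduces to bookkeeping with path-schemes and the lex-order arithmetic on $\nn^{|\Sigma|}$, so I do not anticipate any serious obstacle; the most care-demanding part is just verifying that the substituted object is still a syntactically valid path-scheme (alternating words and bubbles) and that $\bigcup_{i,j} L(\rho_{i,j})$ remains contained in $L(\rho)$.
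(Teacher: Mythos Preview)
Your proposal is correct and follows essentially the same approach as the paper: reduce to a per-gathering statement, unfold each gathering $X_i$ into its ``flat'' form with a single central star $X_A$, replace that central star by the standard subword-avoidance expression $\bigcup_j (A\setminus\{s_1\})^* s_1 \cdots s_j (A\setminus\{s_{j+1}\})^*$, and observe that every remaining bubble is a star over a proper subset of $A$ so the weight strictly drops. The paper packages the substitution using the notation $\fl{X}[\sigma_j/n]$ and $\rho[\sigma_i^j/i]$, but the content is identical to what you wrote.
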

%\begin{proof}
%	Let $\rho = u_0 X_1 \ldots X_n u_n$ and $\vec{w} = (w_1, \ldots, w_n)$.
%	For each $X_j$, which is a gathering, we obtain a finite set of path-schemes $\rho[\sigma^j_i / j]$ where each $\sigma^j_i$ is a $\rho_i$ from Lemma~\ref{lem:gathering-decomp-2}.
%	Weight reduction follows from the reduction of $\sigma^j_i$ over $X_j$.
%	Language inclusions follow from language inclusion from Lemma~\ref{lem:gathering-decomp-2}.
%	That is, if
%	$u \in L(\rho) \setminus (\vec{w})^\uparrow_\rho$
%	then pick a $\rho$-factor $\vec{u} = (u_1, \ldots, u_n)$ of $u$.
%	There is some $j$ such that $\centr(w_j) \not\preceq \centr(u_j)$.
%	So $u_j \in L(\sigma^j_i)$ for some $i$.
%	Hence $u \in L(\rho[\sigma^j_i / j])$.
%	That each $L(\rho[\sigma^j_i / j]) \subseteq L(\rho)$ follows directly from Lemma~\ref{lem:gathering-decomp-2}.
%\end{proof}

In the remainder of this section, we will prove our three ingredients.
First, we first show how they together imply~\cref{perfect-path-scheme-decomposition}. (A pictorial representation of this implication can be found in~\cref{fig:Venn-diagram}).

\begin{proof}[Proof of \cref{perfect-path-scheme-decomposition}]
Let $\rho$ be a perfect path-scheme. By the Lifting Runs Theorem,
we can compute a word $w \in L(\rho)$ and a $\rho$-factor $\vec{w}$ such that
$(\vec{w})^\uparrow_\rho \subseteq L(\rho) \cap L_\Sigma^{\bx,\by}$.
Then, by the Path-Scheme Theorem for $\rho$-Upward Closures, we can compute a path-scheme $\sigma$
such that $L(\sigma) = (\vec{w})^\uparrow_\rho$.
Finally, by the Path-Schemes Theorem for Complements, we can compute
path-schemes $\rho_1, \ldots, \rho_m$ such that $L(\rho) \setminus (\vec{w})^\uparrow_\rho \subseteq \bigcup_i L(\rho_i) \subseteq L(\rho)$
and $\weight{\rho_i} \lexlt \weight{\rho}$ for all $i$.
By construction, we have that $L(\rho) = L(\sigma) \cup \bigcup_i L(\rho_i)$, \
$L(\sigma)\subseteq L_\Sigma^{\bx,\by}$ and $\weight{\rho_i} \lexlt \weight{\rho}$ for all $i$,
proving Theorem~\ref{perfect-path-scheme-decomposition}.
\end{proof}

\usetikzlibrary{patterns}
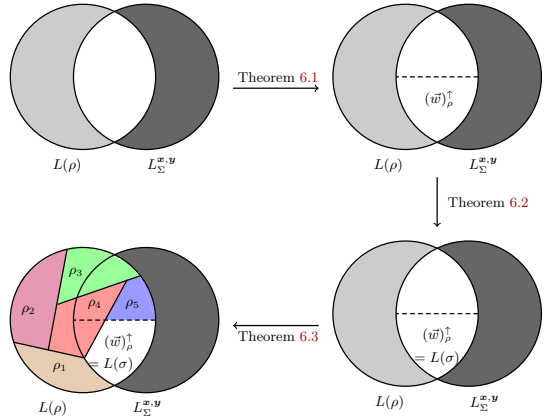
\begin{figure}[h]
	\centering
	\scalebox{0.6}{
	\begin{tikzpicture}[
		node distance=1.5cm and 2.5cm,
		box/.style={inner sep=0pt, outer sep=0pt},
		every node/.style={font=\small}
		]

		\node[box] (A) {\begin{tikzpicture}[scale = 0.7]
				\centering

		% Draw the main Venn circles
		\begin{scope}
			\fill[gray!40] (-1,0) circle (2.3); % Left circle (light gray)
			\fill[black!60] (1,0) circle (2.3); % Right circle (darker gray)
			\node at (-1.5,-2.8) {$L(\rho)$};
			\node at (1.5,-2.8) {$L^{\bx,\by}_\Sigma$};
		\end{scope}

		\begin{scope}
			\clip (-1,0) circle (2.3);
			\fill[white] (1,0) circle (2.3);
		\end{scope}

		% Draw the outlines last
		\draw[thick] (-1,0) circle (2.3);
		\draw[thick] (1,0) circle (2.3);
	\end{tikzpicture}
};

		\node[box, right = of A] (B) {\begin{tikzpicture}[scale = 0.7]
		\centering

		% Draw the main Venn circles
		\begin{scope}
			\fill[gray!40] (-1,0) circle (2.3); % Left circle (light gray)
			\fill[black!60] (1,0) circle (2.3); % Right circle (darker gray)
			\node at (-2,-2.8) {$L(\rho)$};
			\node at (1,-2.8) {$L^{\bx,\by}_\Sigma$};
		\end{scope}

		% Label below each large circle

		% Intersection and inner circle
		\begin{scope}
			\clip (-1,0) circle (2.3);
			\fill[white] (1,0) circle (2.3);
			\node at (-0.4,-0.7) {$(\vec{w})^\uparrow_\rho$};
		\end{scope}

		% Inner circle in the intersection
		%		\draw[thick] (1.15,-0.75) arc[start angle=0, end angle=180, radius=1.15];

		\draw[thick,densely dashed] (1.3,0) -- (-1.3,0);
		% Draw the outlines last
		\draw[thick] (-1,0) circle (2.3);
		\draw[thick] (1,0) circle (2.3);
	\end{tikzpicture}
};

		\node[box, below = of B] (C) {\begin{tikzpicture}[scale = 0.7]
		\centering

		% Draw the main Venn circles
		\begin{scope}
			\fill[gray!40] (-1,0) circle (2.3); % Left circle (light gray)
			\fill[black!60] (1,0) circle (2.3); % Right circle (darker gray)
			\node at (-1.5,-2.6) {$L(\rho)$};
			\node at (1.5,-2.6) {$L^{\bx,\by}_\Sigma$};
		\end{scope}

		% Label below each large circle

		% Intersection and inner circle
		\begin{scope}
			\clip (-1,0) circle (2.3);
			\fill[white] (1,0) circle (2.3);
			\node at (0,-0.4) {$(\vec{w})^\uparrow_\rho$};
			\node at (0,-1.2) {$= L(\sigma)$};
		\end{scope}

		% Inner circle in the intersection
		%		\draw[thick] (1.15,-0.75) arc[start angle=0, end angle=180, radius=1.15];

		\draw[thick,densely dashed] (1.3,0) -- (-1.3,0);
		% Draw the outlines last
		\draw[thick] (-1,0) circle (2.3);
		\draw[thick] (1,0) circle (2.3);
	\end{tikzpicture}
};

		\node[box, left = of C] (D) {\begin{tikzpicture}[scale = 0.7]
		\centering

		% Draw the main Venn circles
		\begin{scope}
			\fill[brown!40] (-1,0) circle (2.3); % Left circle (light gray)
			\fill[black!60] (1,0) circle (2.3); % Right circle (darker gray)
			\node at (-1.5,-2.8) {$L(\rho)$};
			\node at (1.5,-2.8) {$L^{\bx,\by}_\Sigma$};
		\end{scope}

		% Label below each large circle

		% Intersection and inner circle
		\begin{scope}
			\clip (-1,0) circle (2.3);
			\fill[white] (1,0) circle (2.3);
			\node at (0.6,-0.7) {$(\vec{w})^\uparrow_\rho$};
			\node at (0.6,-1.4) {$= L(\sigma)$};
		\end{scope}

		% Inner circle in the intersection
		%		\draw[thick] (1.15,-0.75) arc[start angle=0, end angle=180, radius=1.15];

		\draw[fill = red!40, draw = none] (-2.1,-0.95) -- (-1.8,0.5) -- (0.4,1.28) -- (-0.95,-1.2);
		\draw[fill = blue!40, draw = none] (0.4,1.28) -- (0.8,1.4) -- (1.3,0) -- (-0.3,0);
		\draw[fill = blue!40, draw = none] (0.8,1.4) to [out = -45, in = 100] (1.3,0);
		\draw[fill = green!40, draw = none] (-1.8,0.5) -- (-1.5,2.25) -- (0.8, 1.4);
		\draw[fill = green!40, draw = none] (-1.5,2.25) to [out = 30, in = -10] (0.6, 1.4);
		\draw[fill = purple!40, draw = none] (-3.2,-0.7) -- (-3,1.2) -- (-1.5, 2.25) -- (-2.1,-0.95);
		\draw[fill = purple!40, draw = none] (-3.2,-0.7) -- (-3,1.2) -- (-1.5, 2.25) -- (-2.1,-0.95);
		\draw[fill = purple!40, draw = none] (-3,1.2) to [out = 60, in = 40] (-1.6, 2.15);
		\draw[fill = purple!40, draw = none] (-3.1,-0.6) to [out = -130, in = 120] (-2.9, 0.98);

		\draw[thick,densely dashed] (1.3,0) -- (-1.3,0);
		\draw[thick] (-3.2,-0.7) -- (-0.95,-1.2);
		\draw[thick] (-1.5,2.25) -- (-2.1,-0.95);
		\draw[thick] (0.8,1.4) -- (-1.8,0.5);
		\draw[thick] (0.4,1.28) -- (-0.95,-1.2);
		%		\draw[thick] (-3.5,0) to[out=20, in=160] (-0.1,0);
		%		\draw[thick] (-3.2,0.7) to[out=30, in=150] (-0.4,0.7);
		%		\draw[thick] (-3.1,1.2) to[out=40, in=140] (-0.5,1.2);

		\begin{scope}
			\node at (0.8,0.5) {$\rho_5$};
			\node at (-0.4,0.5) {$\rho_4$};
			\node at (-1,1.5) {$\rho_3$};
			\node at (-2.5,0.3) {$\rho_2$};
			\node at (-1.5,-1.5) {$\rho_1$};
		\end{scope}

		% Draw the outlines last
		\draw[thick] (-1,0) circle (2.3);
		\draw[thick] (1,0) circle (2.3);
	\end{tikzpicture}
};

\draw[->, thick] ($(A.east)+(0.3,0)$) -- ($(B.west)-(0.3,0)$) node[midway, above] {~\cref{thm:lifting-runs}};
\draw[->, thick] ($(B.south)-(0,0.1)$) -- ($(C.north)+(0,0.3)$) node[midway, right] {~\cref{thm:path-scheme-upward-closure}};
\draw[->, thick] ($(C.west)-(0.3,0)$) -- ($(D.east)+(0.3,0)$) node[midway, below] {~\cref{thm:path-scheme-complement}};

	\end{tikzpicture}
	}
	\caption{Venn-diagram illustration of the proof of~\cref{perfect-path-scheme-decomposition}. \cref{thm:lifting-runs} allows us to find a $\rho$-upward closed set $(\vec{w})^\uparrow_\rho$ that belongs to both the circles, i.e., to both $L(\rho)$ and $L_\Sigma^{\bx,\by}$. \cref{thm:path-scheme-upward-closure} finds a path-scheme $\sigma$ that equals this set. Along with this $\sigma$,~\cref{thm:path-scheme-complement} then allows us to split the left circle into multiple fragments (here $\rho_1,\dots,\rho_5$) such that each $\rho_i$ has weight strictly smaller than $\rho$. Notice that the fragments need not be disjoint with $\sigma$; for example, $\rho_4$ intersects with $\sigma$.}
	\label{fig:Venn-diagram}
\end{figure}

\subsection{Proof of the First Ingredient: Lifting Runs Theorem}

In this subsection, we will prove our first ingredient, namely the Lifting Runs Theorem (\cref{thm:lifting-runs}). First we prove the following special case of this theorem for \emph{gatherings}.
\begin{theorem}[Lifting Runs Theorem for Gatherings]\label{thm:gathering-lifting-runs}
	Suppose $X = X^{a_1,\ldots,a_n}_{b_1,\ldots,b_n}$ is a gathering such that $L(X) \cap L_\Sigma^{\bx,\by} \neq \emptyset$. Then, we can compute a word $w \in L(X) \cap L_\Sigma^{\bx,\by}$ such that
	$(w)^\uparrow_X \subseteq L(X) \cap L_\Sigma^{\bx,\by}$.
\end{theorem}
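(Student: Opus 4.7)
The plan is to construct $w$ as a padded version of a base word in $L(X) \cap L_\Sigma^{\bx,\by}$, then exhibit firing sequences for every $w' \in (w)^\uparrow_X$ by suitable redistribution of firing fractions, following the intuitions developed in \cref{subsec:duplication,subsec:padding,subsec:gatherings}.

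First, use \cref{regular-intersection-decidable} to compute a base word $\bar w \in L(X) \cap L_\Sigma^{\bx,\by}$ together with a valid firing sequence. Second, construct $w$ from $\bar w$ by inserting $K$ additional copies of every letter of the gathering into the center region of $\bar w$, for a sufficiently large $K$. The duplication principle of \cref{subsec:duplication} ensures $w \in L_\Sigma^{\bx,\by}$, and since insertions in the center region preserve both first- and last-appearance records, $w$ remains in $L(X)$. For $K$ large enough, the associated run visits configurations inside the center region with sufficient slack on every counter.

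Third, show $(w)^\uparrow_X \subseteq L_\Sigma^{\bx,\by}$. Fix $w' = a_1 w_1' \cdots a_n w_n' b_1 v_1' \cdots b_n v_n' \in (w)^\uparrow_X$; by definition $\centr(w) \preceq \centr(w')$, so $w_n'$ contains $w_n$ as a subword, and in particular contains at least $K$ copies of every letter of the gathering. Build a firing sequence for $w'$ in three phases. (i) In the prefix $a_1 w_1' \cdots a_n$, assign very small positive fractions keeping all configurations non-negative. (ii) In the center $w_n'$, identify the subword positions matching $w_n$; on these positions use the fractions from $w$'s run scaled by $(1-\epsilon)$, and on the extra positions use vanishing fractions of order $\epsilon$. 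The slack achieved in Step~2 preserves non-negativity under this perturbation. (iii) In the suffix $b_1 v_1' \cdots b_n v_n'$, pick fractions that simultaneously absorb the residual excess from Phase~(ii) and land exactly at $\by$.

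The main obstacle is the consistency verification across these three phases: the fractions must simultaneously satisfy the target-reaching equation $\bx + \sum_i \alpha_i t_i = \by$ and non-negativity at every intermediate configuration. The gathering structure is essential here, since every letter's first appearance precedes every letter's last appearance: therefore every counter-decrementing letter inserted in $w'$ has both a prior increment available (from the prefix or earlier center) and a later compensating firing available (in the center or suffix). Combined with the slack obtained in Phase~(ii), this enables an inductive construction of valid firing fractions, showing that the associated polyhedron of firing sequences is non-empty. Effectiveness, i.e., computing $w$, $K$, and $\epsilon$ explicitly, follows from the effective computability of $\bar w$ and explicit bounds on $K$ derived from the counter values along $\bar w$'s run.
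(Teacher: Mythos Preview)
Your proposal has a genuine gap: it bypasses the structural result from Blondin and Haase that the paper relies on, and without it, Phases~(i) and~(iii) of your argument do not go through.

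The paper does not start from an arbitrary $\bar w\in L(X)\cap L_\Sigma^{\bx,\by}$. Instead it invokes \cite[Proposition~4.5]{blondinLogicsContinuousReachability2017} to obtain a very special $w=u\cdot\centr(w)\cdot v$ with $u=a_1\cdots a_n$, $v=b_1\cdots b_n$, and a run $\bx\xrightarrow{\alpha u}\bx'\xrightarrow{\beta\centr(w)}\by'\xrightarrow{\gamma v}\by$ satisfying three monotonicity properties: along $r_1$, once a counter becomes non-zero it stays non-zero; along $r_3$, once a counter becomes zero it stays zero; and along $r_2$, any counter that was ever non-zero in $r_1$ or $r_3$ is non-zero throughout. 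These properties are precisely what make the redistribution work. For the prefix, any extra letter in $u'$ lies in $\{a_1,\ldots,a_i\}$ for some $i$, i.e.\ it has already appeared; the first monotonicity property then guarantees the counters it decrements are strictly positive, so an $\epsilon$-fraction can be safely peeled off and inserted. The suffix is handled by the dual argument. None of this is available for a generic $\bar w$.

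Concretely, your Phase~(i) says ``assign very small positive fractions keeping all configurations non-negative'', but the prefix of $w'$ is an arbitrary word of the form $a_1w_1'\cdots a_n$ with $w_i'\in\{a_1,\ldots,a_i\}^*$, unrelated to the prefix of your $w$; you give no mechanism to reach the configuration at which the center run of $w$ begins. Phase~(iii) has the mirror problem. Finally, your ``slack from $K$ copies'' claim is unsound as stated: the duplication principle of \cref{subsec:duplication} halves fractions to preserve the same intermediate configurations, so inserting $K$ copies does not enlarge any counter value along the run. The paper's approach avoids all of this by using the monotonicity of the Blondin--Haase run rather than any quantitative slack.
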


\begin{proof}[Proof Sketch]
	To prove this theorem, we begin with a fact from the proof of~\cite[Proposition 4.5]{blondinLogicsContinuousReachability2017}.
	It states that if $L(X) \cap L_\Sigma^{\bx,\by} \neq \emptyset$ then we can compute
	a word $w \in L(X) \cap L_\Sigma^{\bx,\by}$ such that $w = u \cdot \centr(w) \cdot v$
	where $u = a_1 a_2 \ldots a_n$, $v = b_1 b_2 \ldots b_n$, the set of letters that appear in $\centr(w)$ is exactly $\{a_1,\dots,a_n\}$ and the following property is satisfied:
	There exist two vectors $\bx', \by'$ and three sequences of firing fractions $\alpha, \beta, \gamma$ such that $\bx \xrightarrow{\alpha u }\bx' \xrightarrow{\beta \centr(w)} \by' \xrightarrow{\gamma v} \by$ and
	\begin{itemize}
		\item If a counter is non-zero at any point in the run $r_1 = \bx \xrightarrow{\alpha u} \bx'$
		then it stays non-zero from that point onwards in $r_1$.
		\item If a counter is zero at any point in the run $r_3 = \by' \xrightarrow{\gamma v} \by$
		then it stays zero from that point onwards in $r_3$.
		\item If a counter was non-zero at any point in either $r_1$ or $r_3$,
		then it stays non-zero along the run $r_2 = \bx' \xrightarrow{\beta \centr(w)} \by'$.\footnote{Technically, this fact was proven for cyclic reachability in CVAS with finitely many control states, i.e., reachability in CVAS with states with the same starting and the final state. However, the same fact applies in our setting as well, because any gathering $X$ over a CVAS $\Sigma$ can be converted into a CVAS with states where the starting and the final state are the same.}
	\end{itemize}

	With these properties at our disposal, we can prove that not only is $w \in L(X) \cap L_\Sigma^{\bx,\by}$ but that we have the stronger property $(w)^\uparrow_X \subseteq L(X) \cap L_\Sigma^{\bx,\by}$. To this end suppose $w' \in (w)^\uparrow_X$. By definition, $w' \in L(X)$
	and so it suffices to prove that $w' \in L_\Sigma^{\bx,\by}$.
	Since $X$ is a gathering and since $w' \in (w)^\uparrow_X$, we can split $w'$ as $w' = u' \cdot \centr(w') \cdot v'$ such that $\centr(w) \preceq \centr(w')$.
	Furthermore, since $u = a_1 a_2 \ldots a_n$ and $v = b_1 b_2 \ldots b_n$, by definition of a gathering,
	we also have that $u \preceq u'$ and $v \preceq v'$.

	We can now argue that there exist sequences of firing fractions $\alpha', \beta'$ and $\gamma'$
	such that $\bx \xrightarrow{\alpha' u' }\bx' \xrightarrow{\beta' \centr(w')} \by' \xrightarrow{\gamma' v'} \by$. Intuitively, we do this as follows: For each transition $t$, we show that we can take away a small part of its firing fraction from the sequences $\alpha u, \beta \centr(w)$ and $\gamma v$ and redistribute this small part amongst its additional occurrences present in $u', \centr(w')$ and $v'$ respectively. When this redistribution is done naively and arbitrarily, it might happen that some counter value becomes negative at some point.
	To prevent this, we use the special properties of $w$ to show that
	there is a way of redistribution which forces the counters to stay non-negative at each step. The formal details of this redistribution can be found in ~\cref{app:liftingruns}.
\end{proof}

\begin{example}
	We give here an example that illustrates the main techniques behind the redistribution of firing fractions in the proof of~\cref{thm:gathering-lifting-runs}.
	Let us consider the example from Section~\ref{sec:intro-examples} where $\Sigma = \{a,b,c\}$ with $a = (1,0,0), b = (-1,1,0)$ and $c = (0,-1,1)$. Let $X$ be the gathering $X^{a,b,c}_{a,b,c}$
	and let $\bx = \cconf{0,0,0}, \by = \cconf{0,\frac{1}{4},\frac{1}{4}}$. Let $w = abcbacabc \in L(X)$. Notice that $w$ can be split as $\underline{abc}\ \underline{bac}\ \underline{abc}$ where the first part is its first-appearance record, the middle part is its center and the last part is its last-appearance record.
	Corresponding to this split, we also saw the following three runs in~\cref{eq:two}
	\begin{equation}\label{eq:run-far}
		\cconf{0,0,0} \act{\frac{1}{4}a} \cconf{\frac{1}{4},0,0} \act{\frac{1}{8}b} \cconf{\frac{1}{8},\frac{1}{8},0} \act{\frac{1}{16}c} \cconf{\frac{1}{8},\frac{1}{16},\frac{1}{16}}
	\end{equation}

	\begin{equation}\label{eq:run-middle}
		\cconf{\frac{1}{8},\frac{1}{16},\frac{1}{16}} \act{\frac{1}{16}b} \cconf{\frac{1}{16},\frac{1}{8},\frac{1}{16}} \act{\frac{1}{8}a} \cconf{\frac{3}{16},\frac{1}{8},\frac{1}{16}} \act{\frac{1}{16}c} \cconf{\frac{3}{16},\frac{1}{16},\frac{1}{8}}
	\end{equation}

	\begin{equation}\label{eq:run-lar}
		\cconf{\frac{3}{16},\frac{1}{16},\frac{1}{8}} \act{\frac{1}{8}a} \cconf{\frac{5}{16},\frac{1}{16},\frac{1}{8}} \act{\frac{5}{16}b} \cconf{0,\frac{3}{8},\frac{1}{8}} \act{\frac{1}{8}c} \cconf{0,\frac{1}{4},\frac{1}{4}}
	\end{equation}

	Notice that if we set $\bx' = \cconf{\frac{1}{8},\frac{1}{16},\frac{1}{16}}$ and $\by' = \cconf{\frac{3}{16},\frac{1}{16},\frac{1}{8}}$, then this word $w$ and these three runs together satisfy the properties mentioned in the proof of~\cref{thm:gathering-lifting-runs}. Hence, according to this theorem, any word in
	$(w)^\uparrow_X$ must also be fireable from $\bx$ to reach $\by$.

	Let $w' = \underline{abac} \ \underline{bbac} \ \underline{acbc} \in (w)^\uparrow_X$. To show that ~\cref{thm:gathering-lifting-runs} applies to $w'$, we must now construct firing fractions $\alpha', \beta' $ and $\gamma'$ such that
	\[\cconf{0,0,0} \act{\alpha' abac} \cconf{\frac{1}{8},\frac{1}{16},\frac{1}{16}} \act{\beta' bbac} \cconf{\frac{3}{16},\frac{1}{16},\frac{1}{8}} \act{\gamma' acbc} \cconf{0,\frac{1}{4},\frac{1}{4}}\]

	In the proof of~\cref{thm:gathering-lifting-runs}, we get these three runs for $w'$ by modifying the corresponding runs for $w$. We first see how to construct $\alpha'$ by modifying the run in ~\cref{eq:run-far}. First, we inspect~\cref{eq:run-far} and pick a small enough $\epsilon > 0$ so that we can
	fire the modified run $(\frac{1}{4}-\epsilon) a, \frac{1}{8} b, \frac{1}{16} c$ from $\cconf{0,0,0}$ without any counter values becoming negative. It is always possible to find such an $\epsilon$ because this run satisfies the property that once a counter becomes non-zero, it stays non-zero throughout. (In this case, it suffices to take $\epsilon = \frac{1}{16}$).
	Then, we plug back the remaining $\epsilon a$ portion into this modified run
	to get the same effect as the original run. Concretely speaking we get the following run.

	\begin{equation*}
		\cconf{0,0,0} \act{\frac{3}{16}a} \cconf{\frac{3}{16},0,0} \act{\frac{1}{8}b} \cconf{\frac{1}{16},\frac{1}{8},0} \act{\frac{1}{16}a} \cconf{\frac{1}{8},\frac{1}{8},0} \act{\frac{1}{16}c} \cconf{\frac{1}{8},\frac{1}{16},\frac{1}{16}}
	\end{equation*}

	Doing exactly the same procedure, we can modify~\cref{eq:run-middle} to get the following run.

	\begin{equation*}
		\cconf{\frac{1}{8},\frac{1}{16},\frac{1}{16}} \act{\frac{1}{32}b}
		\cconf{\frac{3}{32},\frac{3}{32},\frac{1}{16}} \act{\frac{1}{32}b} \cconf{\frac{1}{16},\frac{1}{8},\frac{1}{16}} \act{\frac{1}{8}a} \cconf{\frac{3}{16},\frac{1}{8},\frac{1}{16}} \act{\frac{1}{16}c} \cconf{\frac{3}{16},\frac{1}{16},\frac{1}{8}}
	\end{equation*}

	Finally, we do the same procedure, but in reverse for the last run. More precisely, we pick $\epsilon > 0$ so that it is possible
	to fire the modified run $\frac{1}{8}a, \frac{5}{16} b, (\frac{1}{8}-\epsilon) c$ to reach $\cconf{0,\frac{1}{4},\frac{1}{4}}$ without any counter values becoming negative.  It is always possible to take such an $\epsilon > 0$ because this run satisfies the property that once a counter becomes zero, it stays zero throughout. (In this case, we see that it suffices to take $\epsilon = \frac{1}{32}$). We plug the remaining $\epsilon c$ portion into this modified run
	to get the following

	\begin{equation*}
		\cconf{\frac{3}{16},\frac{1}{16},\frac{1}{8}} \act{\frac{1}{8}a}
		 \cconf{\frac{5}{16},\frac{1}{16},\frac{1}{8}} \act{\frac{1}{32}c}
		 \cconf{\frac{5}{16},\frac{1}{32},\frac{5}{32}} \act{\frac{5}{16}b} \cconf{0,\frac{11}{32},\frac{5}{32}} \act{\frac{3}{32}c} \cconf{0,\frac{1}{4},\frac{1}{4}}
	\end{equation*}

	Altogether, the above discussion shows that we have the required three runs for $w'$. This proves that that $w' \in L^{\bx,\by}_\Sigma$.

	We stress that this procedure of redistributing runs is possible only because of the special properties of the runs. For instance, suppose instead of~\cref{eq:run-far}, we had the following run
	\begin{equation*}
		\cconf{0,0,0} \act{\frac{1}{4}a} \cconf{\frac{1}{4},0,0} \act{\frac{1}{4}b} \cconf{0,\frac{1}{4},0} \act{\frac{1}{8}c} \cconf{0,\frac{1}{8},\frac{1}{8}}
	\end{equation*}

	Then, notice that we cannot redistribute the fraction $1/4$ for $a$ without either making the first counter go below zero or without also redistributing the
	fraction $1/4$ for $b$. This means that it would not be possible to modify this run to also get a run for $abac$ between $\cconf{0,0,0}$ and $\cconf{0,\frac{1}{8},\frac{1}{8}}$ by doing the same procedure as above.
\end{example}

%\bala{Should this proof sketch be completely removed? In its place, we can state that it follows by a careful analysis of techniques from Blondin-Haase that can be found in the appendix.}

We can now prove the Lifting Runs theorem for any perfect path-scheme.

\begin{proof}[Proof of Theorem~\ref{thm:lifting-runs}]
	Let $\rho$ be any perfect path-scheme of the form $\rho = u_0 X_1 \dots X_n u_n$.
	Since it is perfect, there is a word $w \in L(\rho) \cap L_\Sigma^{\bx,\by}$.
	Let $\vec{w} = (w_1,\dots,w_n)$ be a $\rho$-factor of $w$.
	Hence, for each $0 \le i \le n$, there exists $\bx_i, \by_i$ with $\bx_0 = \bx$ and $\by_n = \by$ such that
	\begin{equation}\label{eq:lifting-runs-1}
		u_0 \in L_\Sigma^{\bx_0,\by_0}, \ w_1 \in L_{\Sigma}^{\by_0,\bx_1}, \ u_1 \in L_{\Sigma}^{\bx_1,\by_1} \ \cdots \cdots \ w_n \in L_\Sigma^{\by_{n-1},\bx_n}, \ u_n \in L_\Sigma^{\bx_n,\by_n}
	\end{equation}

	Applying~\cref{thm:gathering-lifting-runs} to $X_1,X_2,\dots,X_n$, we can compute words $w_1' \in L(X_1)$, $w_2' \in L(X_2)$, \dots, $w_n' \in L(X_n)$, such that
	\begin{equation}\label{eq:lifting-runs-2}
		(w_1')^\uparrow_{X_1} \subseteq L(X_1) \cap L_\Sigma^{\by_{0},\bx_1}, \
		(w_2')^\uparrow_{X_2} \subseteq L(X_2) \cap L_\Sigma^{\by_{1},\bx_2}, \ \cdots \cdots, \
		(w_n')^\uparrow_{X_n} \subseteq L(X_n) \cap L_\Sigma^{\by_{n-1},\bx_n}
	\end{equation}

	Let $w' = u_0 w_1' u_1 w_2' \ldots w_n' u_n$
	with the $\rho$-factor $\vec{w}' = (w_1',w_2',\ldots,w_n')$. By definition of $(\vec{w}')^\uparrow_\rho$ and ~\cref{eq:lifting-runs-1} and~\cref{eq:lifting-runs-2}
	it follows that
	$(\vec{w'})^\uparrow_\rho \subseteq L(\rho) \cap L_\Sigma^{\bx,\by}$.

	The above procedure gives a way to compute $w'$ when we know $\by_0,\bx_1, \by_1, \dots, \by_{n-1}, \bx_{n}$. It remains to compute $\by_0,\bx_1,\by_1,\ldots,\by_{n-1},\bx_{n}$. We can use an enumerate-and-check strategy: We enumerate $(2n)$-tuples of rational vectors and then check for each $0 \le i \le n$, whether $u_i \in L_\Sigma^{\bx_{i},\by_i}$ and $L(X_i) \cap L_{\Sigma}^{\by_{i-1},\bx_{i}} \neq \emptyset$. The former asks if a given word
	is in the intersection of a regular language and a CVAS language. The latter asks if
	the intersection of a regular language and a CVAS language is non-empty.
Since both these
	problems are decidable (even in $\NP$) by \cref{regular-intersection-decidable},
it follows that
	we can decide whether the current tuple of vectors satisfies the required conditions. Since $\rho$ is a perfect path-scheme,
	we are guaranteed that one of our guesses will succeed and so our algorithm will terminate at some point
	and compute the required sequence of vectors. This proves~\cref{thm:lifting-runs}.
	%    \todo{MH: should we explicitly mention that since it's NP the guess can be done in elementary time?}
\end{proof}

\subsection{Proof of the Second Ingredient: Path-Scheme Theorem for $\rho$-Upward Closures}

In this subsection, we will prove our second ingredient, namely the Path-Scheme Theorem for $\rho$-Upward Closures (\cref{thm:path-scheme-upward-closure}). To prove this theorem, we first need some definitions, which we state here. Let $X =  X_{a_1 \ldots a_n}^{b_1 \ldots b_n}$ be a gathering and let $A = \{a_1,\ldots, a_n\}$.  We let $\fl{X}$ be the path-scheme denoted by
\[
a_1 X_{\{a_1\}} a_2 X_{\{a_1, a_2\}} \ldots a_n
X_A
b_1 X_{A\setminus\{b_1\}}
b_2 X_{A\setminus\{b_1, b_2\}}
\ldots
X_{A\setminus\{b_1, \ldots, b_{n-1}\}}
b_n
\]
Note that $\fl{X}$ is a path-scheme that, by definition, accepts the same set of words as $X$, i.e., $L(\fl{X}) = L(X)$. Furthermore, if $w \in L(\fl{X})$ and 
$\vec{w} = (w_1,\ldots,w_{2n-1})$ is a  $\fl{X}$-factor of $w$, it is easily seen that the part $w_n$ in $\vec{w}$ corresponds to the bubble $X_A$ and is exactly equal to the center of $w$. For this reason, we will call the bubble $X_A$
as the central bubble of $\fl{X}$.

Finally, in our proof of this theorem, we will construct new path-schemes by substituting bubbles with path-schemes.
To this end, we define a notion of substitution for path-schemes as follows: Suppose $\rho = u_0 X_1 \ldots X_n u_n$ and $\rho' = v_0 Y_1 \ldots Y_m v_m$ are two path-schemes.
We define $\rho[\rho'/j]$ to be the path-scheme obtained by replacing the $j^{th}$ bubble in $\rho$ (i.e., $X_j$)
with the path-scheme $\rho'$. I.e.,
\[
\rho[\rho' / j]
= u_0 X_1 \ldots X_{j-1} u_{j-1}
v_0 Y_1 \ldots Y_m v_m
u_j X_{j+1} \ldots X_n u_n\,.
\]

Having set these definitions up, our strategy to prove~\cref{thm:path-scheme-upward-closure} is the same as the one that we used to prove~\cref{thm:lifting-runs}: First, prove the theorem for the special case of gatherings. Then use this to prove the theorem for the general case.

\begin{theorem}[Path-Scheme Theorem for $\rho$-Upward Closures of Gatherings] \label{thm:path-scheme-gathering-decomp}
	Given a gathering $X = X_{a_1 \ldots a_n}^{b_1 \ldots b_n}$ and
	a word $w \in L(X)$, we can compute a path-scheme $\rho$ such that
	$L(\sigma) = (w)^\uparrow_X$.
\end{theorem}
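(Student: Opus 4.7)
My plan is to exploit the flat form $\fl{X}$. Recall that $\fl{X}$ is a path-scheme with $L(\fl{X}) = L(X)$ whose bubbles expose the decomposition of each $w' \in L(X)$ as $w' = a_1 w_1' a_2 w_2' \cdots a_n \centr(w') b_1 v_1' \cdots b_n$, with the central bubble $X_A$ (the $n$-th bubble of $\fl{X}$) corresponding exactly to $\centr(w')$. Hence, to obtain a path-scheme for $(w)^\uparrow_X$ it suffices to replace $X_A$ with a path-scheme whose language is the subword-upward closure of $\centr(w)$ inside $A^*$; the surrounding stars and fixed letters in $\fl{X}$ will then automatically enforce membership in $L(X)$ together with the correct first- and last-appearance records.

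Concretely, I would write $\centr(w) = c_1 c_2 \cdots c_k$ with each $c_i \in A$, and define
\[
\rho_c \;=\; X_A\, c_1\, X_A\, c_2\, X_A\, \cdots\, X_A\, c_k\, X_A,
\]
whose language is exactly $\{y \in A^* : c_1 \cdots c_k \preceq y\}$, the standard representation of a subword-upward closure as a concatenation of stars and fixed letters. Then I would set $\sigma := \fl{X}[\rho_c / n]$, substituting $\rho_c$ for the central bubble of $\fl{X}$. Since every bubble of $\rho_c$ is a star, $\sigma$ is a well-formed path-scheme, and it is effectively computable from $X$ and $w$. The degenerate case $k = 0$ is handled by $\sigma := \fl{X}$, for which $L(\sigma) = L(X) = (w)^\uparrow_X$ since the constraint $\varepsilon \preceq \centr(w')$ is vacuous.

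Correctness should reduce to unfolding definitions. For the inclusion $L(\sigma) \subseteq (w)^\uparrow_X$, any $w' \in L(\sigma)$ inherits from $\sigma$ a factorisation that both witnesses $w' \in L(\fl{X}) = L(X)$ and exposes $\centr(w') = y_0 c_1 y_1 \cdots c_k y_k$ for some $y_j \in A^*$, which contains $c_1 \cdots c_k = \centr(w)$ as a subword. For $\supseteq$, any $w' \in (w)^\uparrow_X$ lies in $L(X) = L(\fl{X})$ and satisfies $\centr(w) \preceq \centr(w')$, so $\centr(w')$ can be split as $y_0 c_1 y_1 \cdots c_k y_k$ with $y_j \in A^*$, yielding a witnessing factorisation through $\sigma$. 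I do not anticipate a serious obstacle: the argument is essentially bookkeeping around $\fl{X}$, and the only delicate point is checking that the $n$-th bubble of $\fl{X}$ really isolates the center, which is precisely how $\fl{X}$ is built.
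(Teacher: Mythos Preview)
Your proposal is correct and follows essentially the same approach as the paper: the paper also substitutes the path-scheme $\xi = X_A c_1 X_A c_2 \cdots X_A c_m X_A$ (your $\rho_c$) for the central bubble of $\fl{X}$ and verifies that $w' \in L(\sigma)$ iff $w' \in L(X)$ and $\centr(w) \preceq \centr(w')$. Your write-up is in fact slightly more explicit than the paper's, which leaves the definition of $\xi$ implicit.
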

% \begin{proof}[Proof Sketch]
% 	Let $\centr(w) = c_1 \ldots c_m$ and let $A = \{a_1,\ldots,a_n\}$.
% 	Note that the set of words that are larger than or equal to $\centr(w)$ (with respect to the subword ordering) is given
% 	precisely by the path-scheme $\xi = X_A c_1 X_A c_2 \ldots X_A c_m X_A$.

% 	Now the set of words in $(w)^\uparrow_X$ are precisely those words in $L(X)$ whose center
% 	belongs to $\xi$. Furthermore, we know that $L(\fl{X}) = L(X)$ and also that the central bubble
% 	of $\fl{X}$ captures precisely the center of all words in $L(X)$. Hence, in order to get the desired $\sigma$, it simply suffices to replace the central bubble of $\fl{X}$
% 	with $\xi$. Since the central bubble appears as the $n^{th}$ bubble in $\fl{X}$, it then suffices to take $\sigma = \fl{X}[\xi/ n]$.
% 	The formal proof behind the correctness of $\sigma$ can be found in ~\cref{app:gathering-decomp-1}.
% \end{proof}
\begin{proof}
By the definition of $\fl{X}$ and $\xi$, it follows that $\sigma$ is precisely the path-scheme given by
\[a_1 X_{\{a_1\}} a_2 X_{\{a_1,a_2\}} \ldots a_n X_A c_1 X_A c_2 \ldots X_A c_m X_A b_1 X_{A\setminus\{b_1\}}
b_2 X_{A\setminus\{b_1, b_2\}}
\ldots
X_{A\setminus\{b_1, \ldots, b_{n-1}\}}
b_n
\]

It follows that any word $w'$ belongs to $L(\sigma)$ iff $w'$
can be written as
\[w' = a_1 w_1' a_2 w_2' \dots a_n w'' b_1 w'_{n} \dots w'_{2n-2} b_n\]
where
\begin{itemize}
	\item For each $i \in [1,n-1], w_i' \in L(X_{\{a_1,\ldots,a_i\}}) = \{a_1,\dots,a_i\}^*$,
	\item $w'' \in L(X_A c_1 X_A c_2 \ldots X_A c_m X_A) = L(\xi) \subseteq L(X_A)$,
	\item And for each $i \in [n,2n-2]$, $w_i' \in L(X_{A \setminus \{b_1,\dots,b_{i-n+1}\}}) = (A \setminus \{b_1,\dots,b_{i-n+1}\})^*$.
\end{itemize}
  
Hence, $w' \in L(\sigma)$ iff $w' \in L(X)$ and $\centr(w') \in L(\xi)$.
Since a word is in $L(\xi)$ iff it is larger than or equal to $\centr(w)$ with respect
to the subword ordering, it follows that $w' \in L(\sigma)$ iff $w' \in L(X)$ and $\centr(w) \preceq \centr(w')$. The latter is the definition of $(w)^\uparrow_X$ and so the proof is complete.
\end{proof}

We can now prove the Path-Scheme Theorem for $\rho$-Upward Closures.

\begin{proof}[Proof of~\cref{thm:path-scheme-upward-closure}]
	Let $\rho = u_0 X_1 u_1 \dots X_n u_n$ be a perfect path-scheme and take some $\rho$-factor $\vec{w} = (w_1,w_2,\dots,w_n)$. By~\cref{thm:path-scheme-gathering-decomp},
	for each $X_i$ and $w_i$, we can compute a path-scheme $\sigma_i$ such that $L(\sigma_i) = (w_i)^\uparrow_{X_i}$. By definition of $(\vec{w})^\uparrow_\rho$ it then follows that
	$(\vec{w})^\uparrow_\rho = L(\sigma)$ where $\sigma$ is given by
	$u_0 \sigma_1 u_1 \sigma_2 \dots \sigma_n u_n$.  This completes the proof of~\cref{thm:path-scheme-upward-closure}.
\end{proof}

\subsection{Proof of the Third Ingredient: Path-Scheme Theorem for Complements}

In this subsection, we prove our third ingredient, the Path-Scheme Theorem for Complements (\cref{thm:path-scheme-complement}). The strategy is similar to that of~\cref{thm:path-scheme-upward-closure}: First, prove the theorem for the special case of gatherings and then prove the general case. We begin with the case of gatherings.

\begin{theorem}[Path-Scheme Theorem for Complement over Gatherings] \label{thm:path-scheme-complement-gathering-decomp}
	Given a gathering $X = X_{a_1 \ldots a_n}^{b_1 \ldots b_n}$ and a word $w \in L(X)$,
	we can compute a finite set of path-schemes $\rho_1, \ldots, \rho_m$ such that
	$L(X) \setminus (w)^\uparrow_X = \bigcup_i L(\rho_i)$
	and $\weight{\rho_i} \lexlt \weight{X}$ for all $i$.
\end{theorem}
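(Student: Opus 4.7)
The plan is to reduce the problem to its ``central'' bubble via the flattening $\fl{X}$, replacing only the central star $X_A$ (whose support has size $n$) by a finite union of simpler path-schemes whose stars have support strictly smaller than $A$. Since $\weight{X}=\overline{n}$, this immediately gives the required strict decrease in lexicographic weight.

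\textbf{Step 1 (Reduction to the central star).} Recall $L(X)=L(\fl{X})$ and that in the $\fl{X}$-factor of any $w'\in L(\fl{X})$, the factor corresponding to the central bubble $X_A$ is precisely $\centr(w')$. Writing $A=\{a_1,\ldots,a_n\}$ and $c:=\centr(w)=c_1c_2\cdots c_k\in A^*$, we have $w'\in(w)_X^\uparrow$ iff $c\preceq\centr(w')$. Therefore
\[
L(X)\setminus (w)_X^\uparrow \;=\; \bigl\{\,w'\in L(\fl{X}) \;:\; c\not\preceq\centr(w')\,\bigr\}.
\]

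\textbf{Step 2 (Decomposing words avoiding $c$ as a subword).} Use the standard greedy-match decomposition of $A^*\setminus\{u:c\preceq u\}$: a word $u\in A^*$ does not contain $c$ as a subword iff the greedy left-to-right match of $c$ against $u$ gets stuck after exactly $j$ letters, for some $j\in\{0,1,\ldots,k-1\}$. This yields the disjoint union
\[
A^*\setminus\{u:c\preceq u\} \;=\; \bigsqcup_{j=0}^{k-1} L_j,\qquad
L_j \;:=\; (A\setminus\{c_1\})^*\,c_1\,(A\setminus\{c_2\})^*\,c_2\cdots c_j\,(A\setminus\{c_{j+1}\})^*.
\]
Each $L_j$ is itself a path-scheme whose bubbles are stars $X_{A\setminus\{c_i\}}$ with $|A\setminus\{c_i\}|=n-1$ (since $c_i\in A$).

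\textbf{Step 3 (Substitution and union).} Let $j^\star$ denote the index of the central star $X_A$ in $\fl{X}$, and define $\rho_j := \fl{X}[L_j/j^\star]$ for $j=0,\ldots,k-1$. By construction, $L(\rho_j)$ equals the set of words in $L(\fl{X})=L(X)$ whose center lies in $L_j$. Since the $L_j$'s partition $A^*\setminus\{u:c\preceq u\}$, Step 1 gives
\[
L(X)\setminus (w)_X^\uparrow \;=\; \bigcup_{j=0}^{k-1} L(\rho_j).
\]
(When $k=0$ the union is empty, matching the fact that $(w)_X^\uparrow=L(X)$.)

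\textbf{Step 4 (Weight bound).} We have $\weight{X}=\overline{n}$, so it suffices to show that the $n$-th component of $\weight{\rho_j}$ is zero. In $\fl{X}$ the only bubble with support of size $n$ is the central $X_A$; every other star $X_{\{a_1,\ldots,a_i\}}$ or $X_{A\setminus\{b_1,\ldots,b_i\}}$ has support of size at most $n-1$. After substituting $L_j$ (all of whose stars have support of size $n-1$) in place of $X_A$, no bubble of $\rho_j$ has support of size $n$, so the $n$-th coordinate of $\weight{\rho_j}$ is $0$, while the $n$-th coordinate of $\weight{X}$ is $1$. Hence $\weight{\rho_j}\lexlt\weight{X}$.

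The only nontrivial step is Step 2: setting up the (disjoint) greedy-match decomposition so that each piece $L_j$ is visibly a path-scheme whose stars are all over proper subsets of $A$; once this is in hand, substitution into $\fl{X}$ and the weight bookkeeping are mechanical.
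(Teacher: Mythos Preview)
Your proof is correct and follows essentially the same approach as the paper: both reduce to the central star of $\fl{X}$, use the greedy subword-matching decomposition of $A^*\setminus\{u:c\preceq u\}$ into path-schemes over stars of size $n-1$, substitute these for the central $X_A$, and observe that all remaining bubbles have support of size at most $n-1$. Your $L_j$ for $j\in\{0,\ldots,k-1\}$ is exactly the paper's $\sigma_{j+1}$ for $j+1\in[1,m]$ (an index shift), and your observation that the $L_j$ are pairwise disjoint is a pleasant extra not stated in the paper.
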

\begin{proof}[Proof Sketch]
	Let $\centr(w) = c_1 \ldots c_m$ and $A = \{a_1, \ldots, a_n\}$.
	For any $a \in A$, let $\overline{a} = A \setminus \{a\}$.
	% Note that a word is not larger than or equal to $\centr(w)$ (with respect to the subword ordering) iff either it does not have an appearance of $c_1$ at all (or) after its first appearance of $c_1$, it does not have an appearance of $c_2$ at all (or) after its first appearance of $c_2$ after its first appearance of $c_1$,
	% it does not have an appearance of $c_3$ at all and so on until the last letter $c_m$.
	Note that a word is not larger than or equal to $\centr(w)$ (with respect to the subword ordering) iff the largest prefix of $\centr(w)$ that is a subword is strictly smaller than $\centr(w)$.

	Hence, for every $i \in [1,m]$, we let $\sigma_i$ be the path-scheme given by
	$X_{\overline{c_1}} c_1 X_{\overline{c_2}} c_2 \ldots X_{\overline{c_i}}$.
	By the above, $\bigcup_{i \in [1,m]} L(\sigma_i)$ is exactly the set of all words
	that are not larger than or equal to $\centr(w)$.

	Now the set of words in $L(X) \setminus (w)^\uparrow_X$ are precisely those words in $L(X)$ whose center component
	belongs to $\bigcup_{i \in [1,m]} L(\sigma_i)$. Recall that $\fl{X}$ is a path-scheme such that $L(\fl{X}) = L(X)$. Furthermore, the central bubble
	of $\fl{X}$ captures precisely the center of all words in $L(X)$.
	Hence, if we replace the central bubble of $\fl{X}$ with $\sigma_1,\dots,\sigma_m$
	to get path-schemes $\rho_1,\dots,\rho_m$ it will follow that $L(X) \setminus (w)^\uparrow_X = \bigcup_{i \in [1,m]} L(\rho_i)$. Since the central bubble of $\fl{X}$ is its $n^{th}$ component, it then suffices to set each $\rho_i$ to be $\rho_i = \fl{X}[\sigma_i/ n]$.
	The formal proof behind the correctness of the equality $L(X) \setminus (w)^\uparrow_X = \bigcup_{i \in [1,m]} L(\rho_i)$ can be found in~\cref{app:gathering-decomp-2}.

	For any $i$, notice that any bubble of $\rho_i$ is either a bubble of $\fl{X}$ that is not the central bubble
	or a bubble of $\sigma_i$. By construction, in either case, it follows that, each bubble of $\rho_i$ is actually a star of the form $X_S$ with $|S| < |A|$. It follows then that the weight of each bubble of $\rho_i$
	has only 0 in the $j^{th}$ component for any $j \ge |A|$. Hence, the overall weight of $\rho_i$ also
	has only 0 in the $j^{th}$ component for any $j \ge |A|$. This implies that the weight of each $\rho_i$ is lexicographically strictly smaller than the weight of $X$ (which has a 1 in the $|A|^{th}$ component). This then completes the proof.
\end{proof}

We can now prove the Path-Schemes Theorem for Complements.

\begin{proof}[Proof of~\cref{thm:path-scheme-complement}]
	Let $\rho = u_0X_1u_1\dots X_n u_n$ be a perfect path-scheme and \popleasychair{}{also} let $\vec{w} = (w_1,\dots,w_n)$ be a $\rho$-factor. By~\cref{thm:path-scheme-complement-gathering-decomp},
	for each $X_i$ and $w_i$, we can compute a finite number of path-schemes $\sigma_i^1,\dots,\sigma_i^{m_i}$ such that $\bigcup_{j \in [1,m_i]} L(\sigma_i^j) = L(X_i) \setminus (w_i)^\uparrow_{X_i}$ and $\weight{\sigma_i^j} \lexlt \weight{X_i}$.
	Now, for each $i \in [1,n]$ and each $j \in [1,m_i]$, we can create
	a path-scheme $\rho_i^j = \rho[\sigma_i^j/i]$.

	First, we claim that if $w' \in L(\rho) \setminus (\vec{w})^\uparrow_\rho$,
	then $w' \in L(\rho_i^j)$ for some $i,j$. Indeed, if $w' \in L(\rho) \setminus (\vec{w})^\uparrow_\rho$
	then for any $\rho$-factor $\vec{w'} = (w'_1,\dots,w'_n)$ there must be some $i$ such that
	$\centr(w_i) \not \preceq \centr(w_i')$. Hence, $w_i' \in L(X_i) \setminus (w_i)^\uparrow_{X_i}$
	and so $w_i' \in L(\sigma_i^j)$ for some $j$. It then follows that $w' \in L(\rho_i^j)$.
	Hence, we can conclude that $L(\rho) \setminus (\vec{w})^\uparrow_\rho \subseteq \bigcup_{i,j} L(\rho_i^j)$.

	Second, since each $\rho_i^j$ was obtained from $\rho$ by substituting $X_i$ with $\sigma_i^j$
	and since $L(\sigma_i^j) \subseteq L(X_i)$, it follows that $L(\rho_i^j) \subseteq L(\rho)$.
	Hence we can conclude that $\bigcup_{i,j} L(\rho_i^j) \subseteq L(\rho)$.

	Finally for each $i,j$ we have that $\weight{\rho_i^j} = \weight{\rho} - \weight{X_i} + \weight{\sigma_i^j}$.
	Since $\weight{\sigma_i^j} \lexlt \weight{X_i}$, it follows that $\weight{\rho_i^j} \lexlt \weight{\rho}$. This completes the proof of~\cref{thm:path-scheme-complement}.
\end{proof}

\section{Comparison with the KLM Decomposition}
%%%%%%%%%%%%%%%%%%%%%%%%%%%%%%%%%%%%%
% Comparison I
%%%%%%%%%%%%%%%%%%%%%%%%%%%%%%%%%%%%%
In the previous three sections, we have presented our effective regularity proof for CVAS languages.
Our construction bears a striking similarity with the KLM-decomposition\footnote{The authors of this work discovered the similarity only after obtaining the results of this paper. However, some structural aspects and some terms in the construction (such as ``perfect'' for path-schemes) have been chosen after this discovery to match the KLM-decomposition.} from the theory of (discrete) vector addition systems~\cite{DBLP:conf/stoc/Kosaraju82,DBLP:journals/tcs/Lambert92,DBLP:journals/siamcomp/Mayr84}.
Since we find it surprising that a KLM-like decomposition (i)~can be used for a regularity proof and (ii)~applies to the continuous setting, we use this section to point out similarities and differences between our construction and the KLM-decomposition. \emph{Readers who only want to understand the new proofs may safely skip this section.}

\myparagraph{Vector Addition Systems}
Intuitively, a vector addition system is the same as a CVAS, except that the firing fraction at each step must be exactly equal to 1.
Formally, a \emph{vector addition system} (VAS) is syntactically the same
as a CVAS, and is given by a finite set of transitions $\Sigma \subset \zn^d$. 
A configuration of a VAS is a vector in $\nn^d$. A step from 
a configuration $\bx$ to $\by$ by means of a transition $t$ is possible if and only if $\by = \bx + t$. 
Since each configuration must be an element in $\nn^d$, a step is only possible if no component of the configuration becomes negative. Similar to CVAS, we can define runs and reachability in a VAS.
The \emph{reachability problem for VAS} is to decide, given a set of transitions $\Sigma$ and two configurations $\bx$ and $\by$, whether $\bx$ can reach $\by$.

A landmark result in the theory of VAS is that reachability
is decidable and in fact,
Ackermann-complete~\cite{DBLP:conf/focs/CzerwinskiO21,DBLP:conf/focs/Leroux21,DBLP:conf/lics/LerouxS19}.
The first known algorithm for reachability in VAS is the so-called \emph{KLM-decomposition} (named after three key contributors
Kosaraju~\cite{DBLP:conf/stoc/Kosaraju82},
Lambert~\cite{DBLP:journals/tcs/Lambert92}, and
Mayr~\cite{DBLP:journals/siamcomp/Mayr84}).  

\myparagraph{Why is the Similarity Surprising?} At first glance, it might seem
natural that ideas for VAS also apply to CVAS\@. However,
compared to other VAS overapproximations like $\zn$-VASS (where the counters
can become negative), it is not obvious that a CVAS can be
translated into a VAS with the same language. For example, consider the
following decision problem: Given a VAS, configurations $\bx,\by\in\nn^d$ and a
number $\ell$ (represented in binary), decide whether there is a run of length
$\ell$ from $\bx$ to $\by$. In the VAS setting, this problem is
$\PSPACE$-complete (since all intermediate counter values are at most exponential). However, for CVAS, this problem is
$\NEXP$-complete~\cite[Theorems~1.1 and
6.1]{DBLP:journals/pacmpl/BalasubramanianMTZ24}. Hence, unless $\PSPACE=\NEXP$,
one cannot translate a CVAS into a language-equivalent VAS in polynomial time.
Of course, \cref{th:main} implies that CVAS languages are also VAS languages
(since the former are regular), but we are not aware of any proof of this fact besides
\cref{th:main}.

\myparagraph{An Overview of KLM}
Roughly speaking, the KLM-decomposition works as follows (we adopt the terminology of Leroux and
Schmitz~\cite{DBLP:conf/lics/LerouxS15}). 
It maintains a list of structures
called \emph{marked witness graph sequences} (MWGS). Each MWGS describes a set
of \emph{pre-runs}, which intuitively are sequences of transitions that are not
necessarily valid VAS runs.  In the beginning, the list consists of a single
MWGS---essentially the VAS itself.

A crucial notion for MWGSes is that of a ``perfect MWGS''. A key insight in the KLM-decomposition is that a perfect MWGS has a valid run among its pre-runs.
Moreover, whether an MWGS is perfect can be decided using simpler algorithmic
tools: (i)~a coverability check, and (ii)~feasibility of a system of integer
linear inequalities.

The KLM algorithm first checks perfectness of each of the current MWGS. If one
of them is perfect, it concludes the existence of a valid run. On the other
hand, for every MWGS $\xi$ that is \emph{not} perfect, the algorithm can
decompose $\xi$ into finitely many MWGS that are smaller w.r.t.\ some
lexicographical ordering.  This implies that after finitely many steps, all
remaining MWGS (if any) must be perfect. Hence, if at the very end, some
perfect MWGS remains, we can conclude the existence of a run; otherwise, we can
conclude that no run exists.

\myparagraph{Similarities} 
Let us first highlight similarities between our construction and the KLM-decomposition.
Our construction follows a similar overall strategy: While KLM decomposes MWGS, our algorithm decomposes path-schemes.
Moreover perfectness in MWGS has a correspondence in perfectness of path-schemes, since
 the  perfectness condition in~\cref{subsec:path-schemes} consists
of a (i)~``coverability part'' and (ii)~a ``linear inequalities'' part.
%\begin{remark}\label{remark:explain-perfectness}
	Indeed, recall that a path-scheme $\rho$ is perfect if it is pre-perfect and $L(\rho)\cap
	L_{\Sigma}^{\bx,\by}\ne\emptyset$. First, pre-perfectness is closely related to coverability (of the
	zero vector) in CVAS along a run over a given word $w\in\Sigma^*$. This is because
	whether some word $w\in\Sigma^*$ can cover the zero vector (starting
	from a given initial vector) depends only on the order of first
	appearances of letters in $w$~\cite[Lemma
	3.10]{DBLP:journals/pacmpl/BalasubramanianMTZ24}. Similarly, last appearances determine ``backwards coverability''.
	%Similarly, the order
	%of last appearances dictates whether $w$ can ``backwards cover zero'',
	%meaning whether there is a run from some non-negative vector to a given
	%target vector.
	Furthermore, the proof of \cref{regular-intersection-decidable} (specifically, the $\NP$ upper bound in \cite[Thm~4.9]{blondinLogicsContinuousReachability2017}) is based on expressing $L(\rho)\cap
	L_{\Sigma}^{\bx,\by}\ne\emptyset$ as solvability of systems of
	linear inequalities.
%\end{remark}
% \bala{added this line because later on, we say we have a ``second decomposition'' procedure.}

Another similarity is that KLM-based proofs
have a step that constructs a run (or a subset of runs) inside a perfect MWGS.
Examples include Lambert's Iteration
Lemma~\cite[Lemma~4.1]{DBLP:journals/tcs/Lambert92} or Leroux and Schmitz's
discovery that the pre-runs of perfect MWGS actually form an \emph{adherent
ideal} of the set of all pre-runs below valid
runs~\cite[Lemma~VII.2, Theorem~VII.5]{DBLP:conf/lics/LerouxS15}. In our proof, the
construction of a regular subset $R_\rho$ of $L_{\Sigma}^{\bx,\by}$
in~\cref{thm:gathering-lifting-runs}, which is based
on~\cite[Proposition~4.5]{blondinLogicsContinuousReachability2017}, is similar
(in spirit and in technique) to these steps in KLM-based proofs.

\myparagraph{Differences}
Let us now point out differences between our proof and existing variants of the KLM-decomposition.

First, our decomposition does not stop when all path-schemes are perfect.
Instead, it is possible that the regular sets $R_\rho$ do not cover the entire
set $L_{\Sigma}^{\bx,\by}$: Classical KLM-decompositions only construct over-
and/or underapproximations of the set of runs\footnote{% These
Examples of overapproximations are semilinear overapproximations (which are implicit in
the various KLM-based algorithms for reachability, including the new algorithm
for PVASS by Guttenberg, Keskin, and Meyer~\cite{PVASS}), regular
overapproximations~\cite{DBLP:conf/icalp/HabermehlMW10,DBLP:conf/icalp/CzerwinskiHZ18}, or integer VASS
overapproximations~\cite{DBLP:conf/lics/Keskin024}. Underapproximations are given, e.g. by the
Iteration Lemma~\cite[Lemma 4.1]{DBLP:journals/tcs/Lambert92}  and related
facts, see above.},   whereas we need to find an exact representation.  
This necessitates the \emph{second decomposition step}
(\cref{perfect-path-scheme-decomposition}): In addition to constructing regular
sets $R_\rho\subseteq L_{\Sigma}^{\bx,\by}$ of perfect path-schemes, we then
need to decompose the perfect path-schemes further into path-schemes that cover
those runs that have not been caught by the already-built sets $R_\rho$.

Second, the steps to achieve perfectness of path-schemes
(\cref{path-scheme-decomposition}) are very different: In KLM, one decomposes
by bounding counters or by bounding the number of transition occurrences (which
is anyway not possible in the continuous semantics). Instead, our construction
relies on a word-combinatorial observation: \cref{star-decomposition}.

Third, in order for both decomposition steps to work together (and ensure
termination), our proof requires the novel notion of gatherings, which is built
so that it (i)~yields subsets of the CVAS language (\cref{thm:gathering-lifting-runs})
and (ii)~can be achieved by decomposing arbitrary path-schemes (\cref{path-scheme-decomposition}). Note that gatherings
are different from just the set of words with a particular first- and last
appearance record (as in
\cite[Lemma~3.10]{DBLP:journals/pacmpl/BalasubramanianMTZ24}): In a gathering, it is important
that all first appearances come before all last appearances.

\section{Lower Bound}\label{sec:lower-bound}
In this section, we prove \cref{thm:non-elem}. That is, we construct a CVAS $\Sigma_h\subset\zn^{5h}$ of size $O(h)$ and configurations $\bx_n,\by_n\in\qnz^{5h}$ of size $\le n$ such that any NFA for $L_{\Sigma_h}^{\bx_n,\by_n}$ has at least $\exp_h(n)$ states. 

%We have seen that the language of a CVAS, i.e.,
%the set of firing sequences of a CVAS, is regular.
%We also proved that we can effectively construct an NFA for this language, i.e.,
%there is an algorithm that, given a CVAS, constructs NFA for the set of its firing sequences.
%The proof techniques used to come up with this algorithm relied on the well-foundedness of the lexicographic ordering and so does not immediately give a running time on this procedure. It would be natural to question if there might a simpler algorithm that produces an NFA whose size is elementary in the size of the given CVAS. In this section, we unconditionally show that this
%is not possible, by showing that the size of the smallest NFA for a CVAS must grow as a non-elementary function of the size of the CVAS.
%\newcommand{\TOWER}{\mathrm{tower}}
%\newcommand{\step}{\mathsf{step}}
Let us first give an overview of the constructed CVAS\@. We will describe $\Sigma_h$ implicitly by constructing transition sequences $w_1,\ldots,w_h$, $r_1,\ldots,r_h$, and $u$ over the transition set
$\Sigma_h$. Thus, $\Sigma_h$ will just consist of those transitions occurring in $w_1,\ldots,w_h$ (which have 6 transitions each), $r_1,\ldots,r_h$ (which have one transition each), and $u$ (which has $2h$ transitions). We will be interested in transition sequences of the form
\begin{align} w_1^{\ell_1} r_1 w_2^{\ell_2} r_2 \cdots w_h^{\ell_h}r_h u\,, \label{lower-bound-runs}\end{align}
that are ``short'' in the following sense:
\begin{definition}[Short run]
A run as in \eqref{lower-bound-runs} is called \emph{short} if
$\ell_1\le n$ and $\ell_{i+1}\le \ell_i\cdot 2^{\ell_i}~\text{for $i\in[1,h-1]$}$.
\end{definition}
In fact, our construction will ensure that there is only one short run from $\bx_n$ to $\by_n$ of the form in \eqref{lower-bound-runs} and that this run will ``max out'' the bounds above.
\begin{definition}[Maxed out run]
	A short run as in \eqref{lower-bound-runs} is called \emph{maxed out} if $\ell_1=n$ and $\ell_{i+1}=\ell_i\cdot 2^{\ell_i}$ for each $i\in[1,h-1]$. 
\end{definition}
% \mh{Rephrased the above so that short is introduced before the unique maxed out run is brougt into play. I found the definition of short was undermined by knowing it could only be maxed out -- it made me wonder why have a separate notion short notion at all?}
Once we define all details of $\Sigma_h$, $\bx_n,\by_n$, we will show:
\begin{lemma}\label{lower-bound-unique-short-run}
The maxed out run leads from $\bx_n$ to $\by_n$.
Moreover, the only short run of the form in \eqref{lower-bound-runs} from $\bx_n$ to $\by_n$ is the maxed out run.
\end{lemma}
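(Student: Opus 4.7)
The plan is to prove the two claims jointly by induction on the block index $i \in [1,h]$, tracking a small set of ``interface'' counter values through each block $w_i^{\ell_i} r_i$. The construction of $\Sigma_h$ (to be given) will be engineered so that each gadget $w_i$ realises an amortised doubling: firing $w_i$ with fraction $\alpha \in (0,1]$ multiplies a designated level-$(i{+}1)$ budget counter by a factor in $[1,1+\alpha]$, with the maximum factor $2$ attained only when $\alpha = 1$. The letters $r_i$ will act as synchronisation barriers, firable only when certain level-$i$ counters have reached zero, thereby forcing each block $w_i^{\ell_i}$ to completely exhaust the level-$i$ resources before moving on. Finally, $u$ will drain the level-$h$ budget down to the prescribed values in $\by_n$.

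For the existence part, I would simply exhibit the firing fractions of the maxed out run: each occurrence of every $w_i$ is fired with fraction $\alpha = 1$, while each $r_i$ and each letter of $u$ is fired with whatever fraction the design mandates. By the amortised doubling property, after the $i$-th block the level-$(i{+}1)$ budget equals $2^{\ell_i}$ times the level-$i$ budget, which matches exactly what the next block can consume when $\ell_{i+1} = \ell_i \cdot 2^{\ell_i}$. The guard $r_i$ is firable because the level-$i$ budget has been fully consumed, and $u$ brings the final configuration to $\by_n$ by design.

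For uniqueness, I would proceed by induction on $i$ to show that in any short run from $\bx_n$ to $\by_n$ of the form in \eqref{lower-bound-runs}, the block $w_i^{\ell_i}$ must agree with the maxed out run on the value of $\ell_i$ and on every firing fraction it contains. The base case uses that $\bx_n$ provides a level-$1$ budget equal to exactly $n$ and that $r_1$ is firable only when this budget is fully depleted; combined with $\ell_1 \le n$ and the fact that each firing of $w_1$ consumes at most $\alpha$ units of level-$1$ budget, this forces $\ell_1 = n$ and every $\alpha = 1$. The inductive step is analogous: after block $i$ has been maxed out the level-$(i{+}1)$ budget is exactly $\ell_{i+1}$, and the guard $r_{i+1}$ requires this to be completely consumed by the $\ell_{i+1} \le \ell_i \cdot 2^{\ell_i}$ firings of $w_{i+1}$, each of which must therefore use fraction $1$.

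The main obstacle will be engineering the doubling gadget so that the compositional effect of $\ell_i$ firings with fractions $\alpha_1, \ldots, \alpha_{\ell_i} \in (0,1]$ yields a multiplicative factor of $\prod_j (1+\alpha_j)$ on the budget, and so that this product is precisely read off by the guard $r_i$. Since continuous counters admit no genuine zero-test, the design must rely on paired ``dual'' counters whose sum is an invariant of $w_i$; the guard $r_i$ then effectively inspects this sum against the accumulated doubling. Verifying that no intermediate counter can go negative under \emph{any} choice of fractional firings, while the multiplicative invariant is preserved, is the delicate technical point, and is what ultimately pins down the maxed out run as the unique short run reaching $\by_n$.
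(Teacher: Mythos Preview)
Your proposal is a design sketch for a hypothetical $\Sigma_h$ rather than a proof about the paper's specific construction, and the sketched mechanisms do not match how CVAS behave. You treat $w_i$ as if it were a single transition fired with one fraction $\alpha$ that ``multiplies a budget counter by a factor in $[1,1+\alpha]$'', but $w_i$ is a \emph{word} whose letters each carry their own fraction, and every CVAS step is purely additive ($\bx\mapsto\bx+\alpha t$); no combination of such steps produces a product $\prod_j(1+\alpha_j)$ of the fractions as the effect on a counter. You also conclude that every fraction must equal $1$, yet in the paper's construction the initial counters carry $\tfrac{1}{n}$ and the transitions have integer entries of magnitude up to $2$, so firing with fraction $1$ would immediately drive counters negative---the actual maxed-out run uses fractions such as $\tfrac{1}{2k}$ and $\tfrac{1}{k}$. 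Finally, the dual-counter invariant you invoke does not by itself simulate the zero-test you want $r_i$ to perform: maintaining $x+\bar{x}=c$ through $w_i$ does not force $x=0$ before $r_i$ fires without some further mechanism making residual mass in $x$ incompatible with reaching $\by_n$.

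The paper's mechanism is quite different and does not rely on guards or on fraction $1$. A dedicated counter $\step_i$ must equal exactly $4$ in $\by_n$; four of the six letters in $w_i$ each add their own fraction to $\step_i$, while complement counters enforce $x_i+\bar{x}_i=y_i+\bar{y}_i=\tfrac{1}{k}$ throughout the $w_i^{\ell_i}$ block, capping each such fraction at $\tfrac{1}{k}$. Hence $4=\step_i\le 4\ell_i/k$, and together with the shortness bound $\ell_i\le k$ this forces $\ell_i=k$ and each of those four fractions to be exactly $\tfrac{1}{k}$; that in turn pins down the remaining two fractions and yields the halving $x_i\mapsto x_i/2$ per iteration, so after $k$ iterations the level-$(i{+}1)$ counters sit at $\tfrac{1}{k2^k}$. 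The proof of the present lemma is then a short modular reduction: since later blocks do not touch $\step_i,\bar{x}_i,\bar{y}_i$ (\cref{lower-bound-no-effect}), the values these counters have in $\by_n$ are already attained after $w_i^{\ell_i}r_i$, which makes that prefix an $(i,k)$-short run; a separate local lemma (\cref{lower-bound-unique-local}) then shows any $(i,k)$-short run is $(i,k)$-maxed out.
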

In fact, \cref{lower-bound-unique-short-run} suffices to obtain the stated lower
bound:
\begin{lemma}
Every NFA for $L_{\Sigma_h}^{\bx_n,\by_n}$ has at least $\exp_h(n)$ states.
\end{lemma}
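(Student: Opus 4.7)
The plan is to prove the lower bound via a fooling-set argument. Recall that for any NFA recognizing a language $L$, the number of states is at least the size of any fooling set, i.e.\ a collection $\{(p_i, s_i)\}_{i=1}^{N}$ with $p_i s_i \in L$ for every $i$ and $p_j s_i \notin L$ whenever $i \neq j$. It therefore suffices to exhibit a fooling set of size $\exp_h(n)$ for $L = L_{\Sigma_h}^{\bx_n, \by_n}$.

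A preparatory calculation: a simple induction on $i$ using $\ell_1 = n$ and $\ell_{i+1} = \ell_i \cdot 2^{\ell_i} \ge 2^{\ell_i}$ shows $\ell_i \ge \exp_{i-1}(n)$ for every $i$, hence $\ell_h \ge \exp_{h-1}(n)$ and consequently $2^{\ell_h} \ge \exp_h(n)$. The fooling set will have size $2^{\ell_h}$, indexed by subsets $I \subseteq [\ell_h]$ of positions within the $w_h$-block of the maxed out run. For each such $I$, the pair $(p_I, s_I)$ would be constructed by ``tagging'' the positions in $I$ via alternative fractional firings made available by the transitions of $\Sigma_h$, with the suffix $s_I$ supplying the corresponding counter-balancing completion. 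By construction, $p_I s_I$ is a rearrangement of the maxed-out run and lies in $L$ thanks to \cref{lower-bound-unique-short-run}.

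The critical step is distinctness: for $I \ne J$, the concatenation $p_J s_I$ is still a transition sequence of the form \eqref{lower-bound-runs}, and still short (since both $p_J$ and $s_I$ respect the same $\ell_i$-bounds that make up the maxed-out run), yet it encodes a subset different from the original. If $p_J s_I$ were in $L$, it would realise a short run from $\bx_n$ to $\by_n$ distinct from the maxed-out run, contradicting the uniqueness guaranteed by \cref{lower-bound-unique-short-run}. Hence $p_J s_I \notin L$, and the fooling-set argument yields the desired lower bound of $2^{\ell_h} \ge \exp_h(n)$ states.

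The main obstacle is the explicit construction of the family $\{(p_I, s_I)\}_I$. While the high-level plan is clear, one must exploit the detailed architecture of $\Sigma_h$ (to be spelled out in the construction) so that each of the $\ell_h$ positions in the $w_h$-block affords an independent binary choice whose effect is persistently recorded in the counter values and can only be cancelled by the matching suffix-side mark. Concretely this requires showing (i) every $(p_I, s_I)$ traces an actual CVAS run from $\bx_n$ to $\by_n$ and (ii) the counter configurations reached after $p_I$ and $p_J$ differ in a way that no common suffix can reconcile. The uniqueness lemma \cref{lower-bound-unique-short-run} is the load-bearing tool that converts these counter differences into residual distinctness, but it applies only to sequences of the precise form \eqref{lower-bound-runs}; therefore the prefixes $p_I$ must be engineered to keep the combined words within this form.
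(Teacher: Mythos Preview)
Your proposal has a genuine gap at its core. The language $L_{\Sigma_h}^{\bx_n,\by_n}$ consists of \emph{transition sequences}, not runs: a word $w\in\Sigma_h^*$ belongs to the language iff there \emph{exist} firing fractions realizing $\bx_n\xrightarrow{w}\by_n$. The fractions themselves are not part of the word. Hence ``tagging positions via alternative fractional firings'' cannot distinguish different prefixes $p_I$: if the underlying transition sequence is the same, the words $p_I$ are literally identical, and you have only one pair, not $2^{\ell_h}$. If instead you intend the tags to be different \emph{transitions}, then $p_J s_I$ is no longer of the form \eqref{lower-bound-runs}, and \cref{lower-bound-unique-short-run} does not apply to it --- so you lose the only tool you invoke to conclude $p_J s_I\notin L$. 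Either way the fooling set collapses.

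The paper's argument is far simpler and avoids all of this. It is a straight pumping-lemma application: the NFA accepts the maxed-out word $w_1^{\ell_1}r_1\cdots w_h^{\ell_h}r_hu$; if the automaton has fewer than $\ell_h$ states, pigeonhole on the states visited inside the $w_h^{\ell_h}$ block lets you excise some $w_h^r$ with $r>0$, yielding an accepted word that is still of the form \eqref{lower-bound-runs}, still short (the $\ell_h$-entry only decreased), but not maxed out --- contradicting \cref{lower-bound-unique-short-run}. This gives $\ell_h\ge\exp_h(n)$ states directly. If you want to phrase this as a fooling set, the natural one has size $\ell_h$ (not $2^{\ell_h}$): take $p_k=w_1^{\ell_1}r_1\cdots r_{h-1}w_h^{k}$ and $s_k=w_h^{\ell_h-k}r_hu$ for $0\le k<\ell_h$; for $k<k'$ the word $p_k s_{k'}$ is short but not maxed out.
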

\begin{proof}
Consider an NFA for $L_{\Sigma_h}^{\bx_n,\by_n}$ with $s$ states. Since the
maxed out run leads $\bx_n$ to $\by_n$, the NFA accepts
$w_1^{\ell_1}r_1\cdots w_h^{\ell_h}r_hu$ with $\ell_1=n$ and
$\ell_{i+1}=\ell_i\cdot 2^{\ell_i}$ for $i\in[1,h-1]$. If $s<\ell_h$, then in
this run, we can remove some infix $w_h^r$ with $r>0$. This shorter word
implies the existence of a short run of the form in \eqref{lower-bound-runs} that is not maxed out, contradicting
\cref{lower-bound-unique-short-run}. Hence, we have $s\ge \ell_h\ge\exp_h(n)$.
\end{proof}

\myparagraph{Counters and Configurations}
Let us now describe the $5h$ counters and the initial and final configurations $\bx_n$ and $\by_n$.
Our CVAS will have counters
\begin{equation} \underbrace{x_1,\ldots,x_h,~~y_1,\ldots,y_h,~~\overbrace{\bar{x}_1,\ldots,\bar{x}_h,~~\bar{y}_1,\ldots,\bar{y}_h}^{\text{complement counters}}}_{\text{high-precision counters}},~~~\underbrace{\step_1,\ldots,\step_h}_{\text{step counters}}.\label{lower-bound-counter-overview}\end{equation}
Here, $\bar{x}_1,\ldots,\bar{x}_h,\bar{y}_1,\ldots,\bar{y}_h$ will act as
``complement counters'' to $x_1,\ldots,x_h,y_1,\ldots,y_h$ (they carry
complementary values). Furthermore, the $x_i,\bar{x}_i,y_i,\bar{y}_i$ for $i=1,\ldots,h$ are
``high precision counters'', since they will carry numbers of non-elementary
precision. Finally, the ``step counters'' $\step_1,\ldots,\step_h$ will
reflect the length of the run, from which we impose lower bounds on
the run length.

The initial and final configurations, $\bx_n$ and $\by_n$, are defined as
follows. Let $\bx_n$ be the configuration where $x_1,\ldots,x_h,y_1,\ldots,y_h$
all carry $\tfrac{1}{n}$, and all other counters carry $0$. Furthermore, let
$\by_n$ be the configuration where each $\step_i$ carries $4$ and all other counters
carry $0$.

\myparagraph{Shape of Maxed Out Runs} Initially, in $\bx_n$, the counters
$x_1,\ldots,x_h,y_1,\ldots,y_h$ will all be $\tfrac{1}{n}$. In the maxed out run,
the $w_1^*r_1$ portion will (overall) not change $x_1,y_1$, but it will turn
$x_2,\ldots,x_h,y_2,\ldots,y_h$ into $\tfrac{1}{n2^n}$, with $\ell_1=n$ iterations of $w_1$. Similarly, the $w_2^*r_2$
portion of this run will not change $x_1,x_2,y_1,y_2$, but it will change
$\tfrac{1}{n2^n}$ into $\tfrac{1}{n2^n\cdot 2^{n\cdot 2^n}}$ in the counters $x_{3},\ldots,x_h,y_3,\ldots,y_h$. This will be done in $w_2^{\ell_2}r_2$ with $\ell_2=n\cdot 2^{n}$. Then, this pattern continues for $i=3,\ldots,h$.%. Overall, this pattern
%leads to numbers whose \emph{precision is non-elementary}.

More generally, we will show that from a configuration where
$x_i,\ldots,x_h,y_i,\ldots,y_h$ all have value $\tfrac{1}{k}$, there is a
unique run $w_i^\ell r_i$ with $\ell\le k$ that arrives in a configuration where $\step_i$ is
$4$. More precisely, an \emph{$(i,k)$-short run} is one with
\begin{enumerate}
\item a transition sequence $w_i^\ell r_i$ with $\ell\le k$
\item it starts in a configuration where $x_i,\ldots,x_h,y_i,\ldots,y_h$ all
carry $\tfrac{1}{k}$, and
$\bar{x}_i,\ldots,\bar{x}_h,\bar{y}_i,\ldots,\bar{y}_h$ all carry $0$, and
$\step_i,\ldots,\step_h$ all carry $0$,
\item it ends in a configuration where $\step_i$ is $4$ and $\bar{x}_i$,
$\bar{y}_i$ carry $0$.
\end{enumerate}
We say that an $(i,k)$-short run is \emph{$(i,k)$-maxed out} if
\begin{enumerate}
\item $\ell=k$
\item at the end of the run, the counters $x_i,\ldots,x_h,y_i,\ldots,y_h$ carry $\tfrac{1}{k2^k}$, the counters $\bar{x}_i,\ldots,\bar{x}_h$, $\bar{y}_i,\ldots,\bar{y}_h$, and $\step_{i+1},\ldots,\step_{h}$ all carry $0$.
%\item the counters $x_j,y_j,\bar{x}_j,\bar{y}_j$ with $j<i$ will not be affected.
\end{enumerate}
Once we define all the transitions, we will prove:
\begin{lemma}\label{lower-bound-unique-local}
For every $i\in[1,h]$ and $k\ge 0$, there is an $(i,k)$-maxed out run. Moreover, every $(i,k)$-short run is $(i,k)$-maxed out.
\end{lemma}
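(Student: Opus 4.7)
The plan is to establish both claims---existence of an $(i,k)$-maxed out run and that every $(i,k)$-short run is in fact $(i,k)$-maxed out---by an argument that is local to level $i$ for each fixed $k$. Since the word $w_i^\ell r_i$ of an $(i,k)$-short run will involve only transitions at level $i$ that, by construction, touch only counters indexed $\ge i$, the claims for different levels are independent; however, it is natural to present the argument by reverse induction on $i$, from $h$ down to $1$, because the halving operation that $w_i$ performs on the counters at levels $>i$ follows a uniform template.

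For existence, I would exhibit the $(i,k)$-maxed out run by writing down the firing fractions for each of the $k$ iterations of $w_i$ and for the single firing of $r_i$. The construction is guided by two invariants that $w_i$ is designed to preserve: a \emph{complement invariant} of the form $x_j+\bar{x}_j=y_j+\bar{y}_j=1/k$ for every $j\ge i$, which holds because the six transitions of $w_i$ pair each increment on a counter with a matching decrement on its complement, and a \emph{progress invariant} linking a designated counter (such as $\bar{x}_i$) to the number of iterations performed. With these invariants in place, a direct computation shows that $k$ iterations of $w_i$ halve the high-precision counters at levels $>i$ from $1/k$ to $1/(k\cdot 2^k)$, and a single firing of $r_i$ at the appropriate fraction---which will turn out to be exactly $1$, as forced by the step-counter constraint---lands the system in the $(i,k)$-maxed out target configuration.

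For uniqueness, which I expect to be the main obstacle, I would argue as follows. The final configuration demands $\step_i=4$, and $r_i$ will be the only transition contributing to $\step_i$; since its contribution per firing is a fixed integer (such as $4$), this pins the firing fraction of $r_i$ to be exactly $1$, which in turn pins the values of the counters immediately before $r_i$. Combining this with the complement invariants, together with the requirement that $\bar{x}_i,\bar{y}_i$ end at $0$ and remain non-negative throughout, one can then force the total number $\ell$ of $w_i$-iterations to equal $k$ exactly and, by an induction on the iteration index, pin each individual firing fraction to its maxed-out value. The technical difficulty is to argue that any deviation from the maxed-out fractions must either violate non-negativity of some counter at an intermediate step or produce an incorrect value just before $r_i$ fires; this relies on the careful design of $w_i$ and $r_i$ so that each iteration strictly advances $\bar{x}_i$ (or an analogous counter) by a fixed quantum, leaving no slack under the continuous relaxation and then cascading to fix every firing fraction.
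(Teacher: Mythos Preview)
Your proposal misidentifies the mechanism that drives uniqueness, and this is a genuine gap. You claim that ``$r_i$ will be the only transition contributing to $\step_i$'' and that its contribution is a fixed integer, pinning its firing fraction to $1$. In the paper's construction this is false on both counts: $r_i = -\bar{x}_i - \sum_{j>i}(\bar{x}_j+\bar{y}_j)$ does not touch $\step_i$ at all, and its firing fraction in the maxed-out run is $\tfrac{1}{k}-\tfrac{1}{k2^k}$, not $1$. The transitions that increment $\step_i$ are the four ``back-and-forth'' transitions $t_{i,2},t_{i,3},t_{i,5},t_{i,6}$ inside $w_i$, each adding its firing fraction to $\step_i$.

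The actual uniqueness argument is a capacity-counting one that your sketch does not contain. The complement invariant $x_i+\bar{x}_i=y_i+\bar{y}_i=\tfrac{1}{k}$ (which you do state correctly) bounds each of those four step transitions' firing fractions by $\tfrac{1}{k}$. Hence the total contribution to $\step_i$ over $w_i^\ell$ is at most $4\ell\cdot\tfrac{1}{k}$. Since the target demands $\step_i=4$ and the run is short ($\ell\le k$), one is forced to have $\ell=k$ and every step-transition fraction equal to exactly $\tfrac{1}{k}$. Only then does the cascade you allude to kick in: for $t_{i,2}$ to fire with $\tfrac{1}{k}$, the preceding $t_{i,1}$ must empty $x_i$; for $t_{i,5}$ to fire with $\tfrac{1}{k}$, the preceding $t_{i,4}$ must refill $y_i$ to $\tfrac{1}{k}$; and an induction on the iteration index then pins all fractions. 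Your proposed ``fixed quantum advance of $\bar{x}_i$'' is not what forces $\ell=k$; the shortness bound $\ell\le k$ combined with the $\step_i$ capacity argument is. Finally, no reverse induction on $i$ is needed: the argument is uniform in $i$ and $k$.
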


\myparagraph{Transition Notation}
To make the effects of the transitions easily visible, we introduce some
notation.  Recall that our CVAS has $5h$ counters (see \eqref{lower-bound-counter-overview}). We use the names of a
counter to indicate the effect of incrementing this counter. For example, $x_i$
is the vector in $\zn^{5h}$ that has $1$ in the coordinate for counter $x_i$.
This also means, e.g., that $x_i+\step_i$ would be a transition that adds $1$
to the counters $x_i$ and $\step_i$, but leaves all others unchanged.
For example, the transition sequence $u$ is
\begin{align*}
u=f_1g_1\cdots f_hg_h, && f_j=-x_j, && g_j=-y_j,
\end{align*}
for every $j=1,\ldots,h$. Thus, if the counters
$x_1,\ldots,x_h,y_1,\ldots,y_h$ carry values $\le 1$, the sequence $u$ allows us to reset
them to zero.

The transition sequences $w_1,r_1,\ldots,w_h,r_h$
will have the following property:
\begin{property}\label{lower-bound-no-effect}
The sequences $w_i$ and $r_i$ have no effect on $x_j,\bar{x}_j$, $y_j$, $\bar{y}_j$, $\step_j$ for $j<i$.
\end{property}
Before we describe the transition sequences
$w_1,r_1,\ldots,w_h,r_h$ and prove \cref{lower-bound-unique-local}, let us see
how \cref{lower-bound-unique-local} implies
\cref{lower-bound-unique-short-run}.
\begin{proof}[Proof of \cref{lower-bound-unique-short-run}]
First, by initially firing a $(1,n)$-maxed out run, then a $(2,n2^n)$-maxed out
	run, etc., and then resetting all $x_j,y_j$ (for $j=1,\ldots,h$) using $u$, we obtain a maxed out run from $\bx_n$ to $\by_n$. Note that here, $u$ is able to reset all $x_j,y_j$ since their values are $\le 1$.

For the second statement, take a short run along the transition sequence
$w_1^{\ell_1}r_1\cdots w_h^{\ell_h}r_hu$.  Then the configuration reached after
the prefix $w_1^{\ell_1}r_1$ must agree with $\by_n$ in the counters
$\step_1,\bar{x}_1,\bar{y}_1$, because the latter are not touched during
$w_2^{\ell_2}r_2\cdots w_h^{\ell_h}r_hu$. Hence, the run along
$w_1^{\ell_1}r_1$ is a short $(1,n)$-run. By \cref{lower-bound-unique-local},
the run along $w_1^{\ell_1}r_1$ is a $(1,n)$-maxed out run. Thus, after
$w_1^{\ell_1}r_1$, our run reaches a configuration where
$x_1,\ldots,x_h,y_1,\ldots,y_h$ carry $\tfrac{1}{n2^n}$, the counters
$\bar{x}_1,\ldots,\bar{x}_h,\bar{y}_1,\ldots,\bar{y}_h$ carry $0$, and
$\ell_1=n$.  By repeating this argument, we get that $w_i^{\ell_i}r_i$ is a
$(i,M_i)$-short run, and hence a $(i,M_i)$-maxed out run, for $M_1=n$ and
$M_{i+1}=M_i2^{M_i}$ for $i\in[1,h-1]$. In particular, we obtain $\ell_i=M_i$
for all $i$. Overall, this implies that our run along $w_1^{\ell_1}r_1\cdots
w_h^{\ell_h}r_hu$ is a maxed out run.
\end{proof}

It thus remains to set up $w_1,\ldots,w_h,r_1,\ldots,r_h$ so as to satisfy
\cref{lower-bound-unique-local,lower-bound-no-effect}.

% \begin{enumerate}
% \item $t_{i,1}=-2\hat{x}_i-\hat{y}_i-2\sum_{j=i+1}^h (\hat{y}_j+\hat{x}_j)$.
% \item $t_{i,2}=\hat{x}_i+\step_i$
% \item $t_{i,3}=-\hat{x}_i+\step_i$
% \end{enumerate}

\myparagraph{The Three Gadgets}
%We have $w_i=t_{i,1}t_{i,2}t_{i,3}t_{i,4}t_{i,5}t_{i,6}$, where
When describing the transition sequences $w_1,\ldots,w_h$ and $r_1,\ldots,r_h$,
some more notation will be useful: Since $\bar{x}_i$ is meant as a complement
counter to $x_i$, most transitions will have opposite effect on these two counters
(and likewise with $\bar{y}_i$ and $y_i$). Therefore, we define
$\hat{x}_i:=x_i-\bar{x}_i$ and $\hat{y}_i:=y_i-\bar{y}_i$. Thus, a
transition with effect $\hat{x}_i$ adds $1$ to $x_i$ and subtracts $1$ from
$\bar{x}_i$.

Let us now describe the words $w_1,\ldots,w_h,r_1,\ldots,r_h$, together with the $(i,k)$-maxed out runs. We will show afterwards that the described $(i,k)$-maxed out run is indeed the only $(i,k)$-short run. First, we have $w_i=t_{i,1}\cdots t_{i,6}$, where the first three transitions are:
\begin{align*}
&t_{i,1}=-2\hat{x}_i-\hat{y}_i-2\sum_{j=i+1}^h (\hat{x}_j+\hat{y}_j), &&
t_{i,2}=\hat{x}_i+\step_i, &&
t_{i,3}=-\hat{x}_i+\step_i.
\end{align*}
Thus, $t_{i,1}$ subtracts some value $\alpha\in(0,1]$ from $y_i$, and subtracts $2\alpha$ from $x_i$ and all counters $x_{i+1},y_{i+1},\ldots,x_h,y_h$. In our $(i,k)$-maxed out run, this is applied with $\alpha=\tfrac{1}{2k}$: This depletes the counters $x_i$ and $x_{i+1},y_{i+1},\ldots,x_h,y_h$, but sets $y_i$ to $\tfrac{1}{2k}$.
Because of opposite effects on the complement counters, this sets $\bar{x}_i$ and $\bar{x}_{i+1},\bar{y}_{i+1},\ldots,\bar{x}_h,\bar{y}_h$ to $\tfrac{1}{k}$. Moreover, $\bar{y}_i$ becomes $\tfrac{1}{2k}$.

In $t_{i,2}t_{i,3}$, we move mass back and forth between $x_i$ and $\bar{x}_i$, while adding the mass (twice) to $\step_i$.
%Indeed,$t_{i,2}$ can add some $\alpha\in(0,1]$ to $x_i$, while adding the same to
%$\step_i$. In our $(i,k)$-maxed out run, this fills $x_i$ back up to
%$\tfrac{1}{k}$. The transition $t_{i,3}$ then diminishes $x_i$ again, while
%increasing $\step_i$.
In our $(i,k)$-maxed out run, $t_{i,2}t_{i,3}$ moves all the mass
($\tfrac{1}{k}$) from $\bar{x}_i$ to $x_i$ (in $t_{i,2}$), and then back (in
$t_{i,3}$); and we add $\tfrac{2}{k}$ to $\step_i$. Thus, compared to the
configuration after $t_{i,1}$, we only added $\tfrac{2}{k}$ to $\step_i$.

% \begin{enumerate}
% \item $t_{i,4}=\sum_{j=i}^h \hat{x}_j+\hat{y}_j$
% \item $t_{i,5}=-\hat{y}_i+\step_i$
% \item $t_{i,5}=\hat{y}_i+\step_i$
% \end{enumerate}
The second part of $w_i$, namely $t_{i,4}t_{i,5}t_{i,6}$, works as follows:
\begin{align*}
&t_{i,4}=\sum_{j=i}^h \hat{x}_j+\hat{y}_j &&
t_{i,5}=-\hat{y}_i+\step_i &&
t_{i,6}=\hat{y}_i+\step_i
\end{align*}
In $t_{i,4}$, we fill the counters $x_i,y_i,\ldots,x_h,y_h$ back up, but only as far as the capacity of $y_i$ (as guaranteed using the complement counter $\bar{y}_i$) allows: In our $(i,k)$-maxed out run, before $t_{i,4}$, we have $y_i$ at $\tfrac{1}{2k}$, and the counters $x_i$ and $x_{i+1},y_{i+1},\ldots,x_h,y_h$ at $0$. Thus, with $t_{i,4}$, we can increase all counters $x_i,y_i,\ldots,x_h,y_h$ by $\tfrac{1}{2k}$. This means, we get $\tfrac{1}{2k}$ in $x_i$ and in $x_{i+1},y_{i+1},\ldots,x_h,y_h$. And we get $\tfrac{1}{k}$ in $y_i$.

Along $t_{i,5}t_{i,6}$, we then have a ``back and forth'' again, but now between $y_i$ and $\bar{y}_i$. In our $(i,k)$-maxed out run, we move the entire $\tfrac{1}{k}$ back and forth. We thus have the same configuration as after $t_{i,4}$, but $\step_i$ now carries $\tfrac{2}{k}+\tfrac{2}{k}=\tfrac{4}{k}$.

We have now described our $(i,k)$-maxed out run along the first copy of $w_i$. Note that overall, it had the same effect on $x_i$ and on $x_{i+1},y_{i+1},\ldots,x_h,y_h$: It replaced the value $\tfrac{1}{k}$ with the value $\tfrac{1}{2k}$. Moreover, the run did not change $y_i$. And it added $\tfrac{4}{k}$ to $\step_i$. To describe the later iterations of $w_i$, we use a simpler notation for configurations: The abovementioned configuration is written as
\vspace{0.6cm}
\begin{equation} ~~~~~~~~~~~~~~~~~~~~~~~~~~~~~~~~~~~~~~~~~~~\left\langle\eqnmarkbox{node1}{\frac{1}{2k}};\eqnmarkbox{node2}{\frac{1}{k}};\eqnmarkbox{node3}{\frac{4}{k}}\right\rangle \label{lower-bound:notation-configurations}\end{equation}
\vspace{0.5cm}
\annotate[yshift=1em]{left}{node1}{content of $x_i$ and of $x_{i+1},y_{i+1},\ldots,x_h,y_h$}
\annotate[yshift=1em]{}{node2}{content of $y_i$}
\annotate[yshift=-1em]{below}{node3}{content of $\step_i$}

(Here, we use semicolons instead of commas to distinguish this representation from a three-dimensional configuration.)
In the $j$-th iterations of $w_i$ for $j\ge 1$, our $(i,k)$-maxed out run will perform the following transformation:
\begin{multline}
\left\langle\frac{1}{2^jk}; \frac{1}{k}; \frac{4j}{k}\right\rangle \xrightarrow{t_{i,1}} \left\langle 0;\frac{1}{k}-\frac{1}{2^{j+1}k};\frac{4j}{k}\right\rangle\xrightarrow{t_{i,2}t_{i,3}}\left\langle 0;\frac{1}{k}-\frac{1}{2^{j+1}k};\frac{4j+2}{k}\right\rangle \\
\xrightarrow{t_{i,4}t_{i,5}t_{i,6}}\left\langle\frac{1}{2^{j+1}k};\frac{1}{k};\frac{4(j+1)}{k}\right\rangle\label{lower-bound-behavior-maxed-out}
\end{multline}
and overall we went from $\langle\tfrac{1}{2^jk}; \tfrac{1}{k}; \tfrac{4j}{k}\rangle$ to $\langle\tfrac{1}{2^{j+1}k}; \tfrac{1}{k}; \tfrac{4(j+1)}{k}\rangle$.
Thus, after $w_i^k$, our $(i,k)$-maxed out run arrives in $\langle\tfrac{1}{2^kk};\tfrac{1}{k};4\rangle$.
After executing $w_i^k$, our $(i,k)$-maxed out run has the reset transition $r_i$:
\[ r_i = -\bar{x}_i - \sum_{j=i+1}^h (\bar{x}_j+\bar{y}_j). \]
This allows us to reset all the complement counters to $0$, ready for
the next stage of $w_{i+1}$. This completes our description of
$w_1,r_1,\ldots,w_h,r_h$ and of our $(i,k)$-maxed out run. Moreover,
\cref{lower-bound-no-effect} is clearly satisfied. For
\cref{lower-bound-unique-local}, it remains to show that this run is the only
$(i,k)$-short run.

\begin{proof}[Proof of \cref{lower-bound-unique-local}]
Consider an $(i,k)$-short run along $w_i^\ell r_i$. The transitions in $w_i$ keep constant both (i)~the sum of $x_i$ and $\bar{x}_i$ and (ii)~the sum of $y_i$ and $\bar{y}_i$. These two sums remain $\tfrac{1}{k}$ throughout the execution of $w_i^\ell$. This means, the transitions $t_{i,2}$, $t_{i,3}$, $t_{i,5}$, and $t_{i,6}$ can use a firing fraction of at most $\tfrac{1}{k}$. We call these the ``$\step_i$-transitions'', because they modify $\step_i$. Hence, if we denote by $s$ the sum of all firing fractions of $\step_i$-transitions, then $s\le \ell\cdot4\cdot \tfrac{1}{k}$.

Moreover, $s$ is the final value in $\step_i$, which is $4$. Hence $4=s\le \ell\cdot 4\cdot\tfrac{1}{k}$. Since $\ell\le k$, this is only possible if $\ell=k$ and all fractions of $\step_i$-transitions are $\tfrac{1}{k}$. But, to fire $t_{i,2}$ with fraction $\tfrac{1}{k}$, the $t_{i,1}$ directly before must leave $x_i$ empty. Likewise, to fire $t_{i,5}$ with fraction $\tfrac{1}{k}$, the $t_{i,4}$ directly before must leave $y_i$ at full capacity, i.e.~$\tfrac{1}{k}$. In short, \emph{each $t_{i,1}$ must empty $x_i$} and \emph{each $t_{i,4}$ must fill $y_i$ to $\tfrac{1}{k}$}.

By induction on $j$, this implies that the $j$-th iteration of $w_i$ must have exactly the effect described in \eqref{lower-bound-behavior-maxed-out}. In particular, after $\ell=k$ iterations, we end up in the configuration $\langle \tfrac{1}{2^kk}, \tfrac{1}{k}, 4\rangle$. This implies that the counters $\bar{x}_i$ and $\bar{x}_{i+1},\bar{y}_{i+1},\ldots,\bar{x}_h,\bar{y}_h$ all carry $\tfrac{1}{k}-\tfrac{1}{2^kk}$. After $w_i^\ell=w_i^k$, the transition $r_i$ must be fired with some fraction $\alpha\in(0,1]$. After this, the counter $\bar{x}_i$ will carry $\tfrac{1}{k}-\tfrac{1}{2^kk}-\alpha$. Since $\bar{x}_i$ is $0$ in the final configuration of our $(i,k)$-short run, we must have $\alpha=\tfrac{1}{k}-\tfrac{1}{2^kk}$. In particular, all the counters $\bar{x}_i$ and $\bar{x}_{i+1},\bar{y}_{i+1},\ldots,\bar{x}_h,\bar{y}_h$ are reset to $0$ by $r_i$. This means, our $(i,k)$-short run was in fact an $(i,k)$-maxed out run.
%Therefore, the sum of all these firing fractions on $t_{i,2},t_{i,3},t_{i,5},t_{i,6}$ must be $4$. Since each of them is at most $\tfrac{1}{k}$, this implies that
\end{proof}

\begin{remark}
	It should be noted that the requirement of being short (by our
	definition of ``short'') is crucial to force the non-elementary length
	of the run $w_1^{\ell_1}r_1\cdots w_h^{\ell_h}r_hu$. Indeed, the proof
	of \cite[Thm.~4.9]{blondinLogicsContinuousReachability2017} implies
	that in every CVAS with states, if there is a run between two
	configurations, then there is one of at most exponential length.  In
	particular, there must be such a run from $\bx_n$ to $\by_n$ of the form
	\eqref{lower-bound-runs}. Indeed, one can choose
	$\ell_i=2^in$ for each $i=1,\ldots,h$ (and since $2n>n$, this is not short by our definition). See
	\cref{app:lower-bound} for details.
\end{remark}
%Let us now see how our $(Overall, executing the first copy of $w_i=t_{i,1}\cdots t_{i,6}$ in our $(i,k)$-maxed out run leaves us with $\tfrac{1}{k}$ in $y_i$ and with $\tfrac{1}{2k}$ in $x_i$ and in $x_{i+1},y_{i+1},\ldots,x_h,y_h$; and with $\tfrac{4}{k}$ in $\step_i$. More generally,  after executing $w_i^j$ for $j\in[0,k-1]$, we arrive with $\tfrac{1}{2^jk}$ in $x_i$ and in $x_{i+1},y_{i+1},\ldots,x_h,y_h$, and with $\tfrac{1}{k}$ in $y_i$; and with $\tfrac{4j}{k}$ in $\step_i$. In our $(i,k)$-maxed out run,
%This run will have the effect outlined above: It keeps $x_i,y_i$ as they
%are (i.e.\ $\tfrac{1}{k}$), and turns $x_{i+1},\ldots,x_h,y_{i+1},\ldots,y_h$
%into $\tfrac{1}{k2^k}$. In addition, we will show that this unique run $w_i^m$
%must have $m=k$. Overall, this means the only way to go from $\bx_n$ to $\by_n$
%along $w_1^*r_1\cdots w_h^*r_h$ via a short run is to use $w_1^{\ell_1}\cdots w_h^{\ell_h}$ with $\ell_1=n$,
%$\ell_2=n2^{n}$, $\ell_3=\ell_22^{\ell_2}$, i.e.\ a path of non-elementary length.
%This clearly implies that an NFA for $L_{\Sigma_h}^{\bx_n,\by_n}$ requires a non-elementary number of states.

\section{Conclusion}
We have shown that the language of every CVAS is effectively regular (\cref{th:main}). This clarifies the decidability landscape of reachability in systems with continuous counters (\cref{intersection-decidable}). While the decidability aspects are now understood, we view these results as the starting point for investigating the complexity of reachability in a wide range of systems with continuous counters.

\myparagraph{Complexity} There are many complexity aspects that remain unclear.
For example, what is the exact complexity of computing NFAs for
$L_{\Sigma}^{\bx,\by}$? We have a non-elementary lower bound, and our
construction yields an Ackermannian upper bound. It would be interesting to investigate whether there is a primitive-recursive construction.

We already know situations where computing an NFA for
$L_{\Sigma}^{\bx,\by}$ (which has non-elementary size) does not yield optimal
complexity for reachability in systems with continuous counters. For example,
in finite-state systems equipped with continuous counters, reachability is
$\NP$-complete~\cite[Theorem~4.14]{blondinLogicsContinuousReachability2017}, and for pushdown
systems with continuous counters, reachability is
$\NEXP$-complete~\cite[Theorem~1.1]{DBLP:journals/pacmpl/BalasubramanianMTZ24}. This
suggests that for order-$k$ pushdown systems with continuous counters,
reachability should be $k$-$\NEXP$-complete, but our results only yield an
Ackermannian upper bound.

\myparagraph{Demystification} It would also be interesting to provide a
simple, abstract description of how the path-scheme decomposition in our
construction relates to the CVAS. For the KLM decomposition, such a
description was obtained by Leroux and Schmitz~\cite{DBLP:conf/lics/LerouxS15}.
Their description---the so-called ``demystification''---casts the
decomposition as a computation of ideals. However, this was 30 years after
the KLM decomposition was first described. Hence, this might require deep
conceptual insights. 

For example, we can ask if there is a counterpart to the run embedding
introduced by Jan\v{c}ar~\cite{DBLP:journals/tcs/Jancar90} and
Leroux~\cite{DBLP:conf/popl/Leroux11} that would make our construction an ideal
decomposition. Also, since our regular language is a finite union of sets that
look similar to upward closures: Is there a well-quasi ordering such that
$L_{\Sigma}^{\bx,\by}$ is upward closed? A potential candidate for these two
questions might be the DFA-defined well-quasi orderings from
\cite[p.~3]{DBLP:conf/lics/Zetzsche18}, where the DFA is induced by a suitable
pre-perfect path-scheme.

On the one hand, such a demystification would conceptually clarify our proof.
On the other hand, it might also deepen our understanding of VAS: Because of
the alternation of decomposition steps, our construction yields a precise
description of firing sequences, whereas the KLM decomposition for VAS only
provides an overapproximation (i.e.\ an ideal decomposition over the set of
pre-runs). Perhaps the two approaches can be merged into a more general construction
for VAS that lets us understand the set of runs of a VAS as a regular set,
once runs are equipped with appropriate structure --- just as context-free
languages are yields of regular sets of trees.

\section*{Acknowledgments}
We thank the anonymous reviewers for their suggested improvements.

This research was sponsored in part by the Deutsche Forschungsgemeinschaft project number 389792660 TRR 248--CPEC.

\raisebox{-8pt}[0pt][0pt]{\includegraphics[height=0.8cm]{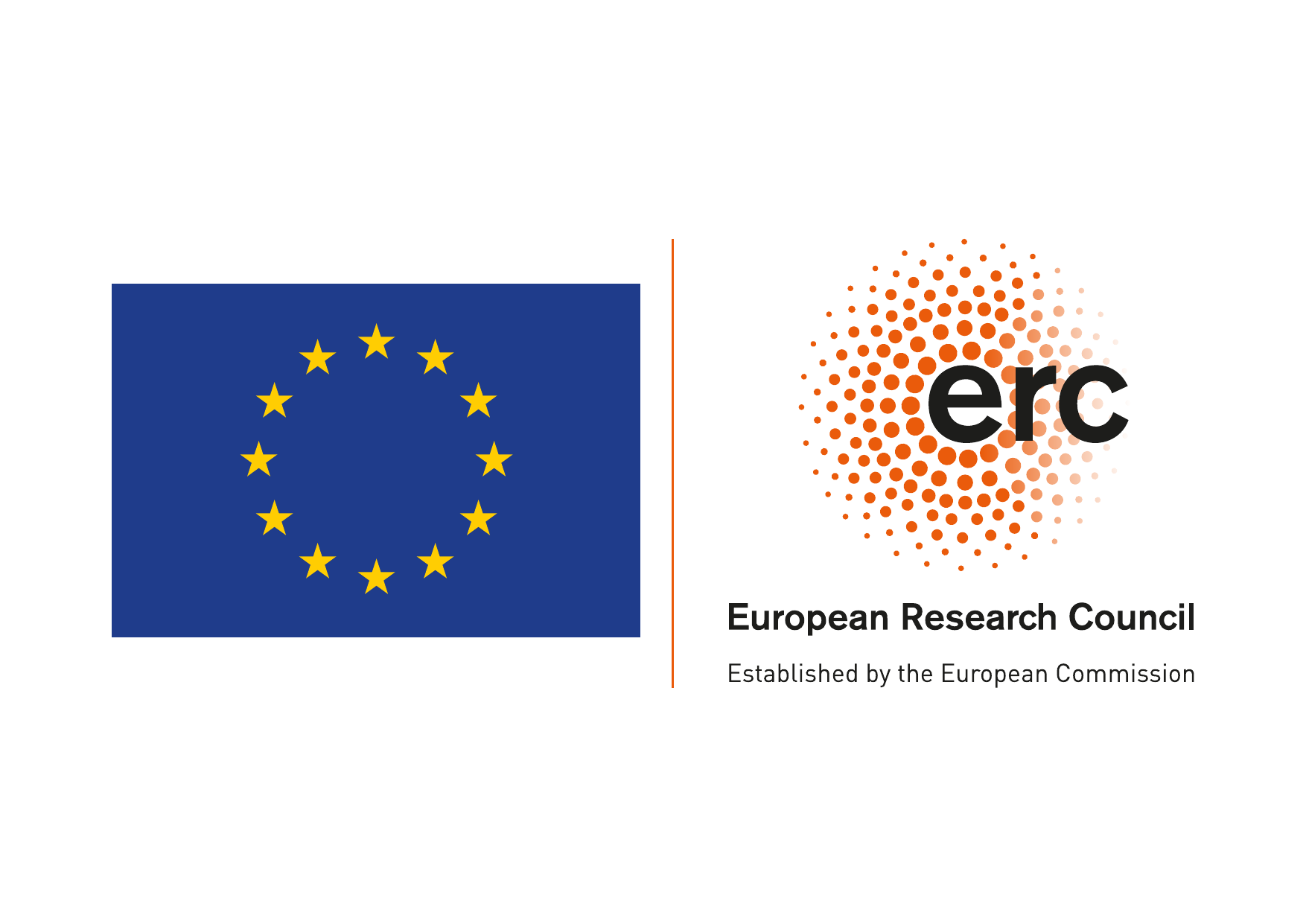}}
Funded by the European Union (ERC, FINABIS, 101077902).
Views and opinions expressed are however those of the authors only and do not necessarily reflect those of the European Union
or the European Research Council Executive Agency.
Neither the European Union nor the granting authority can be held responsible for them.

\label{beforebibliography}
\newoutputstream{pages}
\openoutputfile{main.pages.ctr}{pages}
\addtostream{pages}{\getpagerefnumber{beforebibliography}}
\closeoutputstream{pages}
\bibliographystyle{plainurl}
\bibliography{refs}
\label{endbibliography}
\newoutputstream{pageendbibliography}
\openoutputfile{main.pageendbibliography.ctr}{pageendbibliography}
\addtostream{pageendbibliography}{\getpagerefnumber{endbibliography}}
\closeoutputstream{pageendbibliography}

\newpage
\label{startappendix}
\newoutputstream{pagestartappendix}
\openoutputfile{main.pagestartappendix.ctr}{pagestartappendix}
\addtostream{pagestartappendix}{\getpagerefnumber{startappendix}}
\closeoutputstream{pagestartappendix}

\appendix

\crefalias{section}{appsec}
\crefalias{subsection}{appsec}

\section{Complexity}\label[appsec]{app:complexity}
\newcommand{\Fast}{\mathbf{F}}
\newcommand{\Func}{\mathscr{F}}
Let us briefly observe that our construction runs in at most Ackermannian
time~\cite{DBLP:journals/toct/Schmitz16}. To be precise, by this we mean that an NFA
can be computed in a time bound used for the definition of the class $\Fast_\omega$ of Ackermannian problems.
(Strictly speaking, $\Fast_\omega$ is a class of decision problems, so we can
technically not say that we can compute the NFA in $\Fast_\omega$.)

\subsection{An Ordinal Ranking Function} To begin with, note that on every path of our
tree, the weight vectors strictly decrease when comparing an odd level with
the next odd level.
% Thus, on each branch
%of our tree, the sequence of weight vectors on odd levels is in particular a
%bad sequence w.r.t.\ the component-wise (a.k.a.\ Dickson) ordering on
%$\nn^{|\Sigma|}$.
Moreover, each decomposition of a path-scheme $\rho$ clearly
produces new path-schemes whose size is elementary\footnote{This is obvious from the descriptions in this paper, except perhaps when we use enumeration (in the proof of \cref{thm:lifting-runs}) to find a tuple of vectors that can sustain a run. However, the vectors in this tuple can also be bounded, even exponentially: This is because finding these vectors can be phrased as reachability in a CVAS with states (CVASS), and the reachability algorithm of Blondin and Haase~\cite[Theorem 4.9]{blondinLogicsContinuousReachability2017} for CVASS yields an exponentially long path with fractions that require polynomially many bits, if there is one.} in the size of $\rho$ and
the size of the vectors $\bx$ and $\by$.\footnote{By size, we mean the number of bits needed to write down a path-scheme or rational vector, respectively.}
This means, there is an elementary
function $f\colon\nn\to\nn$ such that for the sequence $\rho_1,\rho_2,\ldots$
of path-schemes occurring on odd levels of a tree branch, we have
$|\rho_{i+1}|\le f(|\rho_i|+|\bx|+|\by|)$. This implies that for some
elementary $g\colon\nn\to\nn$, we have $|\rho_i|\le g^i(n)$, where $n$ is an
upper bound for both the size of our input VAS and the size of $\bx$ and $\by$.
Here, $g^i\colon\nn\to\nn$ is the $i$-fold iteration of $g$.
Similarly, we may assume that the NFA constructed for the perfect path-scheme
$\rho_i$ has at most $g^i(n)$-many states.  Therefore, we have
$|\weight{\rho_i}|\le g^i(n)$ for every $i\ge 1$.

\subsection{Bounding Path Lengths}
The discussion from the previous paragraph implies that the weight function $\weight{\cdot}$
is what is called a \emph{$(g,n)$-controlled ranking function}, and by a length
function theorems for ordinals~\cite[Theorem 3.5]{DBLP:books/hal/Schmitz17}
this implies that any branch of the tree has at most $g_{\omega^{|\Sigma|}}(n)$
odd-level nodes.
Here, for an ordinal $\alpha$, the function
$g_\alpha\colon\nn\to\nn$ is the $\alpha$-th Cicho\'{n} function based on
$g$~\cite[Section 3.2.2]{DBLP:books/hal/Schmitz17}.

\subsection{Bounding The Run Time}
We have thus bounded the length of the longest path in our tree. Let us now use
this to estimate the running time of our construction. We expand our tree
level-by-level. Suppose it takes time $t(i)$ to construct levels $1,\ldots,2i$
of our tree, including an NFA for the union of all subsets of $L_{\Sigma}^{\bx,\by}$ encountered so far. Then similar to the observation above, there is an elementary
function $h\colon\nn\to\nn$ such that $t(i)\le h^i(n)$. Without loss of generality, we can pick the elementary function $g$ above so that it also provides the time upper bound. Thus, we have $t(i)\le g^i(n)$.

Since the longest path in our tree has at most $g_{\omega^{|\Sigma|}}(n)$-many odd-level nodes, the full construction of our finite tree (including the final NFA) takes time
\[ t(g_{\omega^{|\Sigma|}}(n))\le g^{g_{\omega^{\Sigma}}(n)}(n). \]
By~\cite[Eq.~3.14]{DBLP:books/hal/Schmitz17}, the latter expression is exactly
\[ g^{\omega^{|\Sigma|}}(n), \]
where for an ordinal $\alpha$, the function
$g^\alpha\colon\nn\to\nn$ is the $\alpha$-th Hardy function based on
$g$~\cite[Section 3.2.2]{DBLP:books/hal/Schmitz17}. Thus, our construction can
be performed in time $g^{\omega^{|\Sigma|}}(n)$.
Hence, if $h=|\Sigma|$ is fixed, then we obtain a run-time bound of $g^{\omega^h}(n)$. And if the size of $\Sigma$ is not fixed, then we obtain a run-time bound of $g^{\omega^{\omega}}(n)$.

A \emph{decision problem} that runs in time $g^{\omega^{h}}(n)$ (resp.\
$g^{\omega^\omega}(n)$) on input size $n$ belongs to the relativized class
$\Fast_{g,h}$ (resp.\ $\Fast_{g,\omega}$). Since $g$ belongs to the
class $\Func_3$ of elementary functions, the class $\Fast_{g,|\Sigma|}$ (resp.\
$\Fast_{g,\omega}$) is included in $\Fast_{h+4}$ (resp.\
$\Fast_{\omega+4}=\Fast_{\omega}$) by \cite[Theorem
4.6]{DBLP:books/hal/Schmitz17}.  Here, we are not considering a decision
problem, but one of computing a function. However, the proof of \cite[Theorem
4.6]{DBLP:books/hal/Schmitz17}, i.e.\ that $\Fast_{g,\alpha}$ is included in
$\Fast_{\alpha+4}$ for elementary functions $g$, applies verbatim to complexity
classes for computing functions, since the proof (which can be found in
\cite[Theorem 4.2]{DBLP:journals/toct/Schmitz16}) is only about estimating time
bounds. This implies that for fixed $h=|\Sigma|$, we can compute the NFA in the
time bound for the class $\Fast_{h+4}$ (i.e.\ in primitive-recursive time), and
when $|\Sigma|$ is not fixed, then we can compute the NFA in the time bound for
$\Fast_\omega$ (i.e.\ Ackermannian time).

%We have thus bounded the length of every path by $2\cdot g^{\omega^{|\Sigma|}}(n)$.
% This means, every NFA we
%%construct for a perfect path-scheme has size at most
%%$g^{g_{\omega^{|\Sigma|}}}(n)$. The latter expression is exactly
%%$g^{\omega^{|\Sigma|}}(n)$.
%
%
%\footnote{The statement of \cite[Theorem 3.5]{DBLP:books/hal/Schmitz17} actually talks about $g_{\omega^{|\Sigma|}}$, which is the $\alpha$-th Cicho\'{n} function based on $g$, but the latter is upper bounded by the $\alpha$-th Hardy function based on $g$~\cite[Eq.~3.13]{DBLP:books/hal/Schmitz17}.}
%
%
%Thus, the NFA we obtain in the end accepts the union of
%\todo{MH: perhaps say a union of $g_{\omega^{|\Sigma|}}(n)$ nodes (or maybe we need another application of $g$ or to define $g$ to include two levels of the tree)}
%a set of NFAs, each of
%which having at most $g^{\omega^{|\Sigma|}}(n)$ states. Thus, the final NFA has
%at most $2^{(g^{\omega^{|\Sigma|}}(n))^2}$ states. \todo{GZ: finish proof}

%This implies that on each branch, the sequence of weight vectors on odd levels is controlled by a function of at most
%
%elementary growth.

% On the odd levels in between, the weight vectors lie between the weight vectors on the odd levels.

\section{Completion of Proof of Lemma~\ref{star-decomposition}}\label[appsec]{app:star-decomposition}

In this section, we complete the proof of~\cref{star-decomposition} by completing the proof of the induction step for $X_A$ when $|A| > 0$. Recall that we have shown that, in this case, $A^*$ can be decomposed as
\begin{equation}\label{eq:app-path-scheme-decomposition}
	A^* ~~~=~~~ \bigcup_G L(G)~~\cup~~\bigcup_{\begin{smallmatrix}B\subsetneq A \\ C\subsetneq A\end{smallmatrix}} B^*C^*~~\cup~~\bigcup_{\begin{smallmatrix}a\in A,\\B\subseteq A\setminus\{a\},\\C\subseteq A\setminus\{a\}\end{smallmatrix}} B^*aC^*,	
\end{equation}
where $G$ ranges over all gatherings $X_{a_1,\ldots,a_n}^{b_1,\ldots,b_n}$ with $A=\{a_1,\ldots,a_n\}=\{b_1,\ldots,b_n\}$. With~\cref{eq:app-path-scheme-decomposition} at our disposal, we now prove the following claim, which suffices to complete the proof of the induction step for $X_A$:
For each term $S$ that appears in the RHS of~\cref{eq:app-path-scheme-decomposition}, we can decompose $S$ as a finite set of pre-perfect path-schemes $\sigma_1,\ldots,\sigma_k$ such that each $\weight{\sigma_i} \lexeq \weight{X_A}$.
We do this by a case analysis on $S$.

\begin{itemize}
	\item Suppose $S = L(G)$ for some gathering $G = X_{a_1,\ldots,a_n}^{b_1,\ldots,b_n}$ with $A=\{a_1,\ldots,a_n\}=\{b_1,\ldots,b_n\}$. By definition each gathering is a pre-perfect path-scheme itself. Furthermore, by definition, we have that $\weight{G} = \weight{A}$ and so we are done.
	\item Suppose $S = B^*C^*$ for some $B \subsetneq A, C \subsetneq A$. In this case, we
	use the induction hypothesis on $X_B$ to get a finite set of pre-perfect path-schemes
	$\rho_1^B,\dots,\rho_m^B$ such that $L(X_B) = B^* = \bigcup_{i} \rho_i^B$ and for each $i$, $\weight{\rho_i^B} \lexeq \weight{X_B}$. Similarly, we get a finite set of pre-perfect path-schemes
	$\rho_1^C,\dots,\rho_n^C$ for $X_C$. It is then clear that 
	$S = \bigcup_{\begin{smallmatrix}1 \le i \le m, \\ 1 \le j \le n\end{smallmatrix}} \rho_i^B \rho_j^C$.
	Furthermore $\weight{\rho_i^B\rho_j^C} = \weight{\rho_i^B} + \weight{\rho_j^C} \lexeq 
	\weight{X_B} + \weight{X_C} = \overline{|B|} + \overline{|C|} \lexlt \overline{|A|} = \weight{X_A}$.
	Here all the equalities follow by definition of the weight function, the first inequality follows from induction hypothesis and the second inequality follows
	from the fact $\overline{|B|} + \overline{|C|}$ comprises only 0's in any component bigger than or equal to $|A|$. This completes the proof for this case.
	\item Suppose $S = B^*aC^*$ for some $a \in A, B \subsetneq A, C \subsetneq A$. Similar to the
	previous case, we use the induction hypothesis on $X_B$ and $X_C$ to get a finite set of pre-perfect path-schemes
	$\rho_1^B,\dots,\rho_m^B$ and
	$\rho_1^C,\dots,\rho_n^C$. It is then clear that 
	$S = \bigcup_{\begin{smallmatrix}1 \le i \le m, \\ 1 \le j \le n\end{smallmatrix}} \rho_i^B a \rho_j^C$.
	Also, the fact that $\weight{\rho_i^Ba\rho_j^C} \lexeq \weight{X_A}$ follows by the same reasoning as the previous case. This completes the proof for this case.
\end{itemize}

This establishes that $X_A$ can be decomposed as a finite union of pre-perfect path schemes such that the weight of each such path-scheme is at most the weight of $X_A$.
Hence, this completes the proof of the induction step and also allows us to conclude the proof of the lemma.

\section{Completion of Proof of Theorem~\ref{thm:gathering-lifting-runs}}\label[appsec]{app:liftingruns}

In this section, we complete the proof of~\cref{thm:gathering-lifting-runs}, by constructing 
firing sequences $\alpha', \beta'$ and $\gamma'$
such that $\bx \xrightarrow{\alpha' u' }\bx' \xrightarrow{\beta' \centr(w')} \by' \xrightarrow{\gamma' v'} \by$. To this end, recall that we already have runs $r_1 = \bx \xrightarrow{\alpha u} \bx'$, $r_2 = \bx' \xrightarrow{\beta \centr(w)} \by'$ and $r_3 = \by' \xrightarrow{\gamma v} \by$ over the words $u, \centr(w)$ and $v$
such that $u \preceq u'$, $\centr(w) \preceq \centr(w')$, $v \preceq v'$ and the following properties are satisfied:
\begin{itemize}
	\item If a counter is non-zero at any point in the run $r_1$,
	then it stays non-zero from that point onwards in $r_1$.
	\item If a counter is zero at any point in the run $r_3$,
	then it stays zero from that point onwards in $r_3$.
	\item If a counter was non-zero at any point in either $r_1$ or $r_3$,
	then it stays non-zero along the run $r_2$.
\end{itemize}
We now use these runs $r_1, r_2$ and $r_3$ to construct the required firing sequences $\alpha' u', \beta' \centr(w')$ and $\gamma' v'$. We begin by first constructing $\alpha'u'$.

\subsection*{Constructing $\ \alpha'u'$} 

We now describe how we can redistribute the firing fractions in $\alpha u$ to accommodate for the 
additional occurrences of each transition in $u'$.

First, note that $u = a_1a_2\ldots a_n$ and $u'$ is some word which is in the language defined by
\[
    a_1 \{a_1\}^* a_2 \{a_1,a_2\}^* \dots a_{n-1}\{a_1,\dots,a_{n-1}\}^* a_n \ .
\]
Hence, we can split $u'$ as $u' = a_1 u_2 a_2 u_3 \dots a_{n-1} u_n a_n$
where each $u_i \in \{a_1,\dots,a_{i-1}\}^*$.
For each letter $a_i$, let $\extra_i$ be the number of times $a_i$ appears in the word $u_2 u_3 \dots u_n$,
i.e., $\extra_i$ is the number of times $a_i$ appears in $u'$ minus 1.

Let us denote the run $r_1 = \bx \act{\alpha u} \bx'$ by
$\bx = \bx_0 \xrightarrow{\alpha_1 a_1} \bx_1 \xrightarrow{\alpha_2 a_2} \bx_2 \ldots \bx_{n-1} \xrightarrow{\alpha_n a_n} \bx_n = \bx'$.
By assumption on $r_1$, we are guaranteed that for any $1 \le i \le n$ and any counter $c$, if $\bx_i(c) > 0$ 
then $\bx_j(c) > 0$ for all $j \ge i$. Since we can only decrement a counter if it is non-zero,
this means that 
\begin{quote}
	\textbf{Fact 1: } For any counter $c$ that is decremented by the transition $a_i$,
	we have $\bx_{i}(c) = \bx(c) + \sum_{j=1}^{i} \alpha_j a_j(c) > 0$.
	Hence, it also then follows that $\bx_{j}(c) > 0$ for any $j \ge i$.
\end{quote}

Let $\epsilon$ be a very small non-zero fraction such that for every $1 \le i \le n$,
$\extra_i \cdot \epsilon < \alpha_i$ and $(\sum_{k=1}^n \extra_k) \cdot \epsilon |a_j(c)| < \bx_i(c)$ for every $\bx_i(c) > 0$ and every $a_j$.
%\mh{
%	Do we really need the $/2$?
%	In the end this would guarantee we don't drop below half the value of $\bx_i(c)$, but all we need is that we don't hit 0.
%	Right?
%}
%\bala{Changed it}
For each $i$, we now let $\alpha''_i = \alpha_i - \epsilon \cdot \extra_i$.
Now, in the word $u'$, for each $a_i$, we are going to fire the first appearance of $a_i$
with fraction $\alpha_i''$ and every other occurrence by $\epsilon$. Let $\alpha' u'$ be the
firing sequence obtained by firing each transition in $u'$ as specified in the previous line.
%We will now prove that $\bx \act{\alpha' u'} \bx'$.

If we now prove that while firing $\alpha' u'$ from $\bx$, we never go below zero on any counter,
then for each counter $c$, the final value is $\bx(c) + \sum_{i=1}^n (\alpha''_i + \extra_i \cdot \epsilon) a_i(c) = \bx(c) + \sum_{i=1}^n \alpha_i a_i(c) = \bx'(c) $.
So, if we never go below zero on any counter, this is a valid run from $\bx$ to $\bx'$.

So the only thing left to prove is that while firing $\alpha' u'$ from $\bx$, no counter goes below zero. Towards a contradiction, suppose the value of some counter $c$ becomes negative at some point. Pick the first such point. 
Let $a_1 u_2 \dots a_{i-1} u_i u''$ be the run till that point such that $u''$ is a prefix
of $a_i u_{i+1}$.

\textbf{Case 1: } Suppose $u'' = a_i$. Note that till the last point of $u_{i}$, for each $j \le i-1$, we have fired $a_j$ with fraction $\alpha_j''$
once and then again for some number of times (say $h_j \le \extra_j$) with fraction $\epsilon$.
Hence, the value of the component $c$ till the last point of $u_i$ is
$V = \bx(c) + \sum_{j=1}^{i-1} (\alpha_j'' + h_j  \cdot\epsilon ) a_j(c) \ge 0$.
Since $u''$ is the first point at which $a_i$ occurs, $a_i$ is fired with fraction $\alpha_i''$.
Now since $c$ becomes negative after firing $a_i$ with fraction $\alpha_i''$, it must be the case that $a_i$ decrements $c$ and $V + \alpha_i'' a_i(c) < 0$.

By Fact 1, it follows that $\bx_{i}(c) = \bx(c) + \sum_{j=1}^{i} \alpha_j a_j(c) > 0$. 
Hence, we then have
\begin{align*}
	&V + \alpha_i'' a_i(c) \\
    &\quad = \bx(c) + \sum_{j=1}^{i-1} (\alpha_j'' + h_j \cdot \epsilon) a_j(c) +  \alpha_i'' a_i(c) & \text{ Substitute value for } V\\
	&\quad= \bx_i(c) - \sum_{j=1}^i \alpha_j a_j(c) + \sum_{j=1}^{i-1} (\alpha_j'' + h_j \cdot \epsilon) a_j(c) + \alpha_i'' a_i(c) & \text{ Substitute value for } \bx(c)\\
	&\quad= \bx_i(c) -\sum_{j=1}^{i-1} (\extra_j - h_j) \cdot \epsilon a_j(c) - \extra_i \cdot \epsilon a_i(c) & \text { Substitute value for } \alpha_j''\\
	&\quad\ge \bx_i(c) - \left (\sum_{j=1}^i \extra_j \right) \cdot \epsilon (\max\{|a_1(c)|, |a_2(c)|, \ldots, |a_i(c)|\}) 
\end{align*}

Now since $\bx_i(c) > 0$, by definition of $\epsilon$, it follows that the last 
value is $> 0$ and so we have that $V + \alpha_i'' a_i(c) > 0$, which leads to a contradiction.

%\mh{
%	I don't get
%	$V + \beta_i' a_i(c) = \bx_i(c) - \sum_{i=1}^{i-1} h_j \epsilon a_j(c)$.
%	By my reckoning
%	$V + \beta_i' a_i(c) = \bx(c) + \sum_{j=1}^{i} (\beta_j' + (h_j \epsilon)) a_j(c)$
%	and
%	$\bx_i(c) = \bx(c) + \sum_{j=1}^{i} \beta_j a_j(c)$.
%	So the former is missing $\extra_j - h_j$ epsilons of $\beta_j$.
%	We should have
%	$V + \beta_i' a_i(c) = \bx_i(c) - \sum_{j=1}^{i} (\extra_j - h_j) \epsilon a_j(c)$
%	which is still $> 0$.
%}
%\bala{It is fixed now}

\textbf{Case 2: } Suppose $u'' \neq a_i$, but is a prefix of $a_i u_{i+1}$. Hence, till the penultimate point of $u''$, for each $j \le i$, we have fired $a_j$ with fraction $\alpha_j''$
once and then again for some number of times (say $h_j \le \extra_j$) with fraction $\epsilon$.
Hence, the value of the component $c$ till the penultimate point of $u''$ is
$V = \bx(c) + \sum_{j=1}^{i} (\alpha_j'' + h_j  \cdot\epsilon ) a_j(c) \ge 0$.
Let $a_k$ be the transition appearing at the end of $u''$ for some $k \le i$. Note that
at this point $a_k$ is fired with fraction $\epsilon$ and so by definition $h_k < \extra_k$.
Now since $c$ becomes negative after firing $a_k$ with fraction $\epsilon$, it must be the case that $a_k$ decrements $c$ and $V + \epsilon a_k(c) < 0$.

Since $k \le i$, by Fact 1, it follows that $\bx_{i}(c) = \bx(c) + \sum_{j=1}^{i} \alpha_j a_j(c) > 0$. 
Hence, we then have
\begin{align*}
	&V + \epsilon a_k(c) \\
    &\quad= \bx(c) + \sum_{j=1}^{i} (\alpha_j'' + h_j \cdot \epsilon) a_j(c) + \epsilon a_k(c) & \text{ Substitute value for } V\\
	&\quad= \bx_i(c) - \sum_{j=1}^i \alpha_j a_j(c) + \sum_{j=1}^{j} (\alpha_j'' + h_j \cdot \epsilon) a_j(c) + \epsilon a_k(c) & \text{ Substitute value for } \bx(c)\\
	&\quad= \bx_i(c) -\left(\sum_{j=1}^{i} (\extra_j - h_j) \cdot \epsilon a_j(c) - \epsilon a_k(c) \right) & \text { Substitute value for } \alpha_j''\\
	&\quad\ge \bx_i(c) - \left (\sum_{j=1}^i \extra_j \right) \cdot \epsilon (\max\{|a_1(c)|, |a_2(c)|, \ldots, |a_i(c)|\}) 
\end{align*}

Now since $\bx_i(c) > 0$, by definition of $\epsilon$, it follows that the last 
value is $> 0$ and so we have that $V + \epsilon a_k(c) > 0$, which leads to a contradiction.
This completes the proof of construction of $\alpha' u'$.

\subsubsection*{Constructing $\ \gamma'v'$}

We now describe how we can redistribute the firing fractions in $\gamma v$ to accommodate for the 
additional occurrences of each transition in $v'$. The proof of this part is the same as the proof of the previous part, except we apply everything in reverse. 

Let $A = \{a_1,\dots,a_n\}$. Note that $v'$ is some word which is contained in 
$b_1 (A \setminus \{b_1\})^* b_2 (A \setminus \{b_1,b_2\})^* \dots b_{n-1}(A \setminus \{b_1,\dots,b_{n-1}\})^* b_n$.
Now, we consider the transitions $b_1^\dagger, b_2^\dagger, \ldots b_n^\dagger$
where each $b_i^\dagger = - b_i$. Let $v'^\dagger$ be the word obtained by first reversing $v'$
and then replacing each $b_i$ with $b_i^\dagger$. Note that
$v'^\dagger$ is some word which is contained in $b_n^\dagger \{b_n^\dagger\}^* b_{n-1}^\dagger 
\{b_{n}^\dagger,b_{n-1}^\dagger\}^* \dots b_2^\dagger \{b_n^\dagger,\dots,b_2^\dagger\}^* b_1^\dagger$.

%Hence, we can split $u'$ as $u' = a_1 u_2 a_2 u_3 \dots a_{k-1} u_k a_k$
%where each $u_i \in \{a_1,\dots,a_{i-1}\}^*$.
%For each letter $a_i$, let $\extra_i$ be the number of times $a_i$ appears in the word $u_2 u_3 \dots u_k$,
%i.e., $\extra_i$ is the number of times $a_i$ appears in $u'$ minus 1.

Let us denote the run $r_3 = \by' \act{\gamma v} \by$ by
$\by' = \by_0 \xrightarrow{\gamma_1 b_1} \by_1 \xrightarrow{\gamma_2 b_2} \by_2 \ldots \by_{n-1} \xrightarrow{\gamma_n b_n} \by_n = \by$.
By assumption on $r_2$, we are guaranteed that for any $1 \le i \le n$ and any counter $c$, if $\bx_i(c) = 0$ 
then $\bx_j(c) = 0$ for all $j \ge i$. Now, by definition of a run, it follows that 
\[\by_n = \by \xrightarrow{\gamma_n b_n^\dagger} \by_{n-1} \xrightarrow{\gamma_{n-1} b_{n-1}^\dagger} \by_{n-2} \dots \by_1 \xrightarrow{\gamma_1 b_1^\dagger} \by_0 = \by' \]
is also a run such that if at some configuration along this new run
the value in some counter is \emph{non-zero}, then the value in that counter stays non-zero
till the end of this new run. 

This means that by the previous construction, we now have a sequence of fractions $\gamma'^\dagger$
such that $\by \xrightarrow{\gamma'^\dagger v'^\dagger} \by'$.
If we let $\gamma'$ be the sequence obtained by reversing $\gamma'^\dagger$,
it follows that $\by' \xrightarrow{\gamma' v'} \by$.
This completes the proof of construction of $\gamma' v'$.

\subsubsection*{Constructing $\ \beta' \centr(w')$} 

We now describe how we can redistribute the firing fractions in $\beta \centr(w)$ to accommodate for the 
additional occurrences of each transition in $\centr(w')$. 

Recall that the set of transitions appearing in $\centr(w)$ is exactly $A = \{a_1,\dots,a_n\}$. Since $\centr(w) \preceq \centr(w')$ and since $\centr(w) \in A^*$, it follows that the set of transitions appearing in $\centr(w')$  is also exactly $A$. 

Let us denote the run $r_2 = \bx' \act{\beta \centr(w)} \by'$ by
$\bx' = \bx_0 \xrightarrow{\beta_1 d_1} \bx_1 \xrightarrow{\beta_2 d_2} \bx_2 \dots \bx_{m-1} \xrightarrow{\beta_m d_m} \bx_m = \by'$. By assumption on $r_2$,
we are guaranteed the following fact 
\begin{quote}
	\textbf{Fact 2: } For any counter $c$, if $\bx'(c) > 0$ or $\by'(c) > 0$  then $\bx_j(c) > 0$ for all $j$. 	
\end{quote}

Now, note that since $\centr(w) \preceq \centr(w')$, it follows that $\centr(w)$ is obtained from $\centr(w')$ by deleting some letters. This is the same as saying that $\centr(w')$ is obtained from
$\centr(w)$ by inserting some letters.
With this observation in mind, we now construct the desired firing sequence $\beta' \centr(w')$ by a case distinction on $\centr(w')$.

\paragraph{Inserting one letter: } First, we consider the case when $\centr(w')$ is obtained by just inserting a single letter
into $\centr(w)$, i.e., $\centr(w') = d_1 d_2 \ldots d_i d d_{i+1} \ldots d_m$ where $\centr(w) = d_1 d_2 \ldots d_m$. Since the set of letters in $\centr(w)$ and $\centr(w')$ are the same,
it follows that $d$ appears either in $d_1 \ldots d_i$ or in $d_{i+1} \ldots d_m$.
Let $d_j$ be either the last appearance of $d$ in $d_1 \ldots d_i$
or the first appearance of $d$ in $d_{i+1} \ldots d_m$.
We now show that it is possible to find a sequence of fractions $\beta'$
such that $\bx' \xrightarrow{\beta' \centr(w')} \by'$ and the following property is satisfied: 
If the value of some counter is non-zero either at $\bx'$ or $\by'$, then it stays non-zero along all the configurations along this run.

To this end, we let $\epsilon$ be a non-zero fraction that is smaller than all the fractions $\beta_1, \dots, \beta_m$ and also all the numbers $\bx_{\ell}(c)$ for any counter $c$ and any $1 \le \ell \le m$ satisfying $\bx_\ell(c) > 0$. Furthermore, we also require that $\epsilon \cdot |a_\ell(c)| < \bx_{\ell'}(c)$, for any $1 \le \ell \le n$ and any $1 \le \ell' \le m$ and any counter $c$ such that $\bx_{\ell'}(c) > 0$.
Note that such an $\epsilon$ can always be chosen.
We then let $\beta' = \beta_1', \beta_2', \dots, \beta_i', \beta_{sp}', \beta_{i+1}', \dots, \beta_m'$ as follows: $\beta'_\ell = \beta_\ell$ for every $\ell \neq j$, $\beta'_j = \beta_j - \epsilon$
and $\beta'_{sp} = \epsilon$. 

Notice that the net effect of firing $\centr(w')$ with the sequence of fractions $\beta'$
is the same as firing $\centr(w)$ with the sequence of fractions $\beta$. So the only thing left to prove is that while firing the transitions in the manner specified above no counter goes below zero.

Let us denote the run obtained by firing $\centr(w')$ with the sequence of fractions $\beta'$
from $\bx'$ as 
\[\bx' = \bx'_0 \xrightarrow{\beta_1' d_1} \bx'_1 \xrightarrow{\beta_2' d_2} \bx'_2 \dots 
\bx'_i \xrightarrow{\beta'_{sp} d} \bx'_{sp} \xrightarrow{\beta'_{i+1} d_{i+1}} \bx'_{i+1} \dots
\bx'_{m-1} \xrightarrow{\beta'_m d_m} \bx'_m = \by' \]

We need to prove that for any counter $c$, $\bx'_k(c)$ is never negative for any $k \in \{0,1,\dots,m\} \cup \{sp\}$
as well as that if $\bx'(c) > 0$ or if $\by'(c) > 0$ then $\bx'_k(c) > 0$ for any $k \in \{0,1,\dots,m\} \cup \{sp\}$. Let us prove this by induction on $k$. Clearly this is true for the base case of $k = 0$.
Suppose we have proved it for all numbers till some $k-1$ and we want to prove it for some $k$.

\textbf{Case 1: } Suppose $j$ comes before $sp$. Then, 

\begin{itemize}
	\item Suppose $k$ comes before $j$. 
	Then, notice that $\beta'_\ell = \beta_\ell$ for all $\ell < k$.
	Hence, $\bx'_k = \bx_k$ and so the claim follows by Fact 2.
	\item Suppose $k = j$. Then, $\beta'_j = \beta_j - \epsilon$.
	By the previous case it follows that $\bx'_{j-1} = \bx_{j-1}$ 
	and so $\bx'_j = \bx_j - \epsilon d_j$. By construction, $\epsilon$ was picked
	so that $\bx_j(c) - \epsilon d_j(c) > 0$ for any $c$ such that $\bx_j(c) > 0$.
	It follows that the claim is true for the configuration $\bx'_j$.   
	\item Suppose $k$ comes between $j$ and $sp$. Then, $\beta'_k = \beta_k$.
	It follows by the above two cases that $\bx'_{k-1} = \bx_{k-1} - \epsilon d_j$
	and so $\bx'_k = \bx_{k} - \epsilon d_j$. By construction, $\epsilon$ was picked
	so that $\bx_k(c) - \epsilon d_j(c) > 0$ for any $c$ such that $\bx_k(c) > 0$.
	It follows that the claim is true for the configuration $\bx'_k$.
	\item Suppose $k = sp$. Then, $\beta'_{sp} = \epsilon$. 
	The position before $sp$ is $i$.
	By the previous case we have that $\bx'_i = \bx_i - \epsilon d_j$.
	Since $d_j = d$ and $d$ is the transition that appears at position $sp$,
	it follows that $\bx'_{sp} = \bx'_i + \epsilon d_j = \bx_i$. 
	The claim then follows by Fact 2.
	\item Suppose $k$ comes after $sp$. Then, notice that $\beta'_k = \beta_k$.
	By the previous case we have that $\bx'_{k-1} = \bx_{k-1}$ and so $\bx'_k = \bx_k$.
	Hence the claim follows by Fact 2.
\end{itemize}

\textbf{Case 2: } Suppose $sp$ comes before $j$. The proof here is similar to the case above, except the $- \epsilon d_j$ trailing term in each of the configurations will be replaced by $\epsilon d_j$.

This finishes the proof for the case in which $\centr(w')$ is obtained from $\centr(w)$ by
inserting one letter.

\paragraph{Inserting multiple letters: } Suppose $\centr(w')$ is obtained from $\centr(w)$
by inserting multiple letters, i.e., $\centr(w') = U_1 e_1 U_2 e_2 \dots U_m e_m U_{m+1}$
where $\centr(w) = U_1 U_2 \dots U_m U_{m+1}$. Then, consider the sequence
of words
$w_0 = \centr(w)$ and 
\[w_{i+1} = U_1 e_1 U_2 e_2 \dots U_{i+1} e_{i+1} U_{i+2} U_{i+3} \dots U_m U_{m+1}\]
Note that each $w_{i+1}$ is obtained from $w_i$ by inserting one letter.
Starting from $w_0$, by repeatedly applying the previous case, it follows that for each $w_{i+1}$, we can find 
a sequence of fractions $\beta'_{i+1}$ such that $\bx' \xrightarrow{\beta'_{i+1} w_{i+1}} \by'$
with the property that if at $\bx'$ or $\by'$ the value of a counter is non-zero, then
it stays non-zero along all the configurations of the run. Since $w_m = \centr(w')$,
we have constructed the required sequence of fractions $\beta' = \beta'_m$.

This completes the proof of construction of $\beta' \centr(w')$ and hence also the proof of~\cref{thm:gathering-lifting-runs}.

%\section{Completion of Proof of Theorem~\ref{thm:path-scheme-gathering-decomp}}\label[appsec]{app:gathering-decomp-1}
%\input{app-gathering-decomposition-1}
\section{Completion of Proof of Theorem~\ref{thm:path-scheme-complement-gathering-decomp}}\label[appsec]{app:gathering-decomp-2}
In this section, we complete the proof of~\cref{thm:path-scheme-complement-gathering-decomp} by proving that $\bigcup_{i \in [1,m]} L(\rho_i) = L(X) \setminus (w)^\uparrow_X$.
To this end, for each $i$, by the definition of $\fl{X}$ and $\sigma_i$, it follows that $\rho_i$ is precisely the path-scheme given by
\[a_1 X_{\{a_1\}} a_2 X_{\{a_1,a_2\}} \ldots a_n X_{\overline{c_1}} c_1 X_{\overline{c_2}} c_2 \ldots X_{\overline{c_i}} b_1 X_{A\setminus\{b_1\}}
b_2 X_{A\setminus\{b_1, b_2\}}
\ldots
X_{A\setminus\{b_1, \ldots, b_{n-1}\}}
b_n
\]

It follows that any word $w'$ belongs to $L(\rho_i)$ iff $w'$
can be written as
\[w' = a_1 w_1' a_2 w_2' \dots a_n w'' b_1 w'_{n} \dots w'_{2n-2} b_n\]
where
\begin{itemize}
	\item For each $i \in [1,n-1], w_i' \in L(X_{\{a_1,\ldots,a_i\}}) = \{a_1,\dots,a_i\}^*$,
	\item $w'' \in L(X_{\overline{c_1}} c_1 X_{\overline{c_2}} c_2 \ldots X_{\overline{c_i}}) = L(\sigma_i) \subseteq L(X_A)$,
	\item And for each $i \in [n,2n-2]$, $w_i' \in L(X_{A \setminus \{b_1,\dots,b_{i-n+1}\}}) = (A \setminus \{b_1,\dots,b_{i-n+1}\})^*$.
\end{itemize}

Hence, this means that $w' \in L(\rho_i)$ iff $w' \in L(X)$ and $\centr(w') \in L(\sigma_i)$.
Since $\bigcup_{i \in [1,m]} L(\sigma_i)$ precisely captures the set of all words that are 
not larger than or equal to $\centr(w)$ with respect
to the subword ordering, it follows that $w' \in \bigcup_{i \in [1,m]} L(\rho_i)$ iff $w' \in L(X)$ and $\centr(w) \not \preceq \centr(w')$. The latter is precisely the definition of $L(X) \setminus (w)^\uparrow_X$ and so the proof is complete.

\section{Comments on the lower bound}\label[appsec]{app:lower-bound}
Here is an exponential run from $\bx_n$ to $\by_n$ that is of the form  in \eqref{lower-bound-runs}: Take the run
\[ w_1^{2n}r_1w_2^{4n}r_2\cdots w_h^{2^hn} r_n u\,, \]
hence we choose $\ell_i=2^in$ for $i=1,\ldots,h$. This is clearly not a short run (by our definition of short run), but it is of at most exponential length.

\myparagraph{Intuition} Before we explain in detail how this can be done, let us give some intuition.
The reason our lower bound construction enforces short runs to be maxed out is that during the firing of $w_i$, the only way to put $4$ into $\step_i$ is to always use the maximal possible firable fractions: $t_{i,1}$ needs to empty $x_1$ completely, and then $t_{i,4}$ needs to fill up $y_1$ completely. This is because in an $(i,k)$-short run, we need the full $\tfrac{1}{k}$ fraction during $t_{i,2}t_{i,3}$ and during $t_{i,5}t_{i,6}$ in order to get $\step_i$ up to $4$ in only $k$ iterations of $w_i$.

However, if we allow ourselves to iterate $w_i$ a whole $2k$ number of times
(starting from a configuration that is consistent with $(i,k)$-short runs,
i.e.\ $\langle \tfrac{1}{k};\tfrac{1}{k};0\rangle$) and still make sure that
$t_{i,5}$ and $t_{i,6}$ can fire with fraction at least $\tfrac{1}{k}$, then it
is not necessary for $t_{i,2}$ and $t_{i,3}$ to fire will the full
$\tfrac{1}{k}$ fraction: The $2k$ occurrences of $t_{i,5}t_{i,6}$ can compensate
for whatever $t_{i,2}t_{i,3}$ were not able to add to $\step_i$. Therefore, we can fire $t_{i,1}$ and $t_{i,4}$ so as to only subtract a very small value (we will choose $\tfrac{1}{2k\cdot k}$) from $x_i$. This allows us to fire all copies of $w_i$ and instead of turning $x_i$ from $\tfrac{1}{k}$ into $\tfrac{1}{k\cdot 2^k}$ (as we must in a $(i,k)$-short run), we turn $x_i$ into $\tfrac{1}{2}\cdot\tfrac{1}{k}$. This way, each factor $w_i^{\ell_i}=w_i^{2^in}$ only halves the values in the high-precision counters, instead of applying an exponential function to their denominator.

\myparagraph{Detailed run description}
To show that the above sequence can be fired from $\bx_n$ to $\by_n$, we prove the following claim:
\begin{lemma}\label{exponential-solution:intermediate-steps} For every $k\ge 0$, we can fire
\begin{equation}\left\langle \frac{1}{k}; \frac{1}{k}; 0\right\rangle \xrightarrow{w_i^{2k}} \left\langle \frac{1}{2k}; \frac{1}{k}; 4\right\rangle \label{exponential-solution:intermediate-goal}\end{equation}
with appropriate firing fractions.
\end{lemma}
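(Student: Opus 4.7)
The plan is to apply all $2k$ copies of $w_i=t_{i,1}t_{i,2}t_{i,3}t_{i,4}t_{i,5}t_{i,6}$ with the same firing fractions, chosen so that every iteration has exactly the same effect: decrease $x_i$ and each $x_j,y_j$ (for $j>i$) by $\tfrac{1}{4k^2}$, keep $y_i$ at $\tfrac{1}{k}$, and add $\tfrac{2}{k}$ to $\step_i$. Summed over $2k$ iterations this brings $x_i, x_j, y_j$ (for $j>i$) from $\tfrac{1}{k}$ down to $\tfrac{1}{k}-2k\cdot\tfrac{1}{4k^2}=\tfrac{1}{2k}$ and $\step_i$ to $4$, matching the right-hand side of \eqref{exponential-solution:intermediate-goal}. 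The intuition is that, compared to the maxed out run which is forced to use the full fraction $\tfrac{1}{k}$ in $t_{i,2}t_{i,3}$ in order to cram $\tfrac{4}{k}$ into $\step_i$ per iteration, allowing $2k$ iterations instead of $k$ gives enough slack that $t_{i,2}t_{i,3}$ may use a very small fraction and the bulk of the $\step_i$ credit is instead produced by $t_{i,5}t_{i,6}$.

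Concretely, I would fire each iteration with $\alpha_1=\alpha_2=\alpha_3=\alpha_4=\tfrac{1}{4k^2}$ and $\alpha_5=\alpha_6=\tfrac{1}{k}-\tfrac{1}{4k^2}$. The symmetric choices $\alpha_1=\alpha_4$ and $\alpha_5=\alpha_6$ ensure that $y_i$ is preserved across an iteration; the choice $\alpha_2=\alpha_3$ makes $t_{i,2}t_{i,3}$ leave $x_i,\bar{x}_i$ unchanged while depositing $2\alpha_2=\tfrac{1}{2k^2}$ into $\step_i$; and $t_{i,5}t_{i,6}$ adds the remaining $\tfrac{2}{k}-\tfrac{1}{2k^2}$. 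A quick net-effect check on $x_i$ gives $-2\alpha_1+\alpha_2-\alpha_3+\alpha_4=-\tfrac{1}{4k^2}$, and on $x_j,y_j$ (for $j>i$) it gives $-2\alpha_1+\alpha_4=-\tfrac{1}{4k^2}$, as desired.

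Feasibility is a straightforward induction on the iteration index $m\in\{1,\ldots,2k\}$ with the invariant that at the start of iteration $m$, $x_i=\tfrac{1}{k}-(m-1)\tfrac{1}{4k^2}$ (same for $x_j,y_j$ with $j>i$), $y_i=\tfrac{1}{k}$, $\bar{y}_i=0$, and $\bar{x}_i=\bar{x}_j=\bar{y}_j=(m-1)\tfrac{1}{4k^2}$. Unfolding the six transitions, one checks all required lower bounds: after $t_{i,1}$ the counters $\bar{x}_i,\bar{x}_j,\bar{y}_j$ are at $(m+1)\tfrac{1}{4k^2}\ge\tfrac{1}{4k^2}$ and $\bar{y}_i$ is at $\tfrac{1}{4k^2}$, so $t_{i,2},t_{i,3},t_{i,4}$ all fire with $\tfrac{1}{4k^2}$; after $t_{i,4}$, $y_i$ is restored to $\tfrac{1}{k}$ and $\bar{y}_i$ back to $0$, so $t_{i,5}$ has $y_i=\tfrac{1}{k}\ge\tfrac{1}{k}-\tfrac{1}{4k^2}$ available and after it fires, $t_{i,6}$ has $\bar{y}_i=\tfrac{1}{k}-\tfrac{1}{4k^2}$ available. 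The conservation laws $x_i+\bar{x}_i=y_i+\bar{y}_i=\tfrac{1}{k}$ and $x_j+\bar{x}_j=y_j+\bar{y}_j=\tfrac{1}{k}$ (for $j>i$), which hold throughout $w_i^*$ since each of the six transitions preserves these sums, supply the remaining upper bounds automatically. The only ``obstacle'' is really just bookkeeping across the six sub-steps and four counter families per iteration; the symmetric choice of fractions above makes every iteration identical, which keeps the induction step short.
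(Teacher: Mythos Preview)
Your proof is correct and follows essentially the same structure as the paper's: both fire each copy of $w_i$ with the same fractions, both take $\alpha_1=\alpha_4=\tfrac{1}{4k^2}$ so that one iteration subtracts exactly $\tfrac{1}{4k^2}$ from the precision counters, and both check feasibility via the conservation laws $x_i+\bar{x}_i=y_i+\bar{y}_i=\tfrac{1}{k}$.

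The one noteworthy difference is in the choice of the $\step_i$-fractions. The paper sets $\alpha_2=\alpha_3=\gamma\cdot\tfrac{1}{2k^2}$ and $\alpha_5=\alpha_6=\gamma\cdot\tfrac{1}{k}$ for a free parameter $\gamma\in(0,1)$, so that after $2k$ iterations $\step_i$ carries $\gamma\cdot 4(\tfrac{1}{2k}+1)$, and then solves for $\gamma$ at the end to hit $4$ exactly. Your choice $\alpha_2=\alpha_3=\tfrac{1}{4k^2}$ and $\alpha_5=\alpha_6=\tfrac{1}{k}-\tfrac{1}{4k^2}$ is tuned so that each iteration deposits exactly $\tfrac{2}{k}$ into $\step_i$ (the shortfall $\tfrac{1}{2k^2}$ in $\alpha_5+\alpha_6$ is exactly compensated by $\alpha_2+\alpha_3$), eliminating the parameter altogether. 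This is a mild simplification of the paper's argument; the underlying idea and the feasibility analysis are the same.
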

\begin{proof}
We construct the run one copy of $w_i$ at a time. Specifically, we show by induction on $j=1,\ldots,2k$ that we can fire
\begin{equation} \left\langle \frac{1}{k}; \frac{1}{k}; 0\right\rangle \xrightarrow{w_i^j} \left\langle \frac{1}{k}-\frac{j}{4k\cdot k}; \frac{1}{k}; j\cdot 2\gamma\left(\frac{1}{2k\cdot k}+\frac{1}{k}\right)\right\rangle \label{exponential-solution:induction-hypothesis}\end{equation}
for every rational choice of $\gamma\in[0,1)$. Here, we use the notation for configurations as described in \eqref{lower-bound:notation-configurations}.

Indeed, we can fire the $j$-th copy of $w_i$ by using the following fractions:
\begin{enumerate}
\item $t_{i,1}$ with fraction $\tfrac{1}{2}\cdot \tfrac{1}{2k\cdot k}$.
\item both $t_{i,2}$ and $t_{i,3}$ with fraction $\gamma\cdot \tfrac{1}{2k\cdot k}$
\item $t_{i,4}$ with $\tfrac{1}{2}\cdot \tfrac{1}{2k\cdot k}$
\item both $t_{i,5}$ and $t_{i,6}$ with fraction $\gamma\cdot\tfrac{1}{k}$
\end{enumerate}
This is because then $t_{i,1}$ and $t_{i,4}$ together first subtract $\tfrac{1}{2k\cdot k}$ from $x_i$ and then add $\tfrac{1}{2}\cdot\tfrac{1}{2k\cdot k}$ to $x_i$, meaning overall we subtract $\tfrac{1}{2}\cdot\tfrac{1}{2k\cdot k}=\tfrac{1}{4k\cdot k}$. The effect on $y_i$ is to first subtract $\tfrac{1}{2}\cdot \tfrac{1}{2k\cdot k}$ and then add the same value, hence $y_i$ returns to $\tfrac{1}{k}$. Moreover, the effect on $\step_i$ is that we add
\[ 2\cdot\gamma\cdot\frac{1}{2k\cdot k}+2\cdot\gamma\cdot\frac{1}{k}=2\gamma\left(\frac{1}{2k\cdot k}+\frac{1}{k}\right) \]
Thus, we have established \eqref{exponential-solution:induction-hypothesis}.

Applying \eqref{exponential-solution:induction-hypothesis} to $j=2k$ yields:
\[ \left\langle \frac{1}{k}; \frac{1}{k}; 0\right\rangle \xrightarrow{w_i^{2k}} \left\langle \frac{1}{2k}; \frac{1}{k}; 2k\cdot 2\gamma\left(\frac{1}{2k\cdot k}+\frac{1}{k}\right)\right\rangle=\left\langle \frac{1}{2k}; \frac{1}{k}; \gamma\cdot 4(\tfrac{1}{2k}+1)\right\rangle. \]
Therefore, for an appropriate rational choice of $\gamma\in(0,1]$, we can achieve $\gamma(\tfrac{1}{2k}+4)=4$ and thus \cref{exponential-solution:intermediate-goal}.
\end{proof}

Now that \cref{exponential-solution:intermediate-steps} is established, we can see that for each $i$, applying $w_i^{2^in}$ can be used to put $4$ into $\step_i$ while (i)~halving the values in all precision counters $x_i$ and $x_{i+1},\ldots,x_h$ and $y_{i+1},\ldots,y_h$ and (ii)~not changing $y_i$. With $r_i$, we can then reset the complement counters to zero. Finally, using $u$, we can reset all high-precision counters.

% Write number of pages of main part into file main.pagestotal.ctr
\label{afterbibliography}
\newoutputstream{pagestotal}
\openoutputfile{main.pagestotal.ctr}{pagestotal}
\addtostream{pagestotal}{\getpagerefnumber{afterbibliography}}
\closeoutputstream{pagestotal}

% Write number of todos

\newoutputstream{todos}
\openoutputfile{main.todos.ctr}{todos}
\addtostream{todos}{\arabic{@todonotes@numberoftodonotes}}
\closeoutputstream{todos}
\end{document}